\let\mathbb\relax
\let\mathbb\mathds
\renewcommand{\paragraph}{%
  \@startsection{paragraph}{4}%
  {\z@}{2.25ex \@plus 1ex \@minus .2ex}{-1em}%
  {\normalfont\normalsize\bfseries}%
}
\definecolor{linkblue}{HTML}{001487}
\crefname{enumi}{Step}{Steps}
\newtheorem{theorem}{Theorem}[section]
\newtheorem*{theorem*}{Theorem}
\newtheorem{lemma}[theorem]{Lemma}
\newtheorem{corollary}[theorem]{Corollary}
\theoremstyle{remark}
\newtheorem{remark}[theorem]{Remark}
\theoremstyle{definition}
\newtheorem{definition}[theorem]{Definition}
\newtheorem{example}[theorem]{Example}
\numberwithin{equation}{section}
\newcommand\numberthis{\addtocounter{equation}{1}\tag{\theequation}}
\newcommand{\setft}[1]{\textnormal{#1}}
\newcommand{\eps}{\epsilon}
\newcommand{\1}{\mathbb{1}}
\newcommand{\C}{\ensuremath{\mathds{C}}}
\newcommand{\N}{\ensuremath{\mathds{N}}}
\newcommand{\R}{\ensuremath{\mathds{R}}}
\newcommand{\bits}{\ensuremath{\{0, 1\}}}
\newcommand{\ot}{\ensuremath{\otimes}}
\newcommand{\deq}{\coloneqq}
\newcommand{\tr}[1]{\mbox{\rm Tr}\!\left[ #1 \right]}
\newcommand{\pr}[1]{{\rm Pr}\!\left[ #1 \right]}
\newcommand{\prs}[2]{{\rm Pr}_{#1}\!\left[ #2 \right]}
\newcommand{\norm}[1]{\left\lVert#1\right\rVert}
\DeclareMathOperator{\pos}{Pos}
\newcommand{\sth}{{\setft{~s.t.~}}}
\newcommand{\tand}{\;\textnormal{~and~}\;}
\newcommand{\ket}[1]{|#1\rangle}
\newcommand{\bra}[1]{\langle#1|}
\newcommand{\proj}[1]{\ket{#1}\!\bra{#1}}
\DeclarePairedDelimiterX\braket[2]{\langle}{\rangle}{#1 \delimsize\vert #2}
\newcommand{\cD}{\ensuremath{\mathcal{D}}}
\newcommand{\cH}{\ensuremath{\mathcal{H}}}
\newcommand{\cI}{\ensuremath{\mathcal{I}}}
\newcommand{\cN}{\ensuremath{\mathcal{N}}}
\newcommand{\tln}{\textsc{tln}}
\let\i\relax
\newcommand{\i}{\iota}
\newcommand{\pmset}{\{\pm 1\}}
\begin{document}


\date{} 

\title{Concentration bounds for quantum states and limitations on the QAOA from polynomial approximations}

\author{Anurag Anshu}
\affiliation{School of Engineering and Applied Sciences, Harvard University}
\email{anuraganshu@fas.harvard.edu}

\author{Tony Metger}
\affiliation{Institute for Theoretical Physics, ETH Zurich}
\email{tmetger@ethz.ch}

\thanks{Note: a short version of this work has appeared in the \href{https://doi.org/10.4230/LIPIcs.ITCS.2023.5}{Proceedings of the 14th Innovations in Theoretical Computer Science Conference (ITCS 2023).}}
\maketitle

\vspace{-0.7cm}

\begin{abstract}
We prove concentration bounds for the following classes of quantum states: 
(i) output states of shallow quantum circuits, answering an open question from \cite{DMRF22}; 
(ii) injective matrix product states; 
(iii) output states of dense Hamiltonian evolution, i.e.~states of the form $e^{\iota H^{(p)}} \cdots e^{\iota H^{(1)}} \ket{\psi_0}$ for any $n$-qubit product state $\ket{\psi_0}$, where each $H^{(i)}$ can be any local commuting Hamiltonian satisfying a norm constraint, including dense Hamiltonians with interactions between any qubits.
Our proofs use polynomial approximations to show that these states are close to local operators.
This implies that the distribution of the Hamming weight of a computational basis measurement (and of other related observables) concentrates.

An example of (iii) are the states produced by the quantum approximate optimisation algorithm (QAOA). 
Using our concentration results for these states, we show that for a random spin model, the QAOA can only succeed with negligible probability even at super-constant level $p = o(\log \log n)$, assuming a strengthened version of the so-called overlap gap property. 
This gives the first limitations on the QAOA on dense instances at super-constant level, improving upon the recent result~\cite{BGMZ22}.
\end{abstract}
\vspace{0.3cm}

\section{Introduction}

Concentration bounds deal with the deviation of random variables from their expectation.
Such bounds describe important structural properties of probability distributions that carry low correlation, and as a result have become ubiquitous tools in mathematics, computer science, and physics.
A series of recent works has extended concentration bounds to the quantum many-body setting, where weakly correlated quantum states hold physical and computational relevance and their concentration properties explain important physical effects.
For example, the concentration of expectation values in various quantum states such as product states~\cite{GV89, HGH04, Ab20}, Gibbs quantum states~\cite{DR22, KS20}, and finitely correlated states~\cite{A16} explains the equivalence of ensembles~\cite{BC15, BCG15, T18, Al22} and eigenstate thermalisation~\cite{KS20b} in quantum statistical mechanics. 
Concentration bounds are also an important proof technique in quantum complexity theory: for example, concentration properties of quantum states generated by low-depth quantum circuits have been used to prove circuit lower bounds on low-energy states of quantum Hamiltonians~\cite{EldarH17}, leading to the recent proof of the NLTS conjecture~\cite{ANirkhe22, AB22, ABN22}. 
Furthermore, concentration results play an important role in analysing and bounding the performance of variational quantum algorithms for classical constraint optimisation problems, e.g.~the quantum approximate optimisation algorithm (QAOA)~\cite{farhi2014quantum,farhi2020quantum,CLSS22, BGMZ22}.

The standard method to prove concentration inequalities is the moment method. 
The slow growth of the moments of weakly correlated probability distributions or quantum states serves as a signature of concentration.
Arguably the simplest example is the classical Chernoff-Hoeffding bound, which shows that the probability that the sum of $n$ independent random variables deviates from its expectation value by more than $k$ is at most $e^{-\Omega(k^2/n)}$.
We call this \emph{Gaussian concentration}.
The moment method also extends to non-commuting observables with product quantum states~\cite{K16} or quantum states with exponential decay of correlation~\cite{ A16}, albeit sometimes with weaker bounds of the form $e^{-\Omega(k^\alpha/n^\beta)}$ for some $\alpha, \beta > 0$.
We call these weaker bounds \emph{exponential concentration}.\footnote{Note that with this definition, technically Gaussian concentration is a special case of exponential concentration for $\alpha = 2$ and $\beta = 1$. However, when we say ``exponential concentration'' it is usually implicit that $\alpha$ and $\beta$ are such that the bounds are generally weaker than for Gaussian concentration. A typical case is $\alpha = 1$ and $\beta = 1/2$.}

An alternative to the moment method was introduced in~\cite{KAAV15}.
The idea of this approach is to approximate the quantum state of interest by a local operator and use this approximation to prove concentration bounds.
Because the local approximation is usually constructed from low-degree polynomials, we call this the \emph{polynomial-based method}.
The strength of the resulting concentration bounds depends on the locality of the approximation; see \cref{sec:concfromloc} for details.
\cite{KAAV15} used this method to show exponential concentration of the form $e^{-\Omega(k/\sqrt{n})}$ for local classical observables on the ground states of gapped local Hamiltonians. 
However, their method was not able to produce Gaussian concentration even for the simplest case of local observables on an i.i.d.~distribution (e.g.~the sum of i.i.d.~random variables), a case in which the Chernoff-Hoeffding bound does give Gaussian concentration results.

\subsection{Main results}
We extend the polynomial-based method in two ways: we show that in cases where the moment method produces Gaussian concentration, the polynomial-based method can do so, too; and we show that the polynomial-based method can be applied to a much wider class of states, including the output states of shallow quantum circuits, injective matrix product states, and the output states of dense Hamiltonian evolutions (explained below).
An example of a dense Hamiltonian evolution is the QAOA for solving classical constraint optimisation problems (COPs).
We therefore obtain concentration bounds for the QAOA, and combining this with the so-called \emph{overlap gap property} first introduced in the classical literature~\cite{gamarnik2018finding,gamarnik2021overlap}, we prove strong limitations on the performance of the QAOA even at (admittedly only slightly) super-constant level $p = o(\log\log n)$.
Crucially, our method works for \emph{dense} COPs, which may have constraints between any variables, and our proofs are fairly straightforward.
This improves upon the recent work~\cite{BGMZ22}, which proved similar results for constant-level QAOA on dense instances using a highly technical proof.

We now describe our main results in more detail. In all cases, concentration bounds are obtained by approximating the quantum state of interest by a local operator, so in sketching the proof ideas, we only focus on the construction of such a local approximation.
Below, we only give concentration bounds for the Hamming weight distribution $W_{\rho}$ of an $n$-qubit quantum state $\rho$, i.e.~the probability distribution over $\{1, \dots, n\}$ describing the Hamming weight of a computational basis measurement of $\rho$. More formally, $\pr{W_\rho = i} = \sum_{x \in \bits^n : |x| = i} \bra{x} \rho \ket{x}$.
However, it is straightforward to extend these bounds to other observables that only change slowly as the Hamming weight changes; for an example, see~\cref{lem:qaoa_energy_conc}.

\begin{table}[ht!]
\centering
{\renewcommand{\arraystretch}{1.6}
 \begin{tabular}{|m{5cm} | m{5cm} | m{5cm}|} 
\hline
 & Prior work & This work\\ 
\hline \hline
Depth $t$ quantum states & $e^{-\Omega(k/\sqrt{2^{2t} n})}$ (implicit in \cite{KAAV15}) & $e^{-\Omega(k^2/2^{2t} n)}$\\
\hline
Injective matrix product states & $e^{-\Omega(k/\sqrt{n})}$ (\cite{KAAV15, A16}) & $e^{-\Omega(k^2/ n)}$\\
\hline
Dense Hamiltonian evolution  & $o_n(1)$ for the special case of QAOA with level $p = O(1)$~\cite{BGMZ22} & $e^{-\Omega(n^{1/8})}$ for $k=o(n)$ and level $p=o(\log\log n)$\\ 
\hline
 \end{tabular}}
\caption{Main results and comparison with prior work. 
For each case, we consider a state on $n$ qubits and bound the probability that the Hamming weight of a computational basis measurements deviates by more than $k \in [0,n]$ from its median (or mean).
Note that~\cite{BGMZ22} were able to show concentration over both the choice of instance and the randomness of the QAOA output, whereas our bounds, while stronger, only deal with concentration over the latter.}
\label{tab:probtoquant}
\end{table}

\paragraph{Shallow quantum circuits.}
Consider a depth-$t$ quantum circuit, i.e.~a circuit comprised of $t$ layers of arbitrary 2-qubit gates applied to the initial state $\ket{0}^{\ot n}$.
We denote the unitary implemented by this circuit by $U$.
The output state of this circuit is the unique maximum-energy eigenstate of a $2^t$-local Hamiltonian that is a sum of commuting projectors.
To see this, observe that $U \ket{0}^{\ot n}$ is the unique joint $(+1)$-eigenstate of the operators $H_i = U \proj{0}_i U^\dagger$, where $\proj{0}_i$ acts as identity on all qubits except $i$.
By a standard lightcone argument, $H_i$ only acts non-trivially on $2^t$ qubits, so $H = \frac{1}{n}\sum H_i$ is a $2^t$-local Hamiltonian with $U \ket{0}^{\ot n}$ as its unique $(+1)$-eigenstate.
Therefore, the output of the circuit can be written as $U\proj{0}^{\ot n}U^\dagger = \delta_1(H)$, where $\delta_1(1) = 1$ and $\delta_1(x) = 0$ for $x \neq 1$.\
We can now approximate $\delta_1(x)$ using a degree-$d$ polynomial $P_d$ constructed in~\cite{KahnLS96, BuhrmanCWZ99, AAG21}.
Polynomials spread the locality of operators in a controllable way.
We can therefore show that $P_d(H)$ is a $(d \cdot 2^t)$-local operator that approximates $U\proj{0}^{\ot n} U^\dagger$. 
Hence, we have constructed a local operator approximation to $U \proj{0}^{\ot n} U^\dagger$ and can use this approximation to show that for any depth-$t$ circuit, the output state $\proj{\psi} = U \proj{0}^{\ot n} U^\dagger$ has the following concentration property for $k \in (2^t\sqrt{n}, 2^tn)$ (see \cref{lem:lowdepth} for the formal statement):
\begin{align*}
\pr{|W_\psi - \textnormal{median}(W_\psi)| \geq k} \leq e^{- \Omega\left( \frac{k^2}{2^{2t} n} \right)} \,.
\end{align*}
This generalises the Chernoff-Hoeffding bound for product distributions (which corresponds to the case of a single layer ($t = 1$), since then the measurement distribution of the output state is a product of distributions on one or two bits) and shows Gaussian concentration for any constant-depth quantum circuit, answering an open question from~\cite{DMRF22}. Similar statements have also appeared in~\cite{ANirkhe22, AB22, ABN22}. 
Note that the same argument applies to any ground state of a Hamiltonian that is a sum of commuting projectors, not just output states of shallow circuits.

\paragraph{Injective matrix product states.}
Matrix product states (MPSs) are a widely used tensor network representation of quantum states.
Injective MPSs have an additional property that ensures that they are the unique ground state of a local ``parent Hamiltonian'' with a constant spectral gap.
We can therefore approximate an injective MPS as a polynomial of its parent Hamiltonian.
Using near-optimal polynomial approximations constructed in~\cite{AAG21}, we obtain Gaussian concentration bounds for injective MPSs (\cref{lem:mps_bound}). 
Our bounds are stronger than previous ones~\cite{A16, KAAV15}, which only showed exponential concentration. 
We also note that conditionally independent probability distributions can be encoded into injective MPSs, and that in that case our concentration bounds reproduce a (version of) Azuma's inequality.

\paragraph{Dense Hamiltonian evolution.}
Concentration bounds are natural for quantum states that have weak long-range correlations such as the output states of shallow quantum circuits.
A priori, one would not expect similar bounds to hold for quantum states with long-range correlations. 
Recently, \cite{BGMZ22} considered the output distribution of the QAOA (explained below) on random dense COPs, i.e.~local COPs that can have constraints between any variables.
For dense COPs, the operations implemented by the QAOA can include interactions between any qubits, and as a result the output distribution can have long-range correlations.
Remarkably,~\cite{BGMZ22} showed that the variance of the average energy density (averaged over the randomness of the QAOA as well as the choice of random instance) vanishes asymptotically.
This means that the energy density of the output, which corresponds to the quality of the COP solution produced by the QAOA, concentrates about the average. 
However,~\cite{BGMZ22} were only able to prove an asymptotic statement without explicit tail bounds and the proof was highly non-trivial.

We consider a more general class of states that includes the output states of the QAOA as a special case.
Specifically, we define the output of a \emph{dense Hamiltonian evolution} as a state of the form $e^{\iota H^{(p)}} \cdots e^{\iota H^{(1)}} \ket{\psi_0}$ for any $n$-qubit product state $\ket{\psi_0}$.
Here, each $H^{(i)}$ can be any commuting local Hamiltonian (though the different $H^{(i)}$ themselves are of course not required to commute).
Importantly, $H^{(i)}$ are allowed to be \emph{dense} Hamiltonians, i.e.~Hamiltonians with interactions between any qubits.
As explained below, the QAOA applied to a dense COP is a special case of dense Hamiltonian evolution.

For our concentration bounds, we further require each $H^{(i)}$ to satisfy a norm constraint, which limits the norm of the Hamiltonian restricted to a subset of the qubits (see \cref{eqn:subset_condition} for the formal statement).
In particular, this condition is satisfied with overwhelming probability for the random dense model from~\cite{BGMZ22}.
Under this condition, we can prove (\cref{thm:qaoamain}) that the output state of a dense Hamiltonian evolution is $\eps$-close in operator norm to a $k_p$-local operator for 
\begin{align*}
k_p \leq c_1^p n^{1 - (1-\alpha)^p/4}\,,\quad
\eps \leq e^{-\Omega(n^{1/8})} \,.
\end{align*}
Here, $c_1$ and $0 \leq \alpha < 1$ are constants and $p$ is \emph{level} of the dense evolution, i.e.~the number of unitaries $e^{\iota H^{(i)}}$ that have been applied.
In particular, if we choose $p = o(\log \log n)$, then $k = o(n)$.
This implies the following exponential concentration result (\cref{lem:qaoa_weight_conc}): for $\rho_p$ the output of a dense Hamiltonian evolution with level $p = o(\log\log n)$ satisfying the above conditions,
\begin{align*}
\pr{|W_{\rho_p}-\textnormal{median}(W_{\rho_p})| > o(n)} \leq e^{-\Omega(n^{1/8})}\,.
\end{align*}
Here, we have only stated the asymptotic result, but in \cref{lem:qaoa_weight_conc} we give explicit bounds for any choice of $p$.
This concentration result can also be extended beyond just the Hamming weight of a computational basis measurement: for example, it also holds for the energy density of $\rho_p$ with respect to any classical Hamiltonian satisfying a similar norm constraint to the one mentioned above (\cref{lem:qaoa_energy_conc}).

To prove that the output of a dense Hamiltonian evolution can  be approximated by a local operator, we again make use of polynomial approximations.
Recall that we are interested in states of the form $\rho_p = e^{\i H^{(p)}} \cdots e^{\i H^{(1)}} \proj{\psi_0} e^{-\i H^{(1)}} \cdots e^{-\i H^{(p)}}$ for a pure product state $\ket{\psi_0}$.
As a first step, we approximate $\proj{\psi_0}$ by a local operator (\cref{lem:starting_state_approx}).
For this, we observe that since $\ket{\psi_0} = \otimes_i \ket{\psi_0}_i$ is a product state, it is the unique ground state of the $1$-local Hamiltonian $H = \frac{1}{n} \sum \proj{\psi_0}_i$.
If we apply a linear combination of Chebyshev polynomials to this Hamiltonian, we obtain a good local approximation to $\proj{\psi_0}$.
Then, for each unitary $e^{\i H^{(i)}}$ in the dense Hamiltonian evolution, we approximate the exponential function by its truncated Taylor series.
The fact that the Hamiltonian evolution is applied to an approximately local operator allows us to use the norm constraint mentioned above to obtain an improved error bound for the truncated Taylor series (see \cref{lem:comm_locality_spread} for details). 
Therefore, the truncated Taylor series spreads the locality of the state in a controllable way without degrading the quality of the approximation too much.
Applying this argument recursively for each layer of the dense Hamiltonian evolution, we obtain a local approximation to the output state $\rho_p$.

\paragraph{Limitations on the QAOA from concentration bounds.}
The QAOA~\cite{farhi2014quantum} is an algorithm for solving local COPs (i.e.~COPs consisting of any number of clauses, each with at most $q=O(1)$ variables) on a quantum computer.
We can associate a $q$-local Hamiltonian $H$ with every $q$-local COP $C$ by replacing the variables in $C$ with Pauli-Z matrices acting on different qubits.
The resulting Hamiltonian is diagonal in the computational basis and has the property that for any string $x \in \bits^n$, $C(x) = \bra{x} H \ket{x}$.
The QAOA attempts to find a ``good'' solution $x$ (i.e.~one for which $C(x)$ is as large as possible) by starting from the state $\ket{+}^{\ot n}$ and then applying $p$ layers of unitaries of the form $e^{\i \beta_i \sigma_X^{\ot n}} e^{\i \gamma_i H}$. Here, $\beta_i$ and $\gamma_i$ are real parameters that can be tuned to the problem instance. 
It is clear that this is a special case of the dense Hamiltonian evolution we have described earlier.
Our results will apply for any choice of $\beta_i$ and $\gamma_i$ and we will always implicitly consider a family of COPs, one for each number $n$ of input bits, in order to make asymptotic statements.

\cite{BGMZ22} considered the performance of the QAOA on a random spin model on $n$ qubits, described by the $q$-local Hamiltonian 
\begin{align}
H_n^q(J) = \frac{1}{n^{(q-1)/2}} \sum_{i_1,\ldots i_q=1}^n J_{i_1,\ldots i_q} \sigma^Z_{i_1}\ldots \sigma^Z_{i_q} \,, \label{eqn_intro_spin_model}
\end{align}
where $J_{i_1,\ldots i_q} \sim \cN(0,1)$ are sampled from i.i.d.~standard Gaussians.
For this model, \cite{BGMZ22} were able to show that for constant even $q \geq 4$ and level $p = O(1)$, the value achieved by the QAOA (for fixed $\beta_i, \gamma_i$) in expectation over $J$ and the internal randomness of the QAOA is bounded away from the optimal value by a constant as $n \to \infty$.
They were also able to show the asymptotic concentration property described above.

Here, we use our concentration results to show limitations on the QAOA for a class of COPs that includes the random spin model above.
For this, we consider local COPs that have the so-called overlap gap property (OGP)~\cite{gamarnik2021survey}, which roughly says that ``good'' solutions to the COP are clustered in the sense that two good solutions are either close or far in Hamming distance.
Combining this with our concentration results for dense Hamiltonian evolution, we can show that for any COP with a sufficiently strong OGP whose associated Hamiltonian satisfies \cref{eqn:subset_condition}, if the QAOA produced a good solution with noticeable probability, then the probability distribution over good solutions produced by the QAOA would have to be concentrated on one such cluster.
This allows us to show that the QAOA cannot succeed with noticeable probability on \emph{symmetric} COPs (i.e.~COPs that are invariant under flipping all the input bits) that have a strong OGP.
This is because the symmetry of the COP is in contradiction with the existence of a single cluster on which most of the probability distribution is concentrated: if such a cluster existed, we could take the strings in that cluster and flip all their bits to produce another cluster which, by symmetry, must have the same probability weight, a contradiction.
This argument is similar to~\cite{BravyiKKT19}.
As a result, we obtain the following limitation on the QAOA (see~\cref{lem:symm_qaoa_limit}).

\begin{theorem*}[informal]
Consider a local symmetric COP $C(x)$ with a sufficiently strong OGP and suppose that the associated Hamiltonian $H$ satisfies the norm constraint in \cref{eqn:subset_condition}.
Then, the value of the solution to $C(x)$ produced by the QAOA with level $p = o(\log\log n)$ is bounded away from the optimal value by at least a constant except with probability $e^{-\Omega(n^{1/8})}$.
\end{theorem*}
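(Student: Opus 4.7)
The plan is to combine three ingredients: the Hamming weight concentration for QAOA output states established in \cref{lem:qaoa_weight_conc}, the bit-flip symmetry inherited from the COP, and the clustering structure of near-optimal solutions imposed by the strong OGP. The argument proceeds by contradiction: assume that at level $p = o(\log\log n)$, the QAOA produces a near-optimal solution with probability exceeding $e^{-\Omega(n^{1/8})}$, and show this is incompatible with the OGP. The first step is the symmetry observation. Since $C(x) = C(\bar x)$ for every $x \in \bits^n$, the associated diagonal Hamiltonian $H$ commutes with $\sigma_X^{\ot n}$. The initial state $\ket{+}^{\ot n}$ is a $+1$ eigenvector of $\sigma_X^{\ot n}$, and each QAOA layer $e^{\i \beta_i \sigma_X^{\ot n}} e^{\i \gamma_i H}$ commutes with $\sigma_X^{\ot n}$. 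Hence $\ket{\psi_p}$ is $\sigma_X^{\ot n}$-invariant, so the output distribution $q$ satisfies $q(x) = q(\bar x)$, forcing the median Hamming weight of $q$ to equal $n/2$ and giving $q(T) = q(\bar T)$ for every $T \subseteq \bits^n$.

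Next I combine concentration with the cluster structure. By \cref{lem:qaoa_weight_conc}, for $p = o(\log\log n)$ all but an $e^{-\Omega(n^{1/8})}$ fraction of $q$'s mass sits in the Hamming weight window $W = [n/2 - o(n), n/2 + o(n)]$. Separately, the strong OGP partitions the set $G$ of near-optimal solutions into clusters of Hamming diameter at most $\nu_1 n$, with distinct clusters separated by Hamming distance at least $\nu_2 n$, for constants $0 < \nu_1 < \nu_2 < 1$. Because $G$ is closed under bit-flip, every cluster $S$ has a mirror $\bar S = \{\bar x : x \in S\}$ that is also a cluster; a self-conjugate cluster would contain some $x$ and $\bar x$ with $d_H(x, \bar x) = n > \nu_1 n$, violating the diameter bound, so $S \neq \bar S$ always. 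By the symmetry of $q$, $q(S) = q(\bar S)$.

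Putting things together, suppose the QAOA produces a near-optimal solution with probability $\delta \gg e^{-\Omega(n^{1/8})}$. Combined with the concentration step, the mass on good solutions whose Hamming weight lies in $W$ is at least $\delta - e^{-\Omega(n^{1/8})}$. These solutions lie in a union of clusters, and by the cluster pairing the mass is split across mirror pairs $(S, \bar S)$ with equal weight and $d_H(S,\bar S) \geq \nu_2 n \gg o(n)$; note that $W$ is invariant under $w \leftrightarrow n-w$, so $S$ intersects $W$ iff $\bar S$ does. A suitably strong version of the OGP, which is formulated precisely to preclude a near-optimal probability mass from being distributed across two clusters at separation $\geq \nu_2 n$ while the marginal Hamming weight concentrates in a window of width $o(n)$, yields the desired contradiction. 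The failure probability $e^{-\Omega(n^{1/8})}$ in the conclusion is then inherited directly from \cref{lem:qaoa_weight_conc}. The main obstacle I anticipate is articulating the precise form of ``sufficiently strong OGP'' that makes the final step go through: it must be restrictive enough to forbid mass on paired mirror clusters, yet still be satisfied by natural dense instances such as the random spin model of \cref{eqn_intro_spin_model}, and one has to verify that the concentration scale $o(n)$ from \cref{lem:qaoa_weight_conc} is indeed strictly smaller than $\nu_2 n$ throughout the regime $p = o(\log\log n)$.
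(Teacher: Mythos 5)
Your plan has the same skeleton as the paper's proof (symmetry of the QAOA output, cluster structure from the OGP, and a concentration property of the output state forcing a contradiction), but the concentration tool you invoke is too weak to close the argument, and the gap is papered over by asking the ``strengthened OGP'' to do work that no purely combinatorial property of $C$ can do.

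Specifically, you rely on the Hamming-weight concentration \cref{lem:qaoa_weight_conc}, which only constrains the one-dimensional marginal of $q$ onto $|x|$: it says most mass has weight within $o(n)$ of $n/2$. As you yourself observe, the window $W=[n/2-o(n),\,n/2+o(n)]$ is invariant under $w\mapsto n-w$, so a cluster $S$ and its mirror $\bar S$ both intersect $W$. Hamming-weight concentration therefore says nothing at all about how the mass inside $W$ splits between $S$ and $\bar S$; the configuration $q(S)=q(\bar S)=1/2$ with all supported strings having weight $\approx n/2$ is perfectly consistent with \cref{lem:qaoa_weight_conc}. To rule that out you appeal to ``a suitably strong version of the OGP, which is formulated precisely to preclude a near-optimal probability mass from being distributed across two clusters.'' But the OGP is a property of the cost function $C$ alone --- it constrains which pairs of strings can both be good, not how an algorithm's output distribution weights the good clusters. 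No strengthening of the OGP can forbid a distribution from splitting its mass evenly across two well-separated clusters.

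What actually closes the gap in the paper is the full-string-space clustering property \cref{lem:local_clustering}: because the QAOA output $\proj{\psi}$ is a $(k,\eps)$-local operator (from \cref{thm:qaoamain}, not merely from \cref{lem:qaoa_weight_conc}), any two sets $S,S'\subset\bits^n$ at Hamming distance $>k$ satisfy $\tr{\Pi_S\psi}\cdot\tr{\Pi_{S'}\psi}\le\eps$, so at most one can carry mass exceeding $\sqrt\eps$. Applied with $S$ a cluster and $S'$ its mirror (or the rest of $G_n$), this is what forces the unique high-weight cluster in \cref{lem:unique_cluster}, which symmetry then contradicts. In the paper the ``strengthened OGP'' plays only the modest role of ensuring $2\tilde\nu_1<\tilde\nu_2$ so that $G_n$ genuinely partitions into well-separated clusters; the ``no split mass'' conclusion comes entirely from $(k,\eps)$-locality. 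So you should replace the use of \cref{lem:qaoa_weight_conc} with \cref{thm:qaoamain} plus \cref{lem:local_clustering}, and narrow the role of the strengthened OGP to the combinatorial condition $2\tilde\nu_1<\tilde\nu_2$.
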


For even $q$, the random spin model from~\cref{eqn_intro_spin_model} is symmetric and satisfies the norm constraint in \cref{eqn:subset_condition} with overwhelming probability.
Furthermore, it was shown in~\cite{CGPR19} that it satisfies the OGP with overwhelming probability.
However, we note that here we need a stronger version of the OGP than was shown in~\cite{CGPR19}.
This stronger version appears to be implicit in their proof, too, although we leave its formal proof for future work and assume it here as a conjecture.
Assuming this stronger OGP, we can show the following (see~\cref{sec:symm_qaoa_limitations}).
\begin{theorem*}[informal]
With probability $1 - O(e^{-n})$ over the choice of $J$ (with i.i.d.~Gaussian entries), the value of the solution to the random spin model (\cref{eqn_intro_spin_model}) produced by the QAOA with level $p = o(\log\log n)$ is bounded away from the optimal value by at least a constant except with probability $e^{-\Omega(n^{1/8})}$.
\end{theorem*}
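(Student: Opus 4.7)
The plan is to reduce the statement to the preceding informal theorem on symmetric COPs (\cref{lem:symm_qaoa_limit}) by verifying each of its three hypotheses for a random instance $J$ and then applying a final union bound. Specifically, I will argue that with probability at least $1 - O(e^{-n})$ over $J$, the Hamiltonian $H_n^q(J)$ (i) induces a symmetric COP, (ii) satisfies the subset norm condition \cref{eqn:subset_condition}, and (iii) satisfies the strong OGP assumed by the conjecture. Once these are in hand, \cref{lem:symm_qaoa_limit} directly yields the inner tail bound $e^{-\Omega(n^{1/8})}$ over the QAOA's internal randomness, and the outer high-probability statement over $J$ is just the union bound over the three failure events.

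Hypothesis (i) is deterministic: for even $q$, each term $\sigma^Z_{i_1} \cdots \sigma^Z_{i_q}$ commutes with $\sigma_X^{\otimes n}$ (the $q$ sign flips cancel), so the associated COP satisfies $C(-x) = C(x)$ for every realisation of $J$. Hypothesis (iii) is the stated strong OGP conjecture; since the weaker version proved in \cite{CGPR19} already holds with probability $1 - e^{-\Omega(n)}$, it is natural to expect the stronger version to enjoy the same high-probability guarantee.

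The main technical content is hypothesis (ii). Because $H_n^q(J)$ is diagonal in the computational basis, the operator norm of its restriction to any subset $S \subseteq [n]$ equals the maximum over $x \in \pmset^{|S|}$ of the Gaussian linear form $n^{-(q-1)/2} \sum_{i_1, \ldots, i_q \in S} J_{i_1, \ldots, i_q}\, x_{i_1} \cdots x_{i_q}$. For fixed $S$ and $x$, this is a centred Gaussian of variance at most $|S|^q / n^{q-1}$, so a standard tail bound gives failure probability $e^{-\Omega(\tau^2 n^{q-1}/|S|^q)}$ at deviation level $\tau$. A union bound over the $2^{|S|}$ sign patterns, the $\binom{n}{|S|}$ choices of $S$ of each size, and over sizes contributes a combinatorial overhead of only $e^{O(n)}$, which is absorbed by taking $\tau$ of order $|S|^{q/2}/n^{(q-2)/2}$. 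This envelope matches the scaling prescribed by \cref{eqn:subset_condition} simultaneously for every $S$, and so (ii) holds uniformly with probability $1 - e^{-\Omega(n)}$.

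I expect the principal obstacle to lie entirely in step (ii): one must match the Gaussian concentration exponent against the double union bound \emph{uniformly} across all subset sizes $|S| \in [n]$, and then check that the resulting uniform envelope indeed fits inside the function of $|S|$ and $n$ stipulated by \cref{eqn:subset_condition} for the constants $c_1$ and $\alpha$ used in \cref{thm:qaoamain}. Steps (i) and (iii) are essentially free: symmetry is automatic from $q$ being even, and the strong OGP is taken as a hypothesis.
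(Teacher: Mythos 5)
Your overall plan is the same as the paper's: reduce to \cref{lem:symm_qaoa_limit} by checking its three hypotheses for a random~$J$, use the conjectured strengthened OGP for hypothesis~(iii), and union bound. Steps~(i) and~(iii) are exactly as in the paper. But step~(ii) has a genuine error.

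You compute the norm of the \emph{wrong} operator. The subset operator $H_S$ in \cref{def:subset_op} and \cref{eqn:subset_condition} is the sum over \emph{all} $q$-tuples $(i_1,\ldots,i_q)$ with $\{i_1,\ldots,i_q\}\cap S\neq\emptyset$ — terms that merely \emph{touch} $S$ — not the restriction $\sum_{i_1,\ldots,i_q\in S}$ to terms lying entirely inside~$S$. The distinction is forced by \cref{lem:comm_locality_spread}: when expanding $e^{-\i H}R_i e^{\i H}$ for a local term $R_i$ supported on $S_i$, one may discard exactly the terms of $H$ disjoint from $S_i$ (they commute with $R_i$), but all terms that partially overlap $S_i$ must be retained. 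Consequently $H_{S}$ has $\ell \leq q|S|n^{q-1}$ participating Gaussians, so for fixed $z\in\pmset^n$ the random variable $\bra{z}H_S\ket{z}$ has variance on the order of $|S|$ (not $|S|^q/n^{q-1}$); with a union bound over all $2^n$ sign vectors $z\in\pmset^n$ (not just $2^{|S|}$ — recall $H_S$ acts on \emph{all} $n$ qubits) and all $2^n$ subsets $S$, one obtains $\|H^q_{n,S}\|\lesssim\sqrt{n|S|}$, i.e.\ $\alpha=\tfrac12$, which is exactly \cref{lem:pure_spin}. Your envelope $|S|^{q/2}/n^{(q-2)/2}$ is much smaller than $\sqrt{n|S|}$ for $q\geq 4$, so it would happily fit inside \cref{eqn:subset_condition} — but it bounds the restricted Hamiltonian $H|_S$, not $H_S$, and thus does not verify the hypothesis that \cref{thm:qaoamain} actually needs. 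Redoing the computation with the correct subset operator yields $\alpha=\tfrac12$ and $\tilde C=\sqrt{6}$ as in \cref{lem:pure_spin}, after which your argument goes through.
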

This places strong limitations on the performance of the QAOA because it does not just bound the \emph{expectation} value away from the optimal value as in~\cite{BGMZ22}, but instead asserts that the QAOA output is bounded away from the optimal value with overwhelming probability, even at super-constant level $p = o(\log \log n)$.

\subsection{Discussion and open questions}
We have shown that polynomial approximations can be used to derive Gaussian concentration bounds for the output states  of constant-depth quantum circuits and injective matrix product states, and exponential concentration bounds for the output states of dense Hamiltonian evolution.
The latter can be used  to derive strong limitations on the performance of the QAOA at super-constant level $p = o(\log \log n)$ even on dense instances such as random spin models.

At first sight, it is surprising that the (provably optimal) polynomial approximations~\cite{KahnLS96, BuhrmanCWZ99} we use for shallow quantum circuits are able to reproduce the (likewise provably optimal) Chernoff-Hoeffding bound in the classical case.
It would be interesting to explore whether there is a deeper conceptual connection between optimal polynomial approximations and optimal concentration bounds.

On a more technical level, there are a number of interesting improvements one could hope to make to our bounds.
Firstly, our bounds for MPSs (\cref{lem:mps_bound}) can only deal with sub-linear deviations $k = O(n^{1 - \delta})$ for any $\delta > 0$. It would be desirable to extend this result to arbitrary values of $k$.
Additionally, one could hope to prove similar concentration bounds for PEPSs, the two-dimensional analogue of MPSs.

Secondly, we only achieve exponential, not Gaussian, concentration bounds for dense Hamiltonian evolutions with level $p=o(\log \log n)$.
Can one improve these results to Gaussian concentration and also extend them to higher levels, e.g.~$p = O(\log n)$ or even $p = O(n^{\delta})$ for a small $\delta > 0$?
Furthermore, we show concentration for the output states of dense Hamiltonian evolution for a fixed instance, but we cannot show that for random COPs, the output states also have concentration properties over the choice of random instance, e.g.~over the choice of $J_{i_1 \dots i_q} \sim \cN(0,1)$ in the case of the random spin model introduced earlier.
\cite{BGMZ22} do show such a concentration property, albeit only in the asymptotic regime without explicit bounds.
Can our polynomial approximation techniques also be used to prove explicit concentration bounds over the choice of random instance?
If so, it might be possible to extend the limitations on the performance of dense evolutions for COPs we prove in \cref{subsec:symopt} beyond symmetric COPs and optimisers.

Finally, our techniques may also be useful for problems in condensed matter physics.
As an example, consider the Lieb-Schultz-Mattis theorem \cite{LiebSM61} and its higher-dimensional generalisation \cite{Hastings04}, seminal results in condensed matter physics. 
Their main idea is that sufficient symmetry and non-degeneracy of the ground space prevents a Hamiltonian from being gapped. 
Inspired by our application to symmetric QAOA (\cref{sec:symm_qaoa_limitations}), we can ask whether an alternative proof of this result can be obtained using concentration bounds and polynomial approximations, e.g.~by showing that the concentration properties of unique gapped ground states are in conflict with the symmetry requirements.  

\paragraph{Acknowledgements.}
We thank Joao Basso, David Gamarnik, Song Mei, and Leo Zhou for very helpful discussions and especially for suggesting the application to symmetric QAOA (\cref{sec:symm_qaoa_limitations}). 
AA also thanks Daniel Stilck Fran{\c{c}}a, Tomotaka Kuwahara, Cambyse Rouz{\'e}, and Juspreet Singh Sandhu for helpful discussions. 
This work was done in part while the authors were visiting the Simons Institute for the Theory of Computing. 
AA acknowledges support through the NSF CAREER Award No. 2238836 and NSF award QCIS-FF: Quantum Computing \& Information Science Faculty Fellow at Harvard University (NSF 2013303).
TM acknowledges support from the ETH Zurich Quantum Center.

\section{Preliminaries}
\subsection{Notation}
For a bitstring $x \in \bits^n$, we denote by $(-1)^x$ the string $((-1)^{x_1}, \dots, (-1)^{x_n})$. For $n \in \N$, $[n] \deq \{1, \dots, n\}$.
We consider the $n$-qubit Hilbert space $\cH = (\C^2)^n$ throughout.
$L(\cH)$ is the set of linear operators on $\cH$, and $\cD(\cH) = \{A \in \pos(\cH)\,|\; \tr{A} = 1\}$ is the set of density matrices on $\cH$.
For a subset $S \in \bits^n$, we use $\Pi_S$ to denote the projector onto $S$, i.e.~$\Pi_S = \sum_{x \in S} \proj{x}$.
The single qubit Pauli operators are $\sigma_X = \left( \begin{smallmatrix} 0 & 1 \\ 1 & 0 \end{smallmatrix}  \right)$ and $\sigma_Z = \left( \begin{smallmatrix} 1 & 0 \\ 0 & -1 \end{smallmatrix}  \right)$.

For a vector $\ket{\psi} \in \cH$ with entries $\psi_i$ and $p \in [1, \infty]$, the $p$-norm is defined as $\norm{\ket{\psi}}_p = \left( \sum_{i} |\psi_i|^p \right)^{1/p}$.
For an operator $A \in L(\cH)$ and $p \in [1,\infty]$, the Schatten $p$-norm $\norm{A}_p$ is the (vector) $p$-norm of the vector of singular values of $A$.
For $p = \infty$, this norm is called the operator norm and can also be written as $\norm{A}_\infty = \sup_{\norm{\ket{\psi}}_2 = 1} \norm{A \ket{\psi}}_2$.
For convenience, we usually drop the subscript for the operator norm, i.e.~$\norm{A} \deq \norm{A}_\infty$.

\begin{definition}[Hamming weight distribution]
For a quantum state $\rho \in \cD(\cH)$ we denote by $W_\rho$ the random variable indicating the Hamming weight of the outcome of measuring $\rho$ in the computational basis, i.e. 
\begin{align*}
\pr{W_\rho = i} = \sum_{x \in \bits^n : |x| = i} \bra{x} \rho \ket{x} \,.
\end{align*}
\end{definition}

\subsection{Local operators}
\begin{definition}[Local operators]
Let $k \in \N$ and $\eps > 0$.
An operator $R \in L(\cH)$ is called $k$-local if it can be written as $R = \sum_{i} R_i$ for operators $R_i$ that only act non-trivially on $k$ subsystems.
Whenever we write $R = \sum_{i} R_i$ for a $k$-local operator $R$, this is understood to be such a local decomposition.
An operator $Q \in L(\cH)$ is called $(k, \eps)$-local if there exists a $k$-local operator $R$ such that $\norm{Q - R} \leq \eps$.
\end{definition}

We will frequently consider local operators with additional properties and extend the above definition in the obvious way: for example, a $(k, \eps)$-local state is a quantum state that is $\eps$-close in operator norm to a $k$-local operator $R$.
Note that the operator $R$ does not need to be a quantum state itself.

\begin{definition}[Total local norm] \label{def:tln}
A $k$-local operator $R$ has total local norm $\tln(R) \leq r$ if there exists a local decomposition $R = \sum R_i$ such that $\sum \norm{R_i} \leq r$.
Similarly, a $(k, \eps)$-local operator $Q$ has approximate total local norm $\tln_\eps(Q) \leq q$ if there exists a $k$-local operator $R$ within $\eps$-distance (in operator norm) from $Q$ with $\tln(R) \leq q$.
\end{definition}
Note that in the definition of $\tln(R)$ and $\tln_\eps(Q)$, it always has to be clear from the context which locality $k$ we are considering, since for different choices of $k$, different values of the total local norm can be achieved.
Therefore we will always make statements of the form: $R$ is a $k$-local operator with $\tln(R) = \dots$ (and likewise for the approximate case).
Hence, strictly speaking the subscript in $\tln_\eps(Q)$ is unnecessary as it must anyway be specified that $Q$ is a $(k, \eps)$-local operator for some values of $k$ and $\eps$, but we find it useful to include the subscript as a reminder of this nonetheless.

\begin{example} \label{ex:poly_to_local_op}
The locality of an operator increases in a controllable way when we apply a polynomial.
Specifically, consider a $k$-local operator $R = \sum R_i$ and a degree-$d$ polynomial $P$.
Then, $P(R)$ is a $(d \cdot k)$-local operator because we can expand it into a sum of terms, each of which contains a product of at most $d$ different local terms $R_i$.
\end{example}

\begin{example}
\label{ex:plusstate}
The state $\proj{+}^{\otimes n}$ is a non-local operator. 
However, it can be approximated by a local operator.
For this, consider the Hamiltonian 
$H_0 = \frac{1}{n} \sum_{i = 1}^n \proj{+}_i$, where $\proj{+}_i$ acts as identity everywhere except on the $i$-th qubit.
$H_0$ is a 1-local operator, with ground state $\proj{+}^{\ot n}$ and spectral gap $1/n$.
It is easy to see that $\|\proj{+}^{\otimes n} - H_0^k\|\leq (1-\frac{1}{n})^k \leq e^{-k/n}$, and by \cref{ex:poly_to_local_op} $H_0^k$ is a $k$-local operator.
We therefore have shown that $\proj{+}^{\ot n}$ is a $\left(k, (1-\frac{1}{n})^k\right)$-local operator.
Expanding $H_0^k=\frac{1}{n^k}\sum_{i_1,i_2,\ldots i_k}\proj{+}_{i_1}\cdots\proj{+}_{i_k}$ into $n^k$ terms each with operator norm at most 1, we also see that $\tln{(H_0^k)} \leq 1$ for any $k$, and consequently $\tln_\eps(\proj{+}^{\otimes n}) \leq 1$ for any $\eps$.  
\end{example}

\begin{definition}[Subset operator] \label{def:subset_op}
For any $k$-local operator $R = \sum R_i$ and a subset $S \subseteq [n]$, we define the subset operator $R_S$ as
\begin{align*}
R_S = \sum_{\substack{i \sth R_i \textnormal{ acts non-trivially}\\\textnormal{on at least one qubit in $S$}}} R_i \,.
\end{align*}
\end{definition}

The decomposition of a $k$-local operator $R = \sum_i R_i$ we have considered so far is not unique.
It will occasionally be useful to define a canonical such decomposition.
This can easily be done as follows: noting that the $n$-qubit Pauli matrices (with identity) form a basis of $L(\cH)$, for any operator $R \in L(\cH)$ there is a unique decomposition in terms of the $n$-qubit Pauli matrices.
Furthermore, if $R$ is $k$-local, only basis elements that act non-trivially on at most $k$ qubits will appear in this decomposition.
Starting from this unique decomposition, we can group basis elements that act non-trivially on the same set of qubits.
This way, we obtain a unique decomposition of a $k$-local operator $R$ as
\begin{align}
R = \sum_{T \subseteq [n] \sth |T| \leq k} O_{(T)} \,, \label{eqn:unique_decomp}
\end{align}
 where $O_{(T)}$ is a (weighted) sum of all Pauli operators that act non-trivially exactly on qubits in the set $T$.
The following lemma bounds the operator norm of an individual term in this decomposition.
\begin{lemma}
\label{lem:opbound}
Given an operator $R$, consider the unique decomposition $R=\sum_T O_{(T)}$ from \cref{eqn:unique_decomp}.
Then, for all $T \subseteq [n]$:
\begin{align*}
\|O_{(T)}\| \leq 2^{|T|}\|R\| \,.
\end{align*}
\end{lemma}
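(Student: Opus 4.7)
The plan is to extract $O_{(T)}$ from $R$ by a combination of Pauli twirling operations and then bound the operator norm using the triangle inequality and the fact that Pauli conjugation is norm-preserving.

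For each qubit $i \in [n]$, define the linear map $E_i$ on $L(\cH)$ by
\begin{align*}
E_i(A) = \frac{1}{4} \sum_{P \in \{I, \sigma_X, \sigma_Y, \sigma_Z\}} P_i A P_i \,.
\end{align*}
A direct check in the Pauli basis shows that $E_i$ is a projector: it fixes every Pauli string that acts as identity on qubit $i$ and annihilates every Pauli string that acts non-trivially on qubit $i$. Moreover, $E_i$ is a contraction in operator norm, since it is a convex combination of conjugations by unitaries, so $\|E_i(A)\| \leq \|A\|$. Consequently $\mathrm{id} - E_i$ is also a projector (onto the span of Pauli strings non-trivial on qubit $i$), and the different $E_i$ commute because Paulis on distinct qubits commute.

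Next I would observe that $O_{(T)}$, by its definition, is exactly the component of $R$ supported on Pauli strings that are non-trivial on every qubit in $T$ and trivial on every qubit outside $T$. Since the maps $E_i$ commute and project onto complementary subspaces for different qubits, this component is extracted by the composition
\begin{align*}
O_{(T)} = \Biggl(\prod_{i \in T}(\mathrm{id} - E_i)\Biggr)\Biggl(\prod_{j \notin T} E_j\Biggr)(R) \,.
\end{align*}

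Finally, I would expand the product $\prod_{i \in T}(\mathrm{id} - E_i) = \sum_{S \subseteq T}(-1)^{|S|}\prod_{i \in S} E_i$, which yields
\begin{align*}
O_{(T)} = \sum_{S \subseteq T} (-1)^{|S|} \Biggl(\prod_{i \in S \cup ([n] \setminus T)} E_i\Biggr)(R) \,.
\end{align*}
Each summand is obtained by applying a composition of contractions $E_i$ to $R$, so each has operator norm at most $\|R\|$. The sum contains $2^{|T|}$ terms, so the triangle inequality gives $\|O_{(T)}\| \leq 2^{|T|}\|R\|$, as required.

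This proof is essentially routine; the only mildly subtle step is verifying that the twirl $E_i$ projects onto the identity-on-$i$ subspace of $L(\cH)$, which follows from the orthogonality of the Pauli basis under the Hilbert-Schmidt inner product together with the multiplication rules of single-qubit Paulis. I do not foresee a genuine obstacle.
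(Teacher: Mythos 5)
Your proof is correct and takes essentially the same approach as the paper: both extract $O_{(T)}$ via Pauli twirling and inclusion–exclusion, and both bound the $2^{|T|}$ resulting terms by $\|R\|$ using that the twirl is a norm contraction. The only cosmetic difference is that you compose commuting single-qubit twirl projectors $E_i$ and expand $\prod_{i\in T}(\mathrm{id}-E_i)$, whereas the paper defines the global twirl $O'_{(S)}$ over Paulis supported on $[n]\setminus S$ and invokes Möbius inversion explicitly — the two are the same identity in different notation.
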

\begin{proof}
Defining $O'_{(S)}\deq\sum_{T\subseteq S}O_{(T)}$, we can write $O'_{(S)}$ as a Pauli twirl applied to the operator $R$:
$$O'_{(S)} = \mathbb{E}_i \sigma^i_{[n]\setminus S}(R)\sigma^i_{[n]\setminus S}\,,$$
where $\{\sigma^i_{[n]\setminus S}\}_{i=1}^{4^{n-|S|}}$ is the set of all multi-qubit Pauli operators that act as identity on $S$. By the inclusion-exclusion principle (see e.g.~\cite[Thm. 12.1]{graham1995handbook}),
$$O_{(T)}=\sum_{U\subseteq T} (-1)^{|T\setminus U|}O'_{(U)}\,.$$
Thus,
$$\|O_{(T)}\| \leq 2^{|T|}\max_{U\subseteq T}\|O'_U\|\leq 2^{|T|}\max_{U\subseteq T}\|\mathbb{E}_i \sigma^i_{[n]\setminus U}(R)\sigma^i_{[n]\setminus U}\|= 2^{|T|} \|R\|\,,$$
where the last equality holds because the Pauli twirl is a unital channel.
\end{proof}

\subsection{Polynomial approximations}
\begin{definition}[Chebyshev polynomials] \label{def:cheby}
The Chebyshev polynomials are defined as
\begin{align*}
T_k(x)=\frac{k}{2}\sum_{r=0}^{\lfloor \frac{k}{2}\rfloor}\frac{(-1)^r}{k-r}{k-r\choose r}(2x)^{k-2r}
\deq \sum_{r=0}^{\lfloor \frac{k}{2}\rfloor} \alpha_r x^{k-2r} \,.
\end{align*}
\end{definition}

We have already seen the utility of the powering function $x \mapsto x^s$ in creating local operator approximations in \cref{ex:plusstate}.
However, the powering function does not result in approximations with an optimal tradeoff between approximation error and locality.
For this purpose, we require an approximation to the function $x \mapsto x^s$ as a linear combination of Chebyshev polynomials.

\begin{lemma} \label{lem:approx_power}
Let $T_k$ be the degree-$k$ Chebyshev polynomial and $p_{s,k}$ the probability that an $s$-step symmetric random walk on integers (starting from $0$) is at $k$ or $-k$.
For any $a \leq s$, we define the degree-$a$ polynomial 
\begin{align*}
P_{s,a}(x) = \sum_{k=0}^{a} p_{s,k} T_k(x) \,.
\end{align*}
Then, $P_{s,a}$ provides a good approximation to $x^s$ in the sense that for any operator $O$ with $\norm{O} \leq 1$, 
\begin{align*}
\norm{O^s - P_{s,a}(O)} \leq 2e^{-\frac{(a+1)^2}{2s}} \,.
\end{align*}
\end{lemma}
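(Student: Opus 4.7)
The plan is to recognize $P_{s,a}$ as the truncation of the exact Chebyshev expansion of $x^s$ that arises from the classical $\cos^s\theta$ identity, and then bound the tail of the truncation via a Hoeffding-type bound on the random walk.

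First I would establish the identity $x^s = \sum_{k=0}^{s} p_{s,k} T_k(x)$ on $[-1,1]$. Writing $x = \cos\theta$ and using $(e^{i\theta}+e^{-i\theta})/2 = \mathbb{E}_\epsilon[e^{i\theta\epsilon}]$ for $\epsilon$ uniform on $\{\pm 1\}$, one gets
\begin{align*}
\cos(\theta)^s = \mathbb{E}\bigl[e^{i\theta X_s}\bigr] = \sum_m \Pr[X_s = m]\,\cos(m\theta)\,,
\end{align*}
where $X_s = \epsilon_1 + \cdots + \epsilon_s$ is the random walk. Taking real parts (LHS is real) and folding together $\pm k$ by symmetry of $X_s$ gives $\cos(\theta)^s = \sum_{k \geq 0} p_{s,k} \cos(k\theta)$, which by the defining identity $T_k(\cos\theta) = \cos(k\theta)$ of Chebyshev polynomials equals $\sum_{k=0}^{s} p_{s,k} T_k(\cos\theta)$. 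Two polynomials agreeing on $[-1,1]$ agree everywhere, so the identity extends as an equality of polynomials.

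Next I would bound the truncation error pointwise. Subtracting $P_{s,a}$ leaves $x^s - P_{s,a}(x) = \sum_{k=a+1}^{s} p_{s,k} T_k(x)$, and since $|T_k(x)| \leq 1$ for $x \in [-1,1]$,
\begin{align*}
\bigl|x^s - P_{s,a}(x)\bigr| \leq \sum_{k=a+1}^{s} p_{s,k} = \Pr[\,|X_s| \geq a+1\,].
\end{align*}
Applying Hoeffding's inequality to the $\pm 1$-valued sum $X_s$ gives $\Pr[|X_s| \geq a+1] \leq 2 e^{-(a+1)^2/(2s)}$, which is exactly the claimed bound.

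Finally I would upgrade the scalar bound to operators via functional calculus. Since the statement is applied to Hamiltonians in the sequel, I take $O$ Hermitian with $\|O\| \leq 1$, so its spectrum lies in $[-1,1]$. Then $\|O^s - P_{s,a}(O)\| = \max_{\lambda \in \mathrm{spec}(O)} |\lambda^s - P_{s,a}(\lambda)|$, to which the pointwise bound applies directly. The only step that requires any care is the symmetrisation needed to write the expansion in terms of $p_{s,k}$ with $k \geq 0$ (in particular, correctly handling $k = 0$, which is not doubled); this is essentially bookkeeping, so I do not anticipate a genuine obstacle.
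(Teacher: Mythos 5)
Your proof matches the paper's argument: the paper cites~\cite[Theorem~3.1]{SV14} for the random-walk expansion $x^s = \sum_k p_{s,k} T_k(x)$, which you instead derive directly from $\cos^s\theta = \mathbb{E}\big[e^{\iota\theta X_s}\big]$, and both proofs then bound the Chebyshev tail $\sum_{k>a} p_{s,k}$ by a Chernoff/Hoeffding bound using $|T_k|\le 1$ on $[-1,1]$. Your explicit restriction to Hermitian $O$ is in fact more careful than the paper's phrasing, since passing from $|T_k(x)|\le 1$ on $[-1,1]$ to $\|T_k(O)\|\le 1$ requires normality of $O$ (Chebyshev polynomials are not bounded by $1$ on the full unit disk), which is the only case used in the paper.
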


\begin{proof}
As shown in~\cite[Theorem 3.1]{SV14}, the monomial $x^s$ can be viewed as a random walk over the Chebyshev polynomials. More precisely, for any $s>0$ we have
\begin{align}
x^s = \sum_{k=0}^{s} p_{s,k} T_k(x) = P_{s,s}(x) \,.\label{eqn:monomial_exact}
\end{align}
By the Chernoff bound, $\sum_{k\geq a} p_{s,k} \leq 2e^{-\frac{a^2}{2s}}$. 
This means that the contributions from the high-degree Chebyshev polynomials in \cref{eqn:monomial_exact} are suppressed, and we can obtain a good approximation to $x^s$ by truncating \cref{eqn:monomial_exact} at a degree $a < s$.
Specifically, for any operator $O$ with $\norm{O} \leq 1$ we can bound 
\begin{align*}
\norm{O^s - \sum_{k=0}^{a} p_{s,k} T_k(O)} = \norm{\sum_{k=a+1}^{s} p_{s,k} T_k(O)} \leq \sum_{k=a+1}^{s} |p_{s,k}| \leq 2e^{-\frac{(a+1)^2}{2s}} \,.
\end{align*}
Here, the first inequality holds by the triangle inequality and because $|T_k(x)| \leq 1$ for $|x| \leq 1$.
\end{proof}

\subsection{Concentration bounds from local operator approximations}
\label{sec:concfromloc}

Our general strategy for proving concentration bounds on quantum states will be to show that these states can be approximated by local operators.
From this we can obtain concentration bounds from the following lemma, adapted from~\cite{KAAV15}. 
We state this lemma for the Hamming weight distribution $W_\rho$ of a state $\rho$, but it can easily be generalised to the distribution of any function of bitstrings that varies slowly as the Hamming weight is changed (see \cref{lem:qaoa_energy_conc} for an example).
\begin{lemma} \label{lem:conc_bounds_from_local}
Let $\psi= \proj{\psi}$ be a $(k, \eps)$-local pure quantum state and let $m$ be the median of $W_\psi$.
Then 
\begin{align*}
\pr{W_\psi > m + k} \leq 4 \eps^2 \quad\tand\quad \pr{W_\psi < m - k} \leq 4 \eps^2 \,.
\end{align*}

\begin{proof}
We only prove the first bound, the second is analogous.
We define 
\begin{align*}
\Pi_{> m + k} = \sum_{x \in \bits : |x| > m + k} \proj{x}
\end{align*}
as the projector onto all computational basis states with Hamming weight greater than $m+k$.
We define $\Pi_{\leq m}$ analogously.
Because strings in the support of $\Pi_{> m + k}$ and $\Pi_{\leq m}$ differ on more than $k$ positions, for any operator $O$ that only acts non-trivially on $k$ qubits we have that $\Pi_{> m + k} O \Pi_{\leq m} = 0$.
By linearity we also get $\Pi_{> m + k} R \Pi_{\leq m} = 0$ for any $k$-local operator $R$.

Because $\psi$ is $(k, \eps)$-local, there exists a $k$-local operator $R$ that is $\eps$-close to $\psi$.
We therefore get that 
\begin{align*}
\eps 
\geq \norm{\psi - R} 
\geq \norm{\Pi_{> m + k} (\psi - R) \Pi_{\leq m}}
= \norm{\Pi_{> m + k} \psi \Pi_{\leq m}} \,.
\end{align*}
By the variational definition of the operator norm and inserting $\psi = \proj{\psi}$:
\begin{align*}
\norm{\Pi_{> m + k} \psi \Pi_{\leq m}} 
&\geq \norm{\Pi_{> m + k} \proj{\psi} \Pi_{\leq m} \ket{\psi}}_2 \\
&= \bra{\psi} \Pi_{\leq m} \ket{\psi} \norm{\Pi_{> m + k} \ket{\psi}}_2 \\
&= \tr{\Pi_{\leq m} \psi} \cdot \left(\tr{\Pi_{> m + k} \psi}\right)^{1/2} \\
&= \pr{W_\psi \leq m} \cdot \left(\pr{W_\psi > m + k}\right)^{1/2} \,.
\end{align*}
The lemma now follows by noting that since $m$ is the median of $W_\psi$, $\pr{W_\psi \leq m} \geq 1/2$.
\end{proof}
\end{lemma}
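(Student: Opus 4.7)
The plan is to exploit the simple but crucial observation that a $k$-local operator cannot connect computational basis states whose Hamming weights differ by more than $k$. Combined with the fact that $\psi$ is $\eps$-close to such a local operator, this will imply a bound on the ``off-diagonal'' block of $\psi$ between low- and high-weight subspaces, which then translates into a tail bound by rank-one considerations.

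More concretely, first I would introduce the projectors $\Pi_{\leq m} = \sum_{|x| \leq m} \proj{x}$ and $\Pi_{> m+k} = \sum_{|x| > m+k} \proj{x}$. Since any single Pauli string acting non-trivially on at most $k$ qubits can change the Hamming weight of a computational basis state by at most $k$, it follows that $\Pi_{>m+k} \, O \, \Pi_{\leq m} = 0$ for any operator $O$ supported on at most $k$ qubits, and by linearity for any $k$-local $R$. Let $R$ be a $k$-local operator with $\|\psi - R\| \leq \eps$, whose existence is guaranteed by $(k,\eps)$-locality. Then
\begin{equation*}
\bigl\|\Pi_{>m+k}\,\psi\,\Pi_{\leq m}\bigr\| = \bigl\|\Pi_{>m+k}(\psi - R)\Pi_{\leq m}\bigr\| \leq \|\psi - R\| \leq \eps \,,
\end{equation*}
using that projectors have operator norm at most one.

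Next I would compute the left-hand side directly using that $\psi = \proj{\psi}$ has rank one. The operator $\Pi_{>m+k} \ket{\psi}\bra{\psi} \Pi_{\leq m}$ sends any vector $\ket{v}$ to $\Pi_{>m+k}\ket{\psi}\cdot\langle \psi | \Pi_{\leq m} | v \rangle$, so its operator norm equals the product $\|\Pi_{>m+k}\ket{\psi}\|_2 \cdot \|\Pi_{\leq m}\ket{\psi}\|_2 = \sqrt{\pr{W_\psi > m+k}}\cdot\sqrt{\pr{W_\psi \leq m}}$. Combining with the previous inequality and using $\pr{W_\psi \leq m} \geq 1/2$ by definition of the median yields $\pr{W_\psi > m+k} \leq 2\eps^2$, which is even slightly stronger than the stated $4\eps^2$. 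The bound for $\pr{W_\psi < m - k}$ is entirely symmetric, swapping the roles of the two projectors and using $\pr{W_\psi \geq m} \geq 1/2$.

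There is no real obstacle here; the only conceptual step is the locality-vs.-Hamming-distance observation, and the rest is a short rank-one calculation together with the variational/operator norm definition. One small subtlety worth watching is that the median property gives $\pr{W_\psi \leq m} \geq 1/2$ but only $\pr{W_\psi \geq m} \geq 1/2$ (not $>m-k$ or $<m$), so for the lower tail one pairs $\Pi_{<m-k}$ with $\Pi_{\geq m}$; this is automatic and does not affect the constants.
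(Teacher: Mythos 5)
Your proposal is correct and follows essentially the same approach as the paper: the same projectors $\Pi_{>m+k}$, $\Pi_{\leq m}$, the same locality-vs.-Hamming-distance observation, and the same use of $\|\psi - R\|\le\eps$. The only difference is that you compute $\|\Pi_{>m+k}\,\psi\,\Pi_{\leq m}\| = \|\Pi_{>m+k}\ket{\psi}\|_2\,\|\Pi_{\leq m}\ket{\psi}\|_2$ exactly as the singular value of a rank-one operator, rather than lower-bounding it with the test vector $\ket{\psi}$ as in the paper, which yields the marginally sharper constant $2\eps^2$ in place of $4\eps^2$.
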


More generally, we observe that approximately local quantum states have the following clustering property (see also~\cite{ANirkhe22, AB22, ABN22}).

\begin{lemma}\label{lem:local_clustering}
Let $\psi = \proj{\psi}$ be a $(k, \eps)$-local operator and consider two sets $S, S' \subset \bits^n$ with Hamming distance more than $k$ from each other.
Then, either $\tr{\Pi_{S} \rho} \leq \sqrt \eps$ or $\tr{\Pi_{S'} \rho} \leq \sqrt \eps$ (or both).
\end{lemma}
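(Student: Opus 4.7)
The plan is to mimic the proof of \cref{lem:conc_bounds_from_local}, exploiting the same key observation: if $S, S' \subseteq \bits^n$ have Hamming distance strictly greater than $k$, then for any operator $O$ that acts non-trivially on at most $k$ qubits, $\Pi_S O \Pi_{S'} = 0$, since no matrix element of such an $O$ can connect a basis state in $S$ to one in $S'$. By linearity, $\Pi_S R \Pi_{S'} = 0$ for any $k$-local operator $R$ (here I take the decomposition of $R$ into its local pieces $R_i$ as in the definition of $k$-locality).

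First, I would invoke the $(k,\eps)$-locality hypothesis to choose a $k$-local operator $R$ with $\norm{\psi - R} \leq \eps$. Sandwiching between $\Pi_S$ and $\Pi_{S'}$ and using the above observation gives
\begin{align*}
\norm{\Pi_S \psi \Pi_{S'}} = \norm{\Pi_S (\psi - R) \Pi_{S'}} \leq \norm{\psi - R} \leq \eps \,,
\end{align*}
since sandwiching by contractive projectors can only decrease the operator norm.

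Next, I would use the fact that $\psi = \ket{\psi}\bra{\psi}$ is rank one, so $\Pi_S \psi \Pi_{S'} = (\Pi_S \ket{\psi})(\bra{\psi} \Pi_{S'})$ is a rank-one operator whose operator norm equals the product of the 2-norms of its two factors:
\begin{align*}
\norm{\Pi_S \psi \Pi_{S'}} = \norm{\Pi_S \ket{\psi}}_2 \cdot \norm{\Pi_{S'} \ket{\psi}}_2 = \sqrt{\tr{\Pi_S \psi}\cdot \tr{\Pi_{S'} \psi}} \,.
\end{align*}
Combining with the previous bound yields $\tr{\Pi_S \psi}\cdot \tr{\Pi_{S'} \psi} \leq \eps^2$, so at least one of $\tr{\Pi_S \psi}, \tr{\Pi_{S'} \psi}$ is at most $\eps$, which is in turn at most $\sqrt{\eps}$ whenever $\eps \leq 1$ (the case $\eps > 1$ being vacuous since probabilities are bounded by $1$). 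Note the statement is written with $\rho$ in place of $\psi$, but from the hypothesis $\psi = \proj{\psi}$ this is clearly the intended object.

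There is no real obstacle here; the argument is essentially a reorganisation of the proof of \cref{lem:conc_bounds_from_local}, with the two Hamming-weight level sets $\{|x|>m+k\}$ and $\{|x|\leq m\}$ replaced by arbitrary sets $S, S'$ at Hamming distance greater than $k$. The only delicate point is making sure to use the rank-one structure of $\psi$ to convert the operator norm of $\Pi_S \psi \Pi_{S'}$ into the geometric mean of the two trace probabilities; for a general mixed state this identity fails, which is presumably why the lemma restricts to pure $\psi$.
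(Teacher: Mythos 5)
Your proof is correct and follows essentially the same strategy as the paper: choose a $k$-local $R$ within $\eps$ of $\psi$, observe $\Pi_S R \Pi_{S'}=0$ since $S,S'$ are more than $k$ apart in Hamming distance, and conclude $\norm{\Pi_S\psi\Pi_{S'}}\leq\eps$. The one genuine refinement is that you exploit the rank-one structure to compute $\norm{\Pi_S\psi\Pi_{S'}}=\sqrt{\tr{\Pi_S\psi}\tr{\Pi_{S'}\psi}}$ exactly, whereas the paper only uses the weaker inequality $\tr{\Pi_S\psi}\tr{\Pi_{S'}\psi}=\bra{\psi}\Pi_S\psi\Pi_{S'}\ket{\psi}\leq\norm{\Pi_S\psi\Pi_{S'}}$; your version therefore yields the sharper conclusion $\min(\tr{\Pi_S\psi},\tr{\Pi_{S'}\psi})\leq\eps$ rather than $\leq\sqrt{\eps}$, which of course still implies the stated lemma.
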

\begin{proof}
Letting $R$ be the $k$-local operator $\eps$-close to $\psi$, we can bound
\begin{align*}
\tr{\Pi_{S} \psi} \tr{\Pi_{S'} \psi} = \bra{\psi} \Pi_{S} \psi \Pi_{S'} \ket{\psi} \leq \norm{\Pi_{S} \psi \Pi_{S'}} \leq \norm{\Pi_{S} R \Pi_{S'}} + \eps = \eps \,.
\end{align*}
The last equality follows by the same reasoning as in the proof of \cref{lem:conc_bounds_from_local}.
This implies the lemma.
\end{proof}

\section{Concentration bounds for shallow circuits and matrix product states}

\cite{AAG21} showed that the ground states of various local Hamiltonians can be approximated by local operators using polynomials.
Combining this with \cref{lem:conc_bounds_from_local}, we can easily obtain concentration bounds for such states.
As concrete examples, we prove concentration bounds for the output states of shallow quantum circuits and injective matrix product states (MPSs).

We begin by considering the special case of ground states (or, for convenience, maximum-energy eigenstates) of commuting local Hamiltonians $H = \sum H_i$, where each $H_i$ is a projector.

\begin{lemma} \label{lem:ground_state_approx}
Let $H = \sum_{i = 1}^r H_i$ be a sum of $r$ commuting projectors each acting on $\ell$ qubits, and $\ket{\psi}$ the maximum-energy eigenstate of $H$.
Then, for any $d \in (\sqrt{r}, r)$, the state $\proj{\psi}$ is $\left( \ell d,  2^{-\frac{d^2}{2^8 r}} \right)$-local and 
\begin{align*}
\pr{W_\psi > m + \ell d} \leq 4\cdot 2^{-\frac{d^2}{2^7 r}} \quad\tand\quad \pr{W_\psi < m - \ell d} \leq 4\cdot 2^{-\frac{d^2}{2^7 r}} \,,
\end{align*}
where $m$ is the median of $W_\psi$.
\end{lemma}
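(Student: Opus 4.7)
The plan is to construct a polynomial approximation to the rank-one projector $\proj{\psi}$ and then invoke \cref{lem:conc_bounds_from_local} to convert this local approximation into a concentration bound.

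Setup: Since the $H_i$ are commuting projectors, they share a common eigenbasis and $H$ has integer eigenvalues in $\{0,1,\dots,r\}$. The maximum-energy eigenstate $\ket{\psi}$ sits at eigenvalue $r$ (the case in which all projectors are simultaneously satisfied, which is implicit in the statement of the lemma). The operator $H = \sum_i H_i$ is $\ell$-local by assumption.

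Polynomial construction: I would use a univariate polynomial $P_d$ of degree $d$ satisfying $P_d(r) = 1$ and $|P_d(j)| \leq 2^{-d^2/(2^8 r)}$ for every $j \in \{0, 1, \dots, r-1\}$. Such a $P_d$ is the near-optimal polynomial approximation to the AND function on $r$ bits, from \cite{KahnLS96, BuhrmanCWZ99, AAG21}. A naive Chebyshev construction (rescaling the interval and evaluating $T_d$ just outside $[-1,1]$) only achieves the weaker error $e^{-\Omega(d/\sqrt{r})}$; the stronger bound $e^{-\Omega(d^2/r)}$ required here is obtained by following the Chebyshev polynomial with an amplification step, which is the essential content of the cited constructions. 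The hypothesis $d \in (\sqrt r, r)$ ensures we are in the regime where this amplified polynomial is meaningful.

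Local approximation: By \cref{ex:poly_to_local_op}, since $H$ is $\ell$-local and $P_d$ has degree $d$, the operator $P_d(H)$ is $(\ell d)$-local. By the spectral theorem, the operator-norm error equals the pointwise approximation error on the spectrum of $H$, hence
\begin{align*}
\norm{P_d(H) - \proj{\psi}} \;\leq\; \max_{j \in \{0, \ldots, r\}} \left| P_d(j) - \1\{j = r\} \right| \;\leq\; 2^{-d^2/(2^8 r)}\,.
\end{align*}
Therefore $\proj{\psi}$ is $\left(\ell d,\, 2^{-d^2/(2^8 r)}\right)$-local, which gives the first part of the conclusion.

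Concentration: Applying \cref{lem:conc_bounds_from_local} with $k = \ell d$ and $\eps = 2^{-d^2/(2^8 r)}$ gives both tail bounds, since $4\eps^2 = 4 \cdot 2^{-d^2/(2^7 r)}$, which matches the claimed exponent.

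The main obstacle is the existence of the near-optimal polynomial $P_d$. The Chebyshev-based tool already established in the excerpt (\cref{lem:approx_power}) by itself only yields a degree-error tradeoff of the form $e^{-\Omega(d/\sqrt{r})}$ for approximating the indicator of $\{r\}$ (via iterated powering of $H/r$ followed by Chebyshev truncation), which is insufficient across the full range $d \in (\sqrt r, r)$. Importing the amplified construction from \cite{KahnLS96, BuhrmanCWZ99, AAG21} is what closes this gap; the rest of the argument (locality propagation, spectral-theorem error bound, and the invocation of \cref{lem:conc_bounds_from_local}) is a direct assembly of tools already established in the excerpt.
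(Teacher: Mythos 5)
Your proof is correct and follows essentially the same approach as the paper: approximate $\proj{\psi}$ by a near-optimal low-degree polynomial applied to the Hamiltonian, use the locality-propagation example to bound the locality of the approximant, and then invoke \cref{lem:conc_bounds_from_local}. The one cosmetic difference is that the paper phrases the construction via a \emph{multi-variate} degree-$d$ polynomial $P_d(H_1, \dots, H_r)$ approximating the Boolean product $\prod_i x_i$ (citing \cite{AAG21,KahnLS96,BuhrmanCWZ99}), whereas you use a \emph{univariate} degree-$d$ polynomial applied to $H = \sum_i H_i$ approximating the indicator of $\{r\}$ on the integer spectrum $\{0,\dots,r\}$; since the $H_i$ are commuting projectors, the multi-variate polynomial can be taken symmetric, and on the joint eigenbasis its value depends only on $\sum_i x_i$, so the two operators and the two locality bounds coincide, and your spectral-theorem argument and error calculation $4\eps^2 = 4\cdot 2^{-d^2/(2^7 r)}$ are exactly as in the paper.
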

\begin{proof}
Because $H$ is a sum of commuting projectors, $\proj{\psi} = \prod H_i$.
\cite[Theorem 3.1]{AAG21} (see also~\cite{KahnLS96, BuhrmanCWZ99}) construct multi-variate degree-$d$ polynomials $P_d$ (with $d \in (\sqrt{r}, r)$) such that for any $x_1, \dots, x_r \in \bits$:
\begin{align*}
|P_d(x_1, \dots, x_r) - \prod_i x_i| \leq 2^{-\frac{d^2}{2^8 r}} \,.
\end{align*}
If we insert the $\ell$-local projectors $x_i = H_i$, whose spectrum is $\bits$, we get that 
\begin{align*}
\norm{P_d(H_1, \cdots, H_r) - \proj{\psi}} \leq 2^{-\frac{d^2}{2^8 r}} \,,
\end{align*}
and $P_d(H_1, \cdots, H_r)$ is a $(\ell d)$-local operator by \cref{ex:poly_to_local_op}.
This shows that $\psi$ is $\left(\ell d,  2^{-\frac{d^2}{2^8 r}} \right)$-local, and the concentration bound follows directly from \cref{lem:conc_bounds_from_local}.
\end{proof}

As an application of this, we consider quantum circuits with arbitrary 2-local gates between any two qubits, arranged into $t$ layers.
The number of layers $t$ is called the circuit depth.
The key property of shallow circuits is that they spread locality in a controllable way:
by a standard lightcone argument, it is easy to see that if $O$ is a $k$-local operator and $U$ is a unitary implemented by a depth-$t$ circuit, then $U O U^\dagger$ is a $(2^t \cdot k)$-local operator.
Therefore, the output state $U \ket{0}$ of the circuit is the maximum-energy eigenstate of the $2^t$-local Hamiltonian $H = \sum_{i} U (\proj{0}_i \ot \1_{\setminus i}) U^\dagger$, and the local terms of the Hamiltonian are clearly commuting projectors.
Hence, we obtain the following corollary, answering an open question from~\cite{DMRF22}.

\begin{corollary} \label{lem:lowdepth}
Let $U$ be a unitary implemented by a depth-$t$ circuit.
Then, the output state $\proj{\psi}=U\proj{0}^{\otimes n}U^\dagger$ is $\left(k, 2^{-\frac{k^2}{2^{2t+8}\cdot n}}\right)$-local operator for $k \in (2^t\sqrt{n}, 2^tn)$.
Furthermore, denoting the median of $W_{\psi}$ by $m$, the following Gaussian concentration bounds hold:
\begin{align*}
\pr{W_\psi > m + k} \leq 4 \cdot 2^{-\frac{k^2}{2^{2t+7}\cdot n}} \quad\tand\quad \pr{W_\psi < m - k} \leq 4 \cdot 2^{-\frac{k^2}{2^{2t+7}\cdot n}}\,.
\end{align*}
\end{corollary}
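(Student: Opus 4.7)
The plan is to reduce the statement directly to \cref{lem:ground_state_approx} by exhibiting $\proj{\psi}=U\proj{0}^{\otimes n}U^\dagger$ as the maximum-energy eigenstate of a local Hamiltonian that is a sum of commuting projectors, with parameters $r=n$ and $\ell=2^t$.

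First I would define, for each $i\in[n]$, the operator $H_i \deq U(\proj{0}_i \otimes \1_{\setminus i})U^\dagger$. Since $\proj{0}_i$ is a rank-$2^{n-1}$ projector, so is $H_i$. The $\proj{0}_i$ mutually commute (they act on distinct qubits), and conjugation by the common unitary $U$ preserves commutation, so the $H_i$ are pairwise commuting projectors. Their simultaneous $(+1)$-eigenspace is the image under $U$ of the simultaneous $(+1)$-eigenspace of $\{\proj{0}_i\}$, which is one-dimensional and spanned by $\ket{\psi}=U\ket{0}^{\otimes n}$. Hence $\ket{\psi}$ is the unique maximum-energy eigenstate of $H=\sum_i H_i$.

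Next I would invoke the lightcone argument to bound the locality of each $H_i$: in a depth-$t$ circuit of $2$-qubit gates, the backward lightcone of a single qubit contains at most $2^t$ qubits, so $H_i$ acts non-trivially on at most $\ell = 2^t$ qubits. Thus $H=\sum_{i=1}^n H_i$ is a sum of $r=n$ commuting projectors each of locality $\ell=2^t$, and \cref{lem:ground_state_approx} applies.

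Finally I would translate parameters. \cref{lem:ground_state_approx} gives, for any $d\in(\sqrt{n},n)$, that $\proj{\psi}$ is $\bigl(2^t d,\,2^{-d^2/(2^8 n)}\bigr)$-local with concentration tails $4\cdot 2^{-d^2/(2^7 n)}$ around the median. Substituting $d=k/2^t$ (so that the local-operator parameter $\ell d$ becomes $k$) yields the $\bigl(k,\,2^{-k^2/(2^{2t+8}n)}\bigr)$-locality claimed in the corollary, and the admissible range $d\in(\sqrt{n},n)$ becomes $k\in(2^t\sqrt{n},\,2^t n)$. The concentration inequalities follow by the same substitution. There is no real obstacle here, since all the work is already packaged in \cref{lem:ground_state_approx}; the only item to verify carefully is the commuting-projector structure and the lightcone bound on $\ell$, both of which are standard.
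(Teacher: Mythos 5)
Your proof is correct and follows essentially the same route as the paper: identify $\proj{\psi}$ as the maximum-energy eigenstate of the commuting-projector Hamiltonian $H=\sum_i U(\proj{0}_i\otimes\1_{\setminus i})U^\dagger$ with $r=n$ and $\ell=2^t$ via the lightcone bound, then apply \cref{lem:ground_state_approx} and substitute $d=k/2^t$. The parameter translation and the resulting exponents match the statement exactly.
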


A more general case considered in~\cite{AAG21} is that of a 1D local Hamiltonian $H = \sum H_i$, where $0 \leq H_i \leq \1$ and $H$ needs to satisfy the local gap property (see~\cite{AAG21} for a definition). 
For such Hamiltonians,~\cite[Theorem 3]{AAG21} gives local approximations to the ground state, and we can use this and our \cref{lem:conc_bounds_from_local} to obtain concentration bounds.
We do not spell out the full statement and instead consider a useful example, injective matrix product states (MPSs) with constant bond dimension.
We refer to~\cite{cirac2021matrix} for an introduction to MPSs.
For our purposes, the main property of injective MPSs is that they are the unique ground state of a ``parent Hamiltonian'' with constant locality spectral gap (see e.g.~\cite[Section IV.C]{cirac2021matrix} for details). In fact, the proof of the spectral gap lower bound also implies a constant local gap lower bound~\cite{PVWC07}.
From this, the following result is immediate.
\begin{lemma}
\label{lem:mps_bound}
Let $\ket{\psi}$ be an injective MPS on a chain with $n$ qubits with a constant bond dimension. Then, for every $\delta\in (0,\frac{1}{4})$,  $\proj{\psi}$ is a $\left(k, e^{-c_1(\delta) \frac{k^2}{n}} \right)$-local operator for $k\leq c_2(\delta) n^{1-\delta}$ for some $c_{1,2}(\delta)$ independent of $n$. 
Furthermore, denoting by $m$ the median of $W_{\proj{\psi}}$,
\begin{align*}
\pr{W_\rho > m + k} \leq e^{- 2 c_1(\delta) \frac{k^2}{n}} \quad\tand\quad \pr{W_\rho < m - k} \leq e^{- 2 c_1(\delta) \frac{k^2}{n}} \,.
\end{align*}
\end{lemma}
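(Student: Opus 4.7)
The plan is to reduce this to the same template as the proof of \cref{lem:lowdepth}, but now using the polynomial approximation for ground states of 1D locally gapped Hamiltonians from \cite{AAG21} instead of the commuting-projector approximation. First, I would invoke the standard construction of the parent Hamiltonian of an injective MPS: since $\ket{\psi}$ has constant bond dimension, there is a Hamiltonian $H=\sum_{i=1}^{n-r+1} h_i$ where each $h_i$ is a projector acting on a constant number $r$ of consecutive qubits, $\ket{\psi}$ is the unique ground state of $H$ with eigenvalue $0$, and $H$ has a constant spectral gap. Crucially, by \cite{PVWC07} the parent Hamiltonian also has a constant \emph{local gap}, which is what the AAG21 machinery requires.

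Next, I would appeal to \cite[Theorem 3]{AAG21}, which produces, for any 1D local Hamiltonian with constant local gap, a polynomial $P_d$ in the local terms $h_i$ whose degree $d$ can be traded against the approximation error to the ground state projector. Specifically, the construction yields an operator $P_d(h_1,\dots,h_{n-r+1})$ that is $(r\cdot d)$-local (by the usual lightcone/multinomial expansion of \cref{ex:poly_to_local_op}) and satisfies a bound of the form $\norm{P_d(h_1,\dots,h_{n-r+1}) - \proj{\psi}} \leq e^{-c(\delta)\, d^{2}/n^{1+\delta'}}$, valid in a regime $d \leq n^{1-\delta''}$ governed by the parameter $\delta$. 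Setting $k = r\cdot d$ and absorbing all constants and reparameterising $\delta$, this exhibits $\proj{\psi}$ as a $\bigl(k, e^{-c_1(\delta) k^{2}/n}\bigr)$-local operator for $k \leq c_2(\delta)\, n^{1-\delta}$, which is the first claim.

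The second claim is then an immediate application of \cref{lem:conc_bounds_from_local}: plugging $\eps = e^{-c_1(\delta)k^{2}/n}$ into the inequalities $\pr{W_\psi > m+k}\leq 4\eps^{2}$ and $\pr{W_\psi < m-k}\leq 4\eps^{2}$ yields tail bounds $4\, e^{-2 c_1(\delta) k^{2}/n}$. The prefactor $4$ can be absorbed into the constant $c_1(\delta)$ by slightly shrinking it, giving the clean form $e^{-2 c_1(\delta) k^{2}/n}$ stated in the lemma (possibly after restricting to $k$ larger than some constant, which is harmless because the bound is trivial otherwise).

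The only genuinely nontrivial input here is the polynomial approximation in \cite[Theorem 3]{AAG21}, which we invoke as a black box; the main obstacle, if one wanted to push further, is the restriction $k \leq c_2(\delta)\, n^{1-\delta}$. This limitation is inherited directly from the $1$D polynomial approximation, which does not seem to extend to the full linear-in-$n$ range of locality, and removing it would require a strengthening of the underlying approximation result rather than any additional argument on the MPS side.
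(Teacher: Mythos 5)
Your proof takes the same route as the paper: parent Hamiltonian of the injective MPS with constant local gap via \cite{PVWC07}, then the polynomial ground-state approximation of \cite[Theorem 3]{AAG21}, then \cref{lem:conc_bounds_from_local}. This is exactly the chain the paper lays out in the paragraph preceding the lemma before declaring the result immediate; the only loose end is that your stated error bound $e^{-c(\delta)\,d^2/n^{1+\delta'}}$ needs $\delta'=0$ for ``reparameterising $\delta$'' to land on $e^{-c_1(\delta)k^2/n}$ (the sublinearity constraint $k \le c_2(\delta)\,n^{1-\delta}$ comes from the regime of validity of the black-box approximation, not from a different power of $n$ in the exponent), but this does not affect the structure of the argument.
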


An injective MPS can be used to encode conditionally independent distributions on a line. For this, consider a distribution $P(x_1,\ldots x_n)$ such that $P(x_i| x_1,\ldots x_{i-1})=P(x_i| x_{i-1})$ and assume that all these conditional probabilities are positive.
Note that $P(x_1,\ldots x_n)=P(x_1)P(x_2|x_1)\ldots P(x_n|x_{n-1})$. 
Then, it is easy to verify that the state $\sum_{x_1,\ldots x_n}\sqrt{P(x_1,\ldots, x_n)}\ket{x_1,x_2,\ldots x_n}$ can be written as an injective MPS with constant bond dimension. The output distribution after a computational basis measurement is precisely $P$, which allows us to show Gaussian concentration using \cref{lem:mps_bound}, reproducing a (version of) Azuma's inequality.
In this sense, \cref{lem:mps_bound} can be understood as a quantum version of Azuma's inequality.

\section{Concentration bounds for dense Hamiltonian evolution} \label{sec:qaoa_conc}

In the previous section we considered the output states of shallow quantum circuits and showed that they can be approximated by local operators.
A depth-$t$ circuit can be written as a product of $t$ unitaries $U_1U_2\ldots U_t$, where each $U_i$ is a tensor product of one- or two-qubit gates acting on disjoint sets of qubits. 
Each $U_i$ can also be written as a Hamiltonian evolution $e^{\iota H^{(i)}}$, where $H^{(i)}$ is a $2$-local Hamiltonian and each qubit participates in at most one local term. 
Therefore, we can rephrase the result from the previous section as saying that for any sequence of $t$ 2-local Hamiltonians $H^{(i)}$ where each qubit participates in at most one local term, the state $e^{\iota H^{(t)}} \cdots e^{\iota H^{(1)}} \ket{0}^{\ot n}$ can be approximated by a local operator.
Since each local term in $H^{(i)}$ acts on different qubits, $H^{(i)}$ is obviously a commuting Hamiltonian.

In this section, we generalise this result to more general families of commuting Hamiltonians $H^{(i)}$.
In particular, we drop the requirement that each qubit can only participate in at most one local term of $H^{(i)}$ and instead allow local commuting Hamiltonians with dense interaction graphs, i.e.~any qubit is allowed to  participate in an arbitrary number of terms.
Evolution under such dense Hamiltonians cannot be implemented by a shallow circuit.
Nonetheless, we show that as long as such Hamiltonians satisfy a norm constraint explained in \cref{eqn:subset_condition}, the output state $e^{\iota H^{(t)}} \cdots e^{\iota H^{(1)}} \ket{0}^{\ot n}$ can still be approximated by a local operator just like for shallow circuits, and as a consequence also obeys concentration bounds.
This is of particular relevance as quantum optimisation algorithms such as the QAOA apply an evolution of this form when applied to dense constrained optimisation problems (COPs).
Therefore, our concentration bounds apply to the output of the QAOA for dense COPs, which previously required a highly technical analysis that only yielded an asymptotic statement without explicit bounds~\cite{BGMZ22}.
We can also use these concentration bounds to prove limitations on the performance of the QAOA (and dense evolutions more generally) at solving COPs; see \cref{sec:limitations} for details.

More formally, let $H^{(1)},H^{(2)},\ldots H^{(p)}$ be a collection of $\ell$-local commuting Hamiltonians (i.e.~the local terms in each Hamiltonian commute, but the different $H^{(i)}$ need not commute), where each qubit is allowed to participate in arbitrarily many local terms of each Hamiltonian $H^{(i)}$. Define $U_{i}=e^{-\iota H^{(i)}}$.
Because each Hamiltonian $H^{(i)}$ may have a dense interaction graph, we call $U_{1}\cdots U_p$ a \emph{dense  Hamiltonian evolution} with \emph{level} $p$.
To prove concentration bounds, we will require that there exist constants $\alpha \in [0, 1)$ and $\tilde C > 1$ such that for every $i$ and every subset $S \subseteq [n]$ of qubits, the subset Hamiltonian $H^{(i)}_S$ (see \cref{def:subset_op}) satisfies 
\begin{align}
\norm{H^{(i)}_S} \leq \tilde C n^\alpha |S|^{1 - \alpha} \,. \label{eqn:subset_condition}
\end{align}
In the special case of sparse Hamiltonians, we get that \cref{eqn:subset_condition} is satisfied with $\alpha = 0$. However, crucially, for random \emph{dense} classical COPs, \cref{eqn:subset_condition} can still be satisfied with high probability.
For example, in \cref{sec:sk_model} we show that a class of random spin models satisfies \cref{eqn:subset_condition} with high probability even though it allows for constraints between any variables, i.e.~its constraint (hyper-)graph is the complete (hyper-)graph.
The well-known Sherrington-Kirkpatrick model is an example of such a spin model.

We consider any pure product state $\rho_0$ and denote by
\begin{align}
\rho_p = \Big( U_p \cdots U_1\Big) \rho_0  \Big( U_1^\dagger  \cdots U^{\dagger}_p\Big)  \label{eqn:dense_state}
\end{align}
the output of the dense Hamiltonian evolution. 
The purpose of this section is to show that the output state $\rho_p$ satisfies certain concentration properties even if $p$ grows (slowly) with $n$.
Our strategy for proving such concentration bounds will be to approximate the state $\rho_p$ by a local operator (\cref{thm:qaoamain}).
Once we have established such a local approximation, \cref{lem:conc_bounds_from_local} immediately implies a concentration bound for the Hamming weight distribution $W_{\rho_p}$ (\cref{lem:qaoa_weight_conc}).
In~\cref{lem:qaoa_energy_conc}, we extend this to a concentration bound for the energy density of $\rho_p$ with respect to any classical Hamiltonian satisfying \cref{eqn:norm_bound}.

\subsection{Local approximations to output states of dense Hamiltonian evolution}
We start by giving the main result of this section, a local approximation to the output state $\rho_p$.
The proof of this result proceeds inductively: we first show how to approximate the starting state $\rho_0$ by a local operator; then, we can analyse how the locality and approximation error evolves under application of the unitaries $U_i$.
We emphasise that the bounds in \cref{thm:qaoamain} are optimised for ease of use, not tightness; one can easily obtain a tighter final result by keeping around more parameters.

\begin{theorem}
\label{thm:qaoamain}
Consider the output state $\rho_p$ from \cref{eqn:dense_state} for a family of $\ell$-local commuting Hamiltonians $H^{(1)},\ldots H^{(p)}$ satisfying \cref{eqn:subset_condition} for some $\alpha \in [0, 1)$ and $\tilde C > 0$ (independent of $n$). 
Then, for sufficiently large $n$ and $p = o(\log(n))$, there exists a constant $c_1 > 0$ such that $\rho_p$ is $(k_p,\eps_p)$-local with
\begin{align*}
k_p \leq c_1^p n^{1 - (1-\alpha)^p/4}\,,\quad
\eps_p \leq 4 e^{-n^{1/8}/\sqrt{2}} \,.
\end{align*}
\end{theorem}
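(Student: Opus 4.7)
The plan is a two-stage argument: first approximate the initial product state $\rho_0$ by a local operator, then inductively propagate this approximation through each unitary $U_1,\ldots,U_p$, exploiting the subset norm constraint in \cref{eqn:subset_condition} at every step. Throughout, I will use the canonical decomposition $R = \sum_T O_{(T)}$ from \cref{eqn:unique_decomp} together with \cref{lem:opbound} to bound individual terms.

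For the base case, since $\rho_0 = \bigotimes_i \proj{\psi_0}_i$ is a product state, it is the unique maximum eigenvector of the $1$-local Hamiltonian $H_0 = \frac{1}{n}\sum_i \proj{\psi_0}_i$, whose second-largest eigenvalue is at most $1 - 1/n$. Applying the Chebyshev-based polynomial approximation $P_{s,a}$ from \cref{lem:approx_power} to $H_0$ and picking $s \sim n^{5/4}$ and $a$ a suitable polynomial in $n$, I obtain an $a$-local operator with error $e^{-\Omega(n^{1/8})}$, where $a$ is chosen so that $k_0$ is around $n^{3/4}$ (matching the bound $k_p \leq c_1^p n^{1-(1-\alpha)^p/4}$ specialised to $p=0$). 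This is the content of the preceding \cref{lem:starting_state_approx}, which I invoke rather than re-derive.

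For the inductive step, assume $\rho_{i-1}$ is $(k_{i-1},\eps_{i-1})$-local with local witness $R_{i-1} = \sum_T O_{(T)}$. I approximate $U_i R_{i-1} U_i^\dagger$ by truncating each conjugation to a finite order. Because $H^{(i)}$ is commuting, $[H^{(i)}, O_{(T)}] = [H^{(i)}_{T}, O_{(T)}]$, so only the subset operator acts non-trivially at each commutator step; iterating, the support after $j$ commutators is contained in the $j$-fold $\ell$-neighbourhood of $T$. Truncating the Taylor series $U_i = \sum_j (iH^{(i)})^j/j!$ at some degree $d_i$ and applying \cref{eqn:subset_condition} to each $H^{(i)}_{T'}$ that appears, the remainder is bounded by a Taylor tail in $\tilde C n^\alpha (k_{i-1} \cdot \mathrm{poly}(d_i))^{1-\alpha}$, and choosing $d_i$ logarithmic in $n$ drives this tail below $e^{-n^{1/8}}$ while enlarging locality by at most a factor $c \cdot n^\alpha \cdot k_{i-1}^{-\alpha} \cdot \mathrm{polylog}(n)$. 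This is exactly the content of \cref{lem:comm_locality_spread}, which I invoke. Unrolling the recursion $k_i \leq c \cdot n^\alpha \cdot k_{i-1}^{1-\alpha} \cdot \mathrm{polylog}(n)$ and using $\sum_{j=0}^{i-1}(1-\alpha)^j = (1-(1-\alpha)^i)/\alpha$ together with the base value $k_0 \sim n^{3/4}$ produces, after absorbing polylogarithmic factors into the constant, the claimed $k_p \leq c_1^p n^{1-(1-\alpha)^p/4}$. Since errors accumulate additively over $p=o(\log n)$ steps and each step contributes at most $e^{-\Omega(n^{1/8})}$, the total error is at most $4 e^{-n^{1/8}/\sqrt{2}}$.

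The main obstacle is the inductive locality-spread step: a naive bound $\|H^{(i)}\| = O(n)$ on the full dense Hamiltonian would make truncated Taylor series useless, and the whole argument hinges on showing that only the \emph{subset} norm matters when conjugating an (approximately) local operator. Commutativity of each $H^{(i)}$ is essential here, since it lets me localise each commutator to the terms overlapping the current support; without commutativity one would have to track cross terms and the subset norm bound would not suffice. Carefully tracking how the support grows, how many Taylor terms are needed to bring the tail below the target error, and how these growth factors compound across $p$ levels while keeping the exponent of $n$ strictly less than $1$ for $p = o(\log n)$, is the key bookkeeping of the proof.
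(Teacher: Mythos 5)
Your high-level approach matches the paper's: approximate $\rho_0$ via Chebyshev polynomials applied to the parent Hamiltonian $H_0$, then inductively conjugate through each $U_i$ using commutativity to restrict the evolution to the subset Hamiltonian overlapping the current support, truncate, and apply the subset norm constraint. However, there are two substantive gaps.

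First, and most seriously, the claim that ``choosing $d_i$ logarithmic in $n$ drives the tail below $e^{-n^{1/8}}$'' cannot be correct. The quantity whose Taylor tail you are truncating is controlled by $\norm{H^{(i)}_{S}}$ for $S$ the support of the current local term, and by \cref{eqn:subset_condition} this norm is $\Theta(n^\alpha |S|^{1-\alpha})$ --- polynomial in $n$, not $O(1)$ or $O(\log n)$. A degree-$d$ Taylor tail of $e^{x}$ decays only once $d$ exceeds roughly $e\,x$, so the truncation degree must be taken $d_i = \Theta(n^\alpha k_{i-1}^{1-\alpha})$, which is polynomial. This is exactly what drives the recursion $k_i \sim \ell\,d_i \sim n^\alpha k_{i-1}^{1-\alpha}$ that produces the exponent $1 - (1-\alpha)^p/4$ in the final bound; with a merely polylogarithmic $d_i$ one would get $k_p = k_0 + O(p\,\ell\,\mathrm{polylog}\,n)$, which is internally inconsistent with the locality bound you claim at the end. \Cref{lem:gen_locality_spread} and \cref{lem:comm_locality_spread} in the paper make this explicit: $d = \lceil \mu\tilde C e\, n^\alpha \kappa^{1-\alpha}\rceil$ with $\kappa \geq k$.

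Second, the error accounting is incomplete because you do not track the total local norm $\tln$ through the induction. After restricting to the subset Hamiltonian $H_{S_i}$ and truncating, the error on each local term $R_i$ is proportional to $\norm{R_i}$, and summing over all local terms gives an error proportional to $\tln(R)$, not $\norm{R}$. This quantity grows very rapidly (like $e^{\sqrt{n} + O(k'\log n)}$ per \cref{lem:qaoa_local_approx}), so merely saying that ``each step contributes at most $e^{-\Omega(n^{1/8})}$'' is not justified on its own; the paper's proof carefully arranges that the Taylor-tail decay $e^{-\Omega(n^{1-(1-\alpha)^j/4})}$ beats the $\tln$ growth at every level, and it is only for $p = o(\log n)$ that the $n^{1-(1-\alpha)^j/4}$ term dominates $\sqrt{n} + O(j\,\log n\, k_{j-1})$. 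Without this bookkeeping there is no argument that the accumulated error stays $O(e^{-n^{1/8}/\sqrt{2}})$.

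Minor remark: the base-case parameters you quote ($s\sim n^{5/4}$, unspecified $m$) do not match \cref{lem:starting_state_approx} ($m = \lceil n^{1/4}\rceil$, $a = \lceil n^{1/2}\rceil$, $s = \lceil n^{7/8}/\sqrt 2\rceil$, locality $a\cdot m$), but since you ultimately just invoke that lemma this does not affect the soundness of the proposal. Likewise your commutator-based viewpoint ($[H^{(i)}, O_{(T)}] = [H^{(i)}_T, O_{(T)}]$) is a legitimate alternative framing of the paper's observation $e^{-\iota H}R_i e^{\iota H} = e^{-\iota H_{S_i}}R_i e^{\iota H_{S_i}}$, though the paper's version avoids having to re-derive a subset-norm bound at each commutator order.
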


\begin{remark}
In general, the above bound is useful when $k = o(n)$.
For sparse COPs, where $\alpha = 0$, the bound on $k_p$ simplifies to $k_p \leq c_1^p n^{3/4}$.
Therefore, we get $k_p = o(n)$ for $p = o(\log n)$.
(Note that for sparse Hamiltonians, the circuit is in fact low depth, so one can alternatively use \cref{lem:lowdepth}.)
For dense COPs, where $\alpha > 0$, we need $p = o(\log \log n)$ for $k_p = o(n)$.
For the remainder of this section, we will focus on dense Hamiltonians and always impose the requirement that $p = o(\log\log n)$.
\end{remark}

\begin{remark}
As an example of why such a local operator approximation is useful, we observe that \cref{thm:qaoamain} combined with \cref{lem:local_clustering} gives a clustering property for the output of dense evolutions.
Such a clustering property is used in the proof of the NLTS theorem~\cite{ABN22}, which shows that the low energy states of recently discovered~\cite{panteleevk2021, quantum-tanner-codes} linear-rank and linear-distance quantum LDPC code Hamiltonians require $\Omega(\log n)$ quantum circuit depth to prepare. 
Replacing~\cite[Fact 4]{ABN22} with the clustering property for the output states of dense evolutions, we can show that if one were to try to prepare the low-energy states of LDPC codes using dense Hamiltonian evolution instead of shallow circuits, one would need at least $\Omega(\log\log n)$ levels of dense evolution.
\end{remark}

\begin{proof}[Proof of \cref{thm:qaoamain}]
We invoke \cref{lem:qaoa_local_approx} shown below, which gives general (albeit complicated) expressions for $k_p$ and $\eps_p$. Setting $c_1 = 40 \cdot \ell \tilde C$, which is a constant by assumption, we immediately obtain the bound on $k_p$.
To bound $\eps_p$, we use the expression from \cref{lem:qaoa_local_approx}:
\begin{align*}
\eps_p  \leq 3 \, e^{-n^{1/8}/\sqrt{2}} + 6 \sum_{j = 1}^p e^{ - 4  \cdot (20 \cdot l)^{j-1} \tilde C^j  n^{1 - (1-\alpha)^j/4} + \sqrt{n} + 2 (j - 1) \log(2 n) k_{j-1}} \,.
\end{align*}
We first bound the exponent of the second term for $j \in \{1, \dots, p\}$:
\begin{align*}
&- 4\cdot (20 \cdot l)^{j-1} \tilde C^j n^{1 - (1-\alpha)^j/4} + \sqrt{n} + 2 (j - 1) \log(2 n) k_{j-1}  \\
&\leq - \tilde C n^{1 - (1-\alpha)^j/4} + \sqrt{n} + 2 (p - 1) \log(2 n) c_1^p n^{1 - (1-\alpha)^{j-1}/4} 
\end{align*}
For $p = o(\log n)$ and sufficiently large $n$ and $p$, we see that the first term $- \tilde C n^{1 - (1-\alpha)^j/4}$ is dominant for any $j$.
Therefore, there exist constants $c_3, c_4$ such that
\begin{align*}
\eps_p \leq 3 e^{-n^{1/8}/\sqrt{2}} + 6 p c_3 e^{-c_4 n^{3/4}} \leq 4 e^{-n^{1/8}/\sqrt{2}} 
\end{align*}
for sufficiently large $n$ and $p = o(\log(n))$.
\end{proof}

As mentioned above, \cref{thm:qaoamain} is a simplification of the following technical lemma. 

\begin{lemma} \label{lem:qaoa_local_approx}
Consider the output state $\rho_p$ from \cref{eqn:dense_state} for a family of $\ell$-local commuting Hamiltonians $H^{(1)},\ldots H^{(p)}$ satisfying \cref{eqn:subset_condition} for some $\alpha \in [0, 1)$ and $\tilde C$ such that  $n\geq \frac{k}{\tilde C^{1/\alpha}}$.
Then, $\rho_p$ is $(k_p,\eps_p)$-local for 
\begin{align*}
k_p \leq 2 \cdot (20 \cdot \ell \tilde C)^{p} n^{1 - (1 - \alpha)^p /4} \,,\quad 
\eps_p  &\leq 3 \, e^{-n^{1/8}/\sqrt{2}} + 6 \sum_{j = 1}^p e^{ - 4  \cdot (20 \cdot l)^{j-1} \tilde C^j  n^{1 - (1-\alpha)^j/4} + \sqrt{n} + 2 (j - 1) \log(2 n) k_{j-1}} \,.
\end{align*}
\end{lemma}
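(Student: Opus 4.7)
The plan is to prove this by induction on $p$, maintaining at every layer a local-operator approximation to $\rho_j$ together with a good bound on its total local norm. The base case $j = 0$ handles the product starting state, and the inductive step analyzes how applying one more unitary $U_{j+1} = e^{-\i H^{(j+1)}}$ spreads the locality and degrades the approximation. Throughout, the two parameters I would track are (i) the locality $k_j$, and (ii) the operator-norm error $\eps_j$, together with an auxiliary bound on $\tln$ that prevents uncontrolled blow-up from the commutators that arise when expanding the exponentials.

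\textbf{Base case.} Since $\rho_0 = \bigotimes_i \proj{\psi_0}_i$ is a pure product state, it is the unique maximum-energy eigenstate of the $1$-local Hamiltonian $H_0 = \frac1n \sum_i \proj{\psi_0}_i$, which has spectrum in $[0,1]$ and a spectral gap $1/n$. As in \cref{ex:plusstate}, $H_0^s$ is $s$-local with $\tln \le 1$, and $\|\rho_0 - H_0^s\| \le (1-1/n)^s$. Using \cref{lem:approx_power} with $s \approx n^{5/4}$ and $a \approx n^{1/2}$ (or whichever matches the target $k_p$ recursion), I obtain an approximation of the form $P_{s,a}(H_0)$ whose locality is $a$ and whose error is $\le 2 e^{-(a+1)^2/(2s)}$, so that $\rho_0$ is $(\sqrt{n}, \, 3e^{-n^{1/8}/\sqrt 2})$-local with controlled $\tln$. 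This is the source of the $3 e^{-n^{1/8}/\sqrt 2}$ term in $\eps_p$ and of $k_0 \approx \sqrt n$.

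\textbf{Inductive step.} Given that $\rho_j$ is approximated by a $k_j$-local operator $R_j$, I approximate $U_{j+1} R_j U_{j+1}^\dagger$ by truncating the Taylor series
\[
e^{-\i H^{(j+1)}} R_j e^{\i H^{(j+1)}} \;=\; \sum_{a,b \ge 0} \frac{(-\i)^a \i^b}{a!\,b!}\, (H^{(j+1)})^a R_j (H^{(j+1)})^b
\]
at some degree $D_j$. The crucial point, which is presumably the content of \cref{lem:comm_locality_spread}, is that since $H^{(j+1)}$ is a \emph{commuting} $\ell$-local Hamiltonian, I may use the BCH/commutator expansion and keep only those local terms of $H^{(j+1)}$ whose support touches the support of (a summand of) $R_j$. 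Thus each term in the truncated Taylor series is supported on at most $k_j + \ell D_j$ qubits, and its norm is controlled by $\|H^{(j+1)}_S\|$ for $S$ the support of that summand, where \cref{eqn:subset_condition} bounds this by $\tilde C n^\alpha |S|^{1-\alpha} \le \tilde C n^\alpha (k_j + \ell D_j)^{1-\alpha}$. Hence the truncation error is suppressed like $e^{-\Omega(D_j)}$ once $D_j$ is chosen significantly larger than $\tilde C n^\alpha k_j^{1-\alpha}$. Setting $D_j \approx 4 \cdot (20\ell)^{j-1}\tilde C^j n^{1-(1-\alpha)^{j+1}/4}$ produces the recursion $k_{j+1} = k_j + \ell D_j \le 2(20\ell \tilde C)^{j+1} n^{1-(1-\alpha)^{j+1}/4}$, and the per-layer error picks up the exponential in the lemma statement, along with an extra $\sqrt n$ and $2j\log(2n) k_{j-1}$ coming from telescoping the $\tln$ bounds via \cref{lem:opbound}.

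\textbf{Main obstacle.} The delicate part is the propagation of $\tln$: when I multiply $(H^{(j+1)})^a$ into $R_j$, the naive norm bound would multiply the number of local summands by a factor polynomial in $n$ at each step, destroying the error bound. The way around this is to work with the unique Pauli-support decomposition of \cref{eqn:unique_decomp}, apply \cref{lem:opbound} to keep the summand norms under control, and use the commuting structure of each $H^{(j+1)}$ so that only local terms of $H^{(j+1)}$ intersecting the support of $R_j$ contribute nontrivially. Balancing the Taylor truncation degree $D_j$ against the subset-norm bound \cref{eqn:subset_condition} to produce exactly the exponents $1-(1-\alpha)^j/4$ in the final bound is the main bookkeeping challenge; once this is set up, summing the per-layer errors geometrically and collecting the $\sqrt n$ and $\log(2n) k_{j-1}$ overheads gives the stated expression for $\eps_p$, while telescoping $k_{j+1} = k_j + \ell D_j$ gives the stated bound on $k_p$.
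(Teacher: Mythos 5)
Your overall strategy --- induction on the layer index, tracking the triple $(k_j, \eps_j, \tln)$, and using a Taylor truncation of $e^{-\i H^{(j+1)}}$ together with the commuting structure and the subset-norm bound from \cref{eqn:subset_condition} --- matches the paper's proof of this lemma, and your description of the inductive step (\cref{lem:qaoa_spread}/\cref{lem:comm_locality_spread}) is essentially right, including the use of the unique Pauli-support decomposition and \cref{lem:opbound} to keep $\tln$ under control.

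However, your base case has a genuine gap. You propose approximating $\rho_0$ by $P_{s,a}(H_0)$ with $s \approx n^{5/4}$ and $a \approx n^{1/2}$, claiming locality $\sqrt n$ and error $3e^{-n^{1/8}/\sqrt 2}$. With those parameters the Chebyshev truncation error from \cref{lem:approx_power} is $2 e^{-(a+1)^2/(2s)} \approx 2 e^{-1/(2n^{1/4})} \to 2$, which is useless. More importantly, no choice of $(a,s)$ with $P_{s,a}$ applied directly to $H_0$ can simultaneously deliver the three things you need: small $\tln$ (which forces $a$ small, since $\tln \leq e^a$), small Chebyshev error (which forces $a^2/s$ large, hence $s$ small once $a$ is small), and small error $\|\rho_0 - H_0^s\| \leq e^{-s/n}$ (which forces $s$ large). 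The paper resolves this by first raising $H_0$ to a power $m$ and then applying $P_{s,a}$ to $H_0^m$ rather than to $H_0$ itself: the relevant operator is $P_{s,a}(H_0^m)$ with $m \approx n^{1/4}$, $a \approx n^{1/2}$, $s \approx n^{7/8}/\sqrt 2$. This decouples the Chebyshev degree $a$ (which alone controls $\tln \leq e^a$ and the truncation error $e^{-a^2/(2s)}$) from the effective power $ms$ (which controls the approximation error $e^{-ms/n}$ of $\rho_0$ by powers of $H_0$), at the cost of pushing the locality up to $am = 2n^{3/4}$ rather than the $\sqrt n$ you claim. This is exactly the content of \cref{lem:starting_state_approx}, which you need in order to obtain the $k_0 = 2n^{3/4}$, $\eps_0 = 3 e^{-n^{1/8}/\sqrt 2}$, $\tln_{\eps_0}(\rho_0) \leq 2 e^{\sqrt n}$ that feed into the recursion you correctly set up for the inductive step.
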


\begin{proof}
For $i \leq p$ we define the intermediate states $\rho_i$ in the obvious way, i.e.
\begin{align*}
\rho_i = \Big( U_i \cdots U_1\Big) \rho_0  \Big( U_1^\dagger  \cdots U_i^{\dagger}\Big) \,.
\end{align*}
We now claim that each $\rho_i$ is $(k_i, \eps_i)$-local with
\begin{align*}
k_i &\leq 2 \cdot (20 \cdot \ell \tilde C)^{i} n^{1 - (1 - \alpha)^i /4} \,,\\
\eps_{i} &\leq 3\, e^{-n^{1/8}/\sqrt{2}} + 6 \sum_{j = 1}^i e^{ - 4 \cdot (20 \cdot l)^{j-1} \tilde C^j  n^{1 - (1-\alpha)^j/4} + \sqrt{n} + 2 (j - 1) \log(2 n) k_{j-1}}  \,, \\
\tln_{\eps_i}(\rho_i) &\leq 2\, e^{\sqrt{n} + 2 i \log(2n) k_i} \,. \numberthis \label{eqn:inductive_claim}
\end{align*}
We prove this by induction.
The base case $i = 0$, i.e.~the approximation of the starting state $\rho_0$, follows from \cref{lem:starting_state_approx}.
For the inductive step, we will make use of a simplification of \cref{lem:comm_locality_spread}, stated as \cref{lem:qaoa_spread} below.
Concretely, suppose that \cref{eqn:inductive_claim} holds for $\rho_i$.
Since 
\begin{align*}
\rho_{i+1} = U_{i+1} \rho_i U_{i+1}^\dagger \,,
\end{align*}
we can apply \cref{lem:qaoa_spread} with 
\begin{align*}
\delta = (1 - \alpha)^i / 4 \quad \tand \quad C = 2\cdot (20\cdot l \tilde C)^{i} \,.
\end{align*}
This implies the bounds in \cref{eqn:inductive_claim} after minor simplifications.
\end{proof}

We now show the two missing statements in the proof of \cref{lem:qaoa_local_approx}: \cref{lem:starting_state_approx} for the base case of the induction and \cref{lem:qaoa_spread} for the inductive step.
We will show both of these statements in slightly more generality than is required for \cref{lem:qaoa_local_approx}, with additional parameters that could be optimised to obtain tighter bounds in \cref{lem:qaoa_local_approx} for specific applications.

For the base case of the induction, we need to approximate the starting state $\rho_0$.
By assumption, this is a pure product state.
Since all pure product states are related to each other by 1-local unitaries, it suffices to show the lemma for any one particular product state.
For simplicity, we show it for $\rho_0 = \proj{+}^{\ot n}$, which we have already considered in \cref{ex:plusstate}.
There, we gave the following simple approximation:
we defined the local Hamiltonian $H_0 = \frac{1}{n} \sum_{i = 1}^n \proj{+}_i$ and noted that $\proj{+}^{\ot n}$ can be approximated by powers of this Hamiltonian.
Here, we will require a better tradeoff between locality and approximation error.
This will be achieved using the approximation to the powering function established in \cref{lem:approx_power}.
Because we additionally need to bound the total local norm of the approximation, which becomes large when using the function from \cref{lem:approx_power}, we will combine the simple powering from \cref{ex:plusstate} with the function from \cref{lem:approx_power} to achieve an approximation that has both low locality and low total local norm.
The requirement that our approximation should have small total local norm is also the reason why we cannot use \cref{lem:ground_state_approx} to approximate $\rho_0$ in this setting.

\begin{lemma} \label{lem:starting_state_approx}
The state $\rho_0 \deq \proj{+}^{\ot n}$ is $(k_0, {\eps_0})$-local with 
\begin{align*}
k_0 = 2 n^{3/4} \,, \quad 
{\eps_0} = 3e^{-(n^{1/8})/\sqrt 2}\,,\quad 
\tln_{\eps_0}(\rho_0) \leq 2 e^{\sqrt n}.
\end{align*}
\end{lemma}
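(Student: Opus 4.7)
The plan is to approximate $\proj{+}^{\otimes n}$ by a polynomial in the auxiliary Hamiltonian $H_0 \deq \frac{1}{n}\sum_i \proj{+}_i$ from \cref{ex:plusstate}, chosen so that all three stated targets (locality, operator-norm error, total local norm) are met simultaneously. Neither building block on its own suffices: pure powering $H_0^s$ preserves $\tln \leq 1$ but needs $s \sim n^{9/8}$ to reach error $e^{-n^{1/8}/\sqrt 2}$, which blows up the locality; and the Chebyshev truncation $P_{s,a}$ from \cref{lem:approx_power} applied directly to $H_0$ gives a better locality/error tradeoff but has total local norm scaling like $(1+\sqrt 2)^a$, which exceeds $e^{\sqrt n}$ once $a$ is taken large enough to control the approximation error.

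My idea is to compose the two techniques: set $G \deq H_0^m$ for some $m$ to be chosen, and then apply $P_{s,a}$ to $G$. Then $G$ still has $\proj{+}^{\otimes n}$ as its unique eigenvalue-$1$ eigenstate, the remaining spectrum lies in $[0, e^{-m/n}]$, and crucially every integer power $G^j = H_0^{mj}$ still satisfies $\tln(G^j) \leq 1$ by the argument in \cref{ex:plusstate}. I would therefore take
\begin{equation*}
R \deq P_{s,a}(G) = \sum_{k=0}^{a} p_{s,k}\, T_k(G),
\end{equation*}
which is a degree-$a$ polynomial in $G$ and hence a degree-$ma$ polynomial in $H_0$, so $(ma)$-local on the qubits.

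The operator-norm error splits by the triangle inequality, using \cref{lem:approx_power} (with $\|G\|\leq 1$) for the first piece and a direct eigenvalue bound for the second:
\begin{equation*}
\|R - \proj{+}^{\otimes n}\| \leq \|P_{s,a}(G) - G^s\| + \|G^s - \proj{+}^{\otimes n}\| \leq 2 e^{-(a+1)^2/(2s)} + e^{-ms/n}.
\end{equation*}
For the total local norm, expand $T_k(G) = \sum_r \alpha_r^{(k)} G^{k-2r}$ and use $\tln(G^j) \leq 1$ to get $\tln(T_k(G)) \leq A_k \deq \sum_r |\alpha_r^{(k)}|$; the Chebyshev recurrence $T_{k+1} = 2xT_k - T_{k-1}$ yields $A_{k+1} \leq 2A_k + A_{k-1}$ and hence $A_k \leq (1+\sqrt 2)^k$. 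Since $\{p_{s,k}\}$ is a probability distribution and $A_k$ is monotone in $k$, this gives $\tln(R) \leq (1+\sqrt 2)^a$.

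It remains to balance the three parameters. Choose $a = \lfloor \sqrt n / \log(1+\sqrt 2)\rfloor$ so that $\tln(R) \leq e^{\sqrt n} \leq 2 e^{\sqrt n}$. The first error constraint $2e^{-(a+1)^2/(2s)} \leq e^{-n^{1/8}/\sqrt 2}$ then pins down $s = \Theta(n^{7/8})$, and the second, $e^{-ms/n} \leq e^{-n^{1/8}/\sqrt 2}$, forces $m = \Theta(n^{1/4})$; the locality $ma = \Theta(n^{3/4})$ fits inside $2 n^{3/4}$ and the two error contributions add to at most $3 e^{-n^{1/8}/\sqrt 2}$. The main technical obstacle is precisely this calibration: because $1/\log(1+\sqrt 2) \approx 1.14$ is only slightly above $1$, the total-local-norm budget $2e^{\sqrt n}$ leaves very little slack, and $m$, $s$, $a$ must be tuned against each other so that all three inequalities hold with the stated constants.
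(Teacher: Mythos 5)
Your proposal takes essentially the same approach as the paper: you approximate $\proj{+}^{\otimes n}$ by $R = P_{s,a}(H_0^m)$, split the error via the triangle inequality into the Chebyshev-truncation error and the powering error, and bound $\tln(R)$ by bounding the absolute Chebyshev coefficient sums. The only (cosmetic) difference is that you derive the coefficient-sum bound $(1+\sqrt 2)^k$ from the recurrence $T_{k+1}=2xT_k - T_{k-1}$, whereas the paper estimates the closed-form coefficients directly to obtain $e^k$ and chooses $a=\lceil\sqrt n\rceil$, $m=\lceil n^{1/4}\rceil$, $s=\lceil n^{7/8}/\sqrt 2\rceil$; your sharper coefficient bound and correspondingly adjusted parameters are consistent with the stated constants.
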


\begin{proof}
Let $H_0 = \frac{1}{n} \sum_{i = 1}^n \proj{+}_i$ and consider the polynomial $P_{s,a}(x) = \sum_{j=0}^{a} p_{s,j} T_j(x)$ from \cref{lem:approx_power}.
We define the $(a \cdot m)$-local operator 
\begin{align*}
R = P_{s,a}(H_0^m) \,,
\end{align*}
where $a, m$, and $s$ will be chosen later.
We can then bound
$$\|\rho_0 - R\|\leq \|\rho_0 - (H_0^m)^s\|+\|(H_0^m)^s - P_{s,a}(H_0^m)\|\leq e^{-\frac{ms}{n}} + 2e^{-\frac{(a+1)^2}{2s}}\,,$$
where the second inequality uses \cref{ex:plusstate} and \cref{lem:approx_power}.
If we now choose $m=\lceil n^{1/4} \rceil$, $a=\lceil n^{1/2} \rceil$, and $s=\lceil \frac{1}{\sqrt{2}} n^{7/8}\rceil$, then we get that 
\begin{align*}
\frac{m s}{n} \geq \frac{1}{\sqrt{2}}n^{1/8} \quad\tand\quad 
\frac{(a+1)^2}{2s} = \frac{(\lceil n^{1/2} \rceil+1)^2}{2\lceil \frac{1}{\sqrt{2}} n^{7/8}\rceil} 
\geq n^{1/2} \frac{n^{1/2} + 1}{ \sqrt{2} n^{7/8} + 1} \geq \frac{1}{\sqrt{2}}n^{1/8} \,,
\end{align*}
where the last inequality holds because $\sqrt{2} n^{7/8} \geq n^{1/2}$.
Consequently, $R$ has locality $k_0 = a\cdot m \leq 2 n^{3/4}$ and approximates $\rho_0$ up to error ${\eps_0} \leq e^{-\frac{ms}{n}} + 2e^{-\frac{(a+1)^2}{2s}} \leq 3e^{-(n^{1/8})/\sqrt 2}$ as claimed.

To bound $\tln_{{\eps_0}}(\rho_0) = \tln(R)$, we can first bound the norms of the coefficients in the Chebyshev polynomial (\cref{def:cheby}) by
$$\sum_{r=0}^{\lfloor \frac{j}{2}\rfloor} |\alpha_r|= \sum_{r=0}^{\lfloor \frac{j}{2}\rfloor}\frac{j}{2}\cdot\frac{1}{j-r}{j-r\choose r}2^{j-2r}\leq \sum_{r=0}^{\lfloor \frac{j}{2}\rfloor}{j-r\choose r}2^{j-2r}\leq 2^j \sum_{r=0}^{\lfloor \frac{j}{2}\rfloor}{j\choose r}\frac{1}{4^r}\leq 2^j\cdot \frac{5^j}{4^j}\leq e^j.$$
The second-to-last inequality holds because $\sum_{r=0}^{\lfloor \frac{j}{2}\rfloor}{j\choose r}\frac{1}{4^r} \leq \sum_{r=0}^{j}{j\choose r}\frac{1}{4^r} = (1 + \frac{1}{4})^j$ by the binomial formula.
Therefore,
$$\tln(R) = \tln\left(\sum_{j=0}^{a} p_{s,j} T_j(H_0^m)\right)\leq \sum_{j=0}^{a}\sum_{r=0}^{\lfloor \frac{j}{2}\rfloor} p_{s,j}\cdot |\alpha_r|\cdot \tln\left(H_0^{m(j-2r)}\right) \leq \sum_{j=0}^{a} p_{s,j}\cdot e^j\leq e^a.$$
The second inequality holds because $\tln\left(H_0^{m(j-2r)}\right) = 1$ as noted in \cref{ex:plusstate}, and the last inequality holds because $\sum_{j=0}^{a} p_{s,j} \leq 1$ as $p_{s, j}$ forms a probability distribution.
Inserting $a = \lceil n^{1/2} \rceil \leq n^{1/2} + \frac{1}{2}$, we get $e^a \leq 2 e^{n^{1/2}}$.
\end{proof}

For the inductive step in \cref{lem:qaoa_local_approx}, we need to analyse how the locality and approximation error change under application of one unitary $U_i$.
This is done in the following lemma, which is a simplification of \cref{lem:comm_locality_spread}.
If one wishes to obtain the tightest possible bounds in \cref{lem:qaoa_local_approx}, one can use \cref{lem:comm_locality_spread} directly instead of this simplified statement.

\begin{lemma} \label{lem:qaoa_spread}
Let $\rho$ be a $(k, \eps)$-local quantum state with $k \leq C n^{1 - \delta}$ for some $C > 1, \delta > 0$, and $H$ an $\ell$-local commuting Hamiltonian satisfying \cref{eqn:subset_condition} for some $\alpha \in [0, 1)$ and $n \geq \frac{k}{\tilde C^{1/\alpha}}$.
Then, the state 
\begin{align*}
\rho' \deq e^{-\iota H} \rho e^{\iota H}
\end{align*}
is $(k',\eps')$-local for 
\begin{align*}
k' \leq 20 \ell \tilde C C n^{1 - (1 - \alpha) \delta} \,, \quad 
\eps' = 3 e^{-4 \tilde C C n^{1 - (1 - \alpha) \delta}} \tln_{\eps}(\rho) + \eps \,, \quad 
\tln_{\eps'}(\rho') \leq e^{2\log(2n) k'} \tln_{\eps}(\rho) .
\end{align*}
\end{lemma}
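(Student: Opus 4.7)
The plan is to Taylor-truncate $e^{-\iota H}$ to produce a local approximation, using the commuting structure of $H$ together with the subset norm constraint to control both the truncation degree and the approximation error.

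First, let $R = \sum_i R_i$ be a $k$-local decomposition of the approximation to $\rho$ achieving $\tln_\eps(\rho)$, with each $R_i$ supported on a set $S_i$ of size at most $k$. Since unitary conjugation is an isometry in operator norm, $\|\rho' - e^{-\iota H} R e^{\iota H}\| \leq \eps$, so it suffices to locally approximate $R' \deq \sum_i e^{-\iota H} R_i e^{\iota H}$. The key observation is that, because $H$ is commuting, we may split $H = H_{S_i} + H_{\bar S_i}$ via \cref{def:subset_op}: the two pieces commute with each other, and $H_{\bar S_i}$ is supported on qubits disjoint from $S_i$ and therefore also commutes with $R_i$. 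Hence $e^{-\iota H} R_i e^{\iota H} = e^{-\iota H_{S_i}} R_i e^{\iota H_{S_i}}$. Crucially, \cref{eqn:subset_condition} together with $|S_i| \leq k \leq C n^{1-\delta}$ gives $\|H_{S_i}\| \leq \tilde C n^\alpha k^{1-\alpha} \leq \tilde C C n^{1-(1-\alpha)\delta} \deq h$.

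I would then replace each $e^{\pm \iota H_{S_i}}$ by its degree-$D$ Taylor polynomial $T_D^{\pm}(H_{S_i})$. Choosing $D = \Theta(\ell \tilde C C n^{1-(1-\alpha)\delta})$ large enough (say $D \geq 10 h$) guarantees a truncation error $\eta_D \leq 2(eh/D)^D \leq e^{-4\tilde C C n^{1-(1-\alpha)\delta}}$ via a Stirling estimate on $D!$. Each monomial $H_{S_i}^j$ expands into products of $j$ local terms of $H$, each of which touches $S_i$ and thus contributes at most $\ell-1$ qubits outside $S_i$, so $H_{S_i}^j$ is supported on at most $|S_i| + j(\ell-1)$ qubits. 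Consequently $B_i \deq T_D^-(H_{S_i}) R_i T_D^+(H_{S_i})$ is supported on at most $|S_i| + 2D(\ell-1) \leq 20 \ell \tilde C C n^{1-(1-\alpha)\delta} = k'$ qubits. A triangle inequality on $e^{-\iota A} R_i e^{\iota A} - T_D^- R_i T_D^+$ with $A = H_{S_i}$, using $\|T_D^{\pm}\| \leq \|e^{\pm \iota H_{S_i}}\| + \eta_D \leq 2$, yields error at most $3 \eta_D \|R_i\|$; summing over $i$ and adding the initial $\eps$ gives $\|\rho' - \sum_i B_i\| \leq \eps + 3\eta_D \tln_\eps(\rho) = \eps'$, establishing the $(k',\eps')$-locality of $\rho'$.

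For the total local norm, the decomposition $\sum_i B_i$ already satisfies $\sum_i \|B_i\| \leq 4\tln_\eps(\rho)$; to match the stated form one can additionally re-expand $\sum_i B_i$ in the canonical Pauli basis from \cref{eqn:unique_decomp}. There are at most $(2n)^{k'}$ subsets of $[n]$ of size at most $k'$, and each Pauli component has norm at most $2^{k'} \|\sum_i B_i\|$ by \cref{lem:opbound}, multiplying to the claimed $e^{2\log(2n) k'} \tln_\eps(\rho)$ bound. The main obstacle is the balance in choosing $D$: without the subset norm constraint, $\|H_{S_i}\|$ could be as large as $\|H\| = \Theta(n)$, forcing $D = \Theta(n)$ and giving trivial locality $k' = n$. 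It is precisely the dense-but-bounded nature of $H$ encoded in \cref{eqn:subset_condition} that keeps $D$ small enough for the locality growth to remain controlled, at the cost of the exponent picking up the $(1-\alpha)$ factor visible in $k'$.
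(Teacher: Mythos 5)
Your approach follows the same route as the paper's (\cref{lem:comm_locality_spread}): exploit commutativity to reduce $e^{-\iota H}R_i e^{\iota H}$ to $e^{-\iota H_{S_i}}R_i e^{\iota H_{S_i}}$, bound $\|H_{S_i}\|$ via \cref{eqn:subset_condition}, Taylor-truncate, and re-expand in the Pauli basis for the total local norm. Your locality and error bounds check out.

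There is, however, a genuine conceptual error in your handling of the total local norm. You assert that $H_{S_i}^j$ is \emph{supported} on at most $|S_i|+j(\ell-1)$ qubits, and hence that each $B_i = T_D^-(H_{S_i})\,R_i\,T_D^+(H_{S_i})$ is supported on at most $k'$ qubits, so that $\sum_i B_i$ is ``already'' a local decomposition with $\sum_i\|B_i\|\leq 4\,\tln_\eps(\rho)$. This is false precisely in the dense regime this lemma is designed for: $H_{S_i}$ is a \emph{sum} of $\ell$-local terms each touching $S_i$, and that sum can have support on all $n$ qubits. The same holds for $H_{S_i}^j$ and for $B_i$: each is a $k'$-\emph{local operator} (a sum of pieces of small support), but is not itself supported on a set of $k'$ qubits. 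Consequently $\{B_i\}_i$ is \emph{not} an admissible local decomposition of $\sum_i B_i$, and the purported bound $\tln_{\eps'}(\rho')\leq 4\,\tln_\eps(\rho)$ --- which, if true, would trivialise the induction in \cref{lem:qaoa_local_approx} --- is unavailable. The paper explicitly flags this distinction in the proof of \cref{lem:comm_locality_spread} (the parenthetical remark explaining why $\tilde R^i$ carries a superscript). The Pauli re-expansion via \cref{eqn:unique_decomp} and \cref{lem:opbound}, which you describe as being done merely ``to match the stated form,'' is therefore not cosmetic: it is the step that actually produces the $e^{2\log(2n)k'}$ factor, and the $\tln$ bound must rest on it alone. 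A smaller inconsistency: you write $D=\Theta(\ell\tilde C C n^{1-(1-\alpha)\delta})$ in one place, which would make $k'$ scale as $\ell^2$; the choice $D\approx 10h$ with $h=\tilde C C n^{1-(1-\alpha)\delta}$ and no extra factor of $\ell$, which you also state, is the correct one.
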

\begin{proof}
Let $R$ be a $k$-local operator within $\eps$-distance from $\rho$.
We apply \cref{lem:comm_locality_spread} to $R$ for the Hamiltonian $H$, which satisfies $\norm{H_S} \leq \tilde C n^\alpha |S|^{1 - \alpha}$.
Setting $\mu=1+4/e$, we get that $e^{-\iota H} R e^{\iota H}$ is a $(k', \tilde\eps)$-local operator with
\begin{align*}
k' \leq 2\ell \lceil 7\tilde C n^\alpha \kappa^{1 - \alpha} \rceil + k \,, \quad 
\tilde\eps = 3 e^{-4\,\tilde C n^\alpha \kappa^{1 - \alpha}} \tln(R) \,, \quad 
\tln_{\tilde\eps}\left(e^{-\iota H} R e^{\iota H} \right) \leq (2 n)^{k'} \left( \tln(R) + \tilde \eps \right) \,,
\end{align*}
for $\kappa \geq k$ to be chosen later.
Since $e^{-\iota H} R e^{\iota H}$ is $(k', \tilde\eps)$-local, it immediately follows from the triangle inequality that $e^{-\iota H} \rho e^{\iota H}$ is $(k', \eps')$-local for $\eps' = \tilde\eps + \eps$.
We can now simplify the resulting bounds as follows choosing $\kappa = C^{\frac{1}{1-\alpha}} n^{1-\delta} \geq k$:
\begin{enumerate}
\item Because $l\geq 1$, $n \geq \frac{k}{\tilde C^{1/\alpha}}$, and $\kappa \geq k$, it is clear that
$
2 \ell \lceil 7\tilde C n^\alpha \kappa^{1 - \alpha} \rceil + k \leq 20 \ell\tilde C n^\alpha \kappa^{1 - \alpha}$.
Inserting $\kappa = C^{\frac{1}{1-\alpha}} n^{1-\delta}$, we get the claimed bound on $k'$.
\item The bound on $\eps'$ follows immediately from $\eps' = \tilde\eps + \eps$, $\tln(R) = \tln_{\eps}(\rho)$, and $\kappa = C^{\frac{1}{1-\alpha}} n^{1-\delta}$.
\item To bound $\tln_{\eps'}(\rho')$, observe that since $\rho$ is a quantum state, we are only interested in approximations where $\tilde \eps \leq 1 \leq (2n)^{k'} \tln(R)$, as otherwise the claim becomes trivial.
Combining this and inserting $\tln(R) = \tln_{\eps}(\rho)$ yields the claimed bound. \qedhere
\end{enumerate}
\end{proof}

\subsection{Concentration bounds for output states of dense Hamiltonian evolution}

Combining \cref{lem:conc_bounds_from_local} and \cref{thm:qaoamain}, we immediately obtain the following concentration bound.
One can of course also derive an analogous but tighter and more explicit statement from \cref{lem:qaoa_local_approx} instead of \cref{thm:qaoamain}.
\begin{corollary} \label{lem:qaoa_weight_conc}
Let $\rho_p$ and $c_1$ be as in \cref{thm:qaoamain}, $p=o(\log\log n)$, and $m$ be the median of $W_{\rho_p}$. Then for sufficiently large $n$ and $p$,
\begin{align*}
\pr{|W_{\rho_p}-m| > c_1^pn^{1-(1-\alpha)^p/4}} \leq 128 \, e^{-\sqrt{2} n^{1/8}}.
\end{align*}
\end{corollary}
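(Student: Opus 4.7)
The plan is to chain together \cref{thm:qaoamain} and \cref{lem:conc_bounds_from_local}: first obtain a quantitative local-approximation of $\rho_p$, then convert it directly into a Hamming-weight tail bound. There is essentially no new content beyond matching up parameters.

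First I would invoke \cref{thm:qaoamain}, whose hypotheses (commuting $\ell$-local Hamiltonians satisfying \cref{eqn:subset_condition}, with $p = o(\log\log n)$ which in particular gives $p = o(\log n)$) hold by assumption. This yields a constant $c_1 > 0$ and guarantees that $\rho_p$ is $(k_p, \eps_p)$-local with
\begin{align*}
k_p \leq c_1^p\, n^{1-(1-\alpha)^p/4}, \qquad \eps_p \leq 4\, e^{-n^{1/8}/\sqrt{2}}.
\end{align*}
Next I would note that $\rho_p$ is a \emph{pure} state: by assumption $\rho_0$ is a pure product state, and $\rho_p$ is obtained from $\rho_0$ by conjugation with the unitaries $U_1, \ldots, U_p$, which preserves purity. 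Hence the hypotheses of \cref{lem:conc_bounds_from_local} are satisfied with parameters $(k_p, \eps_p)$.

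Applying \cref{lem:conc_bounds_from_local} with these parameters immediately produces the two one-sided tail bounds $\pr{W_{\rho_p} > m + k_p} \leq 4\eps_p^2$ and $\pr{W_{\rho_p} < m - k_p} \leq 4\eps_p^2$. Using $\eps_p^2 \leq 16\, e^{-\sqrt{2}\, n^{1/8}}$ and a union bound yields
\begin{align*}
\pr{|W_{\rho_p} - m| > k_p} \;\leq\; 8\,\eps_p^2 \;\leq\; 128\, e^{-\sqrt{2}\, n^{1/8}},
\end{align*}
which is exactly the claimed bound once we recall $k_p \leq c_1^p\, n^{1-(1-\alpha)^p/4}$.

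The proof has no real obstacle; it is a straightforward corollary that packages the hard work already done in \cref{thm:qaoamain} (the inductive construction of the local approximation) together with the abstract concentration-from-locality principle of \cref{lem:conc_bounds_from_local}. The only point worth verifying is that the squaring of $\eps_p$ and the union over the two tails produce precisely the constant $128$ and the exponent $-\sqrt{2}\, n^{1/8}$ in the statement, which is a routine arithmetic check.
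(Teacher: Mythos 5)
Your proposal is correct and is precisely the argument the paper intends: the corollary is stated in the paper as an immediate combination of \cref{thm:qaoamain} and \cref{lem:conc_bounds_from_local}, with no further proof given, and your parameter-matching (purity of $\rho_p$, the factor $4\eps_p^2$ per tail, $\eps_p^2 \leq 16\,e^{-\sqrt{2}n^{1/8}}$, and the union bound producing $128$) is the routine arithmetic the authors had in mind.
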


While the above statement is about concentration with respect to Hamming weight, we can also prove concentration with respect to other observables. Let $G=\sum_i G_i$ be a classical local Hamiltonian (i.e.~a local Hamiltonian that is diagonal in the computational basis) that satisfies the following condition analogous to \cref{eqn:subset_condition}: for all $S \subset [n]$, 
$$\|G_S\|\leq D n^{\alpha'}|S|^{1-\alpha'}.$$
In this case, we can also prove a concentration bound on the expectation of $\rho_p$ with respect to $G$.
More specifically, we can define a random variable $E_{\rho_p}$ indicating the ``energy'' of $\rho_p$ according to $G$, i.e.~if we take the spectral decomposition $G = \sum g_i \Pi_i$ for orthogonal projectors $\Pi_i$, then $E_{\rho_p}$ takes value $g_i$ with probability $\tr{\Pi_i \rho_p}$.

\begin{corollary} \label{lem:qaoa_energy_conc}
Let $\rho_p$ and $c_1$ be as in \cref{thm:qaoamain} with $p=o(\log\log n)$, and $G$ and $E_{\rho_p}$ as above. Let $E$ be the median energy, i.e.~the median of $E_{\rho_p}$. Then
\begin{align*}
\pr{|E_{\rho_p}-E| > 2D c_1^{p(1-\alpha')}n^{1-(1-\alpha)^{p}(1-\alpha')/4}} \leq 128 \, e^{-\sqrt{2} n^{1/8}}\,.
\end{align*}

\end{corollary}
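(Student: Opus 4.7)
The plan is to reduce this to the local-operator-approximation framework already established in \cref{thm:qaoamain}. By that theorem, $\rho_p$ is $(k_p,\eps_p)$-local with $k_p \leq c_1^p n^{1-(1-\alpha)^p/4}$ and $\eps_p \leq 4 e^{-n^{1/8}/\sqrt 2}$. The idea is to replace the Hamming-weight projectors used in \cref{lem:conc_bounds_from_local} with projectors onto high- and low-energy eigenspaces of $G$, exploiting the fact that strings whose $G$-values differ substantially must differ on many bits.

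First I would establish a combinatorial Lipschitz-type bound on $G$: for any two basis strings $x, x' \in \bits^n$ whose symmetric difference is $\Delta \subseteq [n]$, only local terms of $G$ whose support meets $\Delta$ contribute to $G(x) - G(x')$, so $G(x) - G(x') = G_\Delta(x) - G_\Delta(x')$ and hence $|G(x) - G(x')| \leq 2 \|G_\Delta\| \leq 2 D n^{\alpha'} |\Delta|^{1-\alpha'}$ by the assumed norm condition on $G$. Contrapositively, if $|G(x) - G(x')| > t$ then $|\Delta| > (t/(2 D n^{\alpha'}))^{1/(1-\alpha')}$. I would then choose $t$ so that this exceeds $k_p$, i.e.\ $t \geq 2 D n^{\alpha'} k_p^{1-\alpha'}$. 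Using $n^{\alpha'} k_p^{1-\alpha'} \leq c_1^{p(1-\alpha')} n^{\alpha' + (1-\alpha')(1 - (1-\alpha)^p/4)} = c_1^{p(1-\alpha')} n^{1 - (1-\alpha)^p (1-\alpha')/4}$, the threshold $t = 2 D c_1^{p(1-\alpha')} n^{1-(1-\alpha)^p(1-\alpha')/4}$ in the statement is sufficient.

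Next I would mirror the proof of \cref{lem:conc_bounds_from_local}, but with $S = \{x : G(x) > E + t\}$ and $S' = \{x : G(x) \leq E\}$ in place of Hamming-weight intervals. By the previous step, every pair $(x,x') \in S \times S'$ differs on strictly more than $k_p$ positions, so $\Pi_S R \Pi_{S'} = 0$ for any $k_p$-local operator $R$; by the triangle inequality, $\|\Pi_S \rho_p \Pi_{S'}\| \leq \eps_p$. Since $\rho_p = \proj{\psi_p}$ is pure (it is a unitary applied to the pure product state $\rho_0$), the variational bound used in \cref{lem:conc_bounds_from_local} gives $\bra{\psi_p}\Pi_{S'}\ket{\psi_p} \cdot \|\Pi_S \ket{\psi_p}\|_2 \leq \eps_p$. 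As $E$ is the median of $E_{\rho_p}$, $\bra{\psi_p}\Pi_{S'}\ket{\psi_p} \geq 1/2$, and squaring yields $\pr{E_{\rho_p} > E + t} \leq 4 \eps_p^2$. The lower tail $\pr{E_{\rho_p} < E - t}$ is handled symmetrically by swapping the roles of $S$ and $S'$. A union bound and the explicit value $\eps_p \leq 4 e^{-n^{1/8}/\sqrt 2}$ give $2 \cdot 4 \eps_p^2 \leq 128 \, e^{-\sqrt{2} n^{1/8}}$, as claimed.

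The bulk of the work is already done in \cref{thm:qaoamain}, so the main (mild) obstacle is just bookkeeping the exponents when substituting $k_p$ and verifying the simplification $\alpha' + (1-\alpha')(1-(1-\alpha)^p/4) = 1 - (1-\alpha)^p(1-\alpha')/4$; conceptually, the only new ingredient beyond \cref{lem:qaoa_weight_conc} is the observation that the norm constraint on $G$ makes it a Lipschitz function of the bitstring with the appropriate modulus, allowing the Hamming-ball projectors to be replaced by energy-sublevel-set projectors in the clustering argument.
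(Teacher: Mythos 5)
Your proposal is correct and follows essentially the same route as the paper: establish the Lipschitz-type bound $|G(x)-G(x')| \leq 2\|G_\Delta\| \leq 2Dn^{\alpha'}|\Delta|^{1-\alpha'}$ via the subset operator, translate it into the fact that strings whose energies differ by more than the stated threshold must differ on more than $k_p$ positions, and then rerun the argument of \cref{lem:conc_bounds_from_local} with energy-sublevel projectors in place of Hamming-weight projectors. The only difference is cosmetic: you spell out the final variational/median/union-bound steps and the constant $128$, which the paper compresses into ``follows along the same lines as \cref{lem:conc_bounds_from_local}.''
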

\begin{proof}
Consider strings $x,y \in \bits^n$ and let $S\deq\{i:x_i\neq y_i\}$. Suppose $|S|\leq c_1^pn^{1-(1-\alpha)^p/4}$.  Expanding $G=\sum_{i} G_i$, consider the energy difference 
\begin{align*}
\tr{(\proj{x}-\proj{y})G} &= \sum_{i} \tr{(\proj{x}-\proj{y})G_i} \\
&= \sum_{\substack{i \sth G_i \textnormal{ acts non-trivially}\\\textnormal{on at least one qubit in $S$}}} \tr{(\proj{x}-\proj{y})G_i}\\
&= \tr{(\proj{x}-\proj{y})G_S}\\
&\leq 2\|G_S\|\leq 2D n^{\alpha'}|S|^{1-\alpha'}\leq 2D c_1^{p(1-\alpha')}n^{1-(1-\alpha)^{p}(1-\alpha')/4}\,.
\end{align*}
Let 
\begin{align*}
\Pi_{>E+f}=\sum_{x\in \bits^n: \bra{x}H\ket{x}> E+f}\proj{x},\quad \Pi_{\leq E}=\sum_{x\in \bits^n: \bra{x}H\ket{x}\leq E}\proj{x}\,.
\end{align*}
If $f> 2D c_1^{p(1-\alpha')}n^{1-(1-\alpha)^{p}(1-\alpha')/4}$, then by the above argument strings in the support of $\Pi_{>E+f}$ and $\Pi_{\leq E}$ differ on more than $c_1^pn^{1-(1-\alpha)^p/4}$ positions.
Thus, a $(c_1^pn^{1-(1-\alpha)^p/4})$-local operator $O$ satisfies $\Pi_{>E+f}O \Pi_{\leq E}=0$. 
The corollary now follows along the same lines as \cref{lem:conc_bounds_from_local}.   
\end{proof}

\subsection{Example: random spin models} \label{sec:sk_model}
As an example of a family of dense Hamiltonians that satisfies \cref{eqn:subset_condition}, we consider the pure $q$-spin model, which was also considered in~\cite{BGMZ22}.
The pure $q$-spin model is a random COP with cost function
\begin{align}
C_n^q(z; J) = \frac{1}{n^{(q+1)/2}}\sum_{i_1,\ldots i_q=1}^n J_{i_1,\ldots i_q} z_{i_1}\ldots z_{i_q} \,, \label{eqn:classical_spin_model}
\end{align}
where the coefficients $J = (J_{i_1,\ldots i_q})_{i_1,\ldots i_q \in [n]}$ are i.i.d.~standard Gaussian random variables $J_{i_1,\ldots i_q} \sim \mathcal{N}(0,1)$.
Here, $z_i \in \{\pm 1\}$ and the objective is to maximise $C^q(z_1, \dots, z_n)$.
This can be identified with a $q$-local Hamiltonian 
\begin{align}
H_n^q(J) = \frac{1}{n^{(q-1)/2}} \sum_{i_1,\ldots i_q=1}^n J_{i_1,\ldots i_q} \sigma^Z_{i_1}\ldots \sigma^Z_{i_q} \,. \label{eqn:hamiltonian_spin_model}
\end{align}
We note the different normalisation factors: $C_n^q(z)$ is normalised such that on average over $J$, $\max _{z} C_n^q(z ; J) = \Theta(1)$.
In contrast, $H_n$ has an additional factor of $n$, so that on average over $J$, $\norm{H_n^q(J)} = \Theta(n)$.
We use these different normalisations because the former is common in the classical literature (see e.g.~\cite{gamarnik2020low}), whereas the latter is common in the quantum literature.

The following lemma shows that this model satisfies \cref{eqn:subset_condition} with overwhelming probability, and as a result we can apply \cref{lem:qaoa_weight_conc} and \cref{lem:qaoa_energy_conc} to obtain concentration bounds.

\begin{lemma} \label{lem:pure_spin}
With probability at least $1-e^{-n}$ over the choice of $J_{i_1,\ldots i_q}\sim \mathcal{N}(0,1)$, the Hamiltonian $H_n^q(J) = \frac{1}{n^{(q-1)/2}} \sum_{i_1,\ldots i_q=1}^n J_{i_1,\ldots i_q} \sigma^Z_{i_1}\ldots \sigma^Z_{i_q}$ satisfies \cref{eqn:subset_condition} with $\alpha=\frac{1}{2}$ and $\tilde C=\sqrt{6}$ for every subset $S \subseteq [n]$.
\end{lemma}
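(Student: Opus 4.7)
My plan exploits that $H_n^q(J)$ is a classical Hamiltonian, reducing $\|H_S\|$ to a scalar Gaussian tail problem. Since every term $\sigma^Z_{i_1}\cdots\sigma^Z_{i_q}$ is diagonal in the computational basis, so is $H_S$, and therefore $\|H_S\|=\max_{z\in\{\pm 1\}^n}|\bra{z}H_S\ket{z}|$. For each fixed $z$ the diagonal entry
\[
\bra{z}H_S\ket{z}=\frac{1}{n^{(q-1)/2}}\sum_{(i_1,\ldots,i_q)\in T_S}J_{i_1,\ldots,i_q}\,z_{i_1}\cdots z_{i_q}
\]
is a linear combination of independent standard Gaussians with unit-modulus coefficients, where $T_S\deq\{(i_1,\ldots,i_q)\in[n]^q : \{i_1,\ldots,i_q\}\cap S\neq\emptyset\}$ is the set of $q$-tuples touching $S$. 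Hence $\bra{z}H_S\ket{z}$ is a centered Gaussian with variance exactly $|T_S|/n^{q-1}$.

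The key quantitative input is a sharp variance bound. Counting directly gives $|T_S|=n^q-(n-|S|)^q$, and applying Bernoulli's inequality $(1-x)^q\geq 1-qx$ at $x=|S|/n$ yields $|T_S|\leq q|S|n^{q-1}$; consequently the variance is at most $q|S|$. This $|S|$-linear variance is precisely what matches the target $\|H_S\|\leq\sqrt{6n|S|}$ once the Gaussian supremum over $\{\pm1\}^n$ picks up the extra factor of $\sqrt{n}$ from the log-cardinality $n\log 2$.

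With the variance in hand, I apply the standard Gaussian tail bound $\Pr[|\bra{z}H_S\ket{z}|>\sqrt{6n|S|}]\leq 2\exp(-3n/q)$ for each fixed pair $(S,z)$, and then union-bound over the $2^n$ bitstrings $z$ and the at most $2^n$ non-empty subsets $S\subseteq[n]$, giving total failure probability at most $2\cdot 4^n\exp(-3n/q)$. For $q$ a fixed small constant (the relevant regime for $q$-spin models), the exponent $3n/q$ dominates the $4^n=e^{(2\log 2)n}$ union-bound cost and yields the asserted $1-e^{-n}$ probability with $\tilde C=\sqrt 6$.

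The main care lies in tracking the constants through the union bound so that the Gaussian tail indeed beats the $4^n$ factor for the stated $\tilde C$; the step that is easiest to get wrong is the Bernoulli estimate $|T_S|\leq q|S|n^{q-1}$, since it is loose for $|S|$ close to $n$ and one is tempted to tighten it using $|T_S|\leq\min(q|S|n^{q-1},n^q)$, which improves the concentration for large $|S|$. If the direct bound turns out to be too loose to hit the specific constant $\sqrt 6$ uniformly in $q$, the argument can be sharpened by applying the Borell--TIS concentration inequality to the Gaussian process $(\bra{z}H_S\ket{z}/\sqrt{n|S|})_{S,z}$ together with a Dudley-type estimate of the expected supremum derived from the same variance bound.
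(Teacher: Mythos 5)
Your approach matches the paper's at every conceptual step: exploit diagonality of $H_S$ to write $\norm{H_S} = \max_{z \in \{\pm 1\}^n} |\bra{z} H_S \ket{z}|$, observe that for fixed $z$ this is a centered Gaussian whose variance is controlled by the number of $q$-tuples touching $S$, apply the Gaussian tail bound, and union-bound over the $2^n$ strings $z$ and $2^n$ subsets $S$. Where you diverge is the tuple count. The paper bounds this number by $|S|\binom{n}{q-1} \leq |S|n^{q-1}$; your Bernoulli estimate $n^q-(n-|S|)^q \leq q|S|n^{q-1}$ is the correct bound for the ordered sum $\sum_{i_1,\ldots,i_q=1}^n$ and carries an extra factor of $q$. (The paper's intermediate bound $\ell \leq |S|n^{q-1}$ in fact already fails for $|S|=1$ and $q\geq 2$, since $n^q-(n-1)^q \sim q\,n^{q-1}$; the bound without the $q$ would only hold if the sum were over unordered $q$-sets.)

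That extra $q$, however, breaks your final bookkeeping, and this is a genuine gap. With variance at most $q|S|$, the tail probability at threshold $\sqrt{6n|S|}$ is $2e^{-3n/q}$, and the $4^n = e^{(2\ln 2)n}$ union-bound cost dominates as soon as $q\geq 2$ (since $3/q \leq 3/2 < 1+2\ln 2 \approx 2.39$). So your claim that the constants close ``for $q$ a fixed small constant'' is false in the only regime the paper uses the lemma, namely $q\geq 4$. The Borell--TIS/Dudley fallback you sketch would not rescue the constant either: the variance proxy and the metric entropy of $\{\pm 1\}^n$ are unchanged, so any route of this type yields $\tilde C = \Theta(\sqrt q)$. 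The clean repair is simply to take $\tilde C \geq \sqrt{2q(1+2\ln 2)}$, a $q$-dependent constant, which is harmless for every downstream application (all that is used is $\tilde C = O(1)$ for fixed $q$). In short: right method, and your tuple count is actually sharper and more honest than the paper's, but the stated constant $\sqrt 6$ with failure probability $e^{-n}$ is not established by your argument as written; you should state a $q$-dependent $\tilde C$ rather than claim $\sqrt 6$ suffices.
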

\begin{proof}
Fix any subset $S \subseteq [n]$.
Recall the definition of the subset Hamiltonian from \cref{def:subset_op}, and define analogously 
\begin{align*}
C_{n, S}^q(z; J) = \frac{1}{n^{(q+1)/2}} \sum_{\{i_1,\ldots i_q\}\cap S \neq \emptyset}^n J_{i_1,\ldots i_q} z_{i_1}\dots z_{i_q} \,.
\end{align*}
Because all terms of $H_{n, S}^q$ are proportional to tensor products of Pauli-Z operators, it is easy to see that $\norm{H_{n, S}^q} = n \cdot \max_{z_1, \dots, z_n \in \{\pm 1\}} C_{n, S}(z; J)$.
For any fixed choice of $z_1, \dots, z_n \in \{\pm 1\}$, the random variable $C_{n, S}(z; J)$ is a sum of $\ell \leq |S| {n \choose q-1} \leq |S| n^{q-1}$ standard Gaussians with a normalisation factor $\frac{1}{n^{(q+1)/2}}$, and is therefore distributed as $\mathcal{N}(0,\ell/n^{q+1})$.
By the standard upper deviation inequality for Gaussians, we have that 
\begin{align*}
\prs{J_{i_1,\ldots i_q} \sim \mathcal{N}(0,1)}{ C_{n, S}^q(z; J) \geq \sqrt{6 |S| / n}} \leq e^{-3 n} \,.
\end{align*}
Since we are interested in upper-bounding the probability that $C^q_S(z; J) \geq \sqrt{6 |S|/n}$ \emph{for any} choices of $z_1, \dots, z_n$ and $|S|$, we can apply the union bound over the possible $2^n \cdot 2^n \leq e^{2n}$ choices of $z_1, \dots, z_n$ and $|S|$.
We therefore see that the probability that \cref{eqn:subset_condition} is violated (for any $S \subseteq [n]$) is at most $e^{-3n}\cdot e^{2n} = e^{-n}$ as claimed.
\end{proof}

\cite{BGMZ22} also consider a mixed $q$-spin model, which is a sum over pure $j$-spin models for $j \leq q$. Specifically, the cost function can be written as 
\begin{align*}
C^{q, \setft{mixed}}_n(z; J) = \sum_{j = 1}^q c_j \, C^j_n(z; J) \,,
\end{align*}
where $c_j$ are arbitrary real coefficients and $C^j_n$ is as defined in \cref{eqn:classical_spin_model}.
We can again  associate a Hamiltonian $H^{q, \setft{mixed}}_n$ with this cost function.
The following corollary follows immediately from \cref{lem:pure_spin} by the triangle inequality.
\begin{corollary} \label{lem:mixed_spin}
With probability at least $1 - e^{-n}$, the Hamiltonian $H^{q, \setft{mixed}}_n$ satisfies \cref{eqn:subset_condition} for $\alpha = \frac{1}{2}$ and $\tilde C = \sqrt{6} \sum |c_j|$.
\end{corollary}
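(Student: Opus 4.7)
The plan is to derive the corollary as a direct consequence of \cref{lem:pure_spin} combined with the triangle inequality and a union bound. The key observation is that the subset operation $H \mapsto H_S$ of \cref{def:subset_op} is linear: since $H^{q,\setft{mixed}}_n = \sum_{j=1}^q c_j H^j_n$, the $j$-local terms of $H^{q,\setft{mixed}}_n$ that touch $S$ are exactly the $c_j$-weighted sum of the $j$-local terms of each $H^j_n$ that touch $S$. Therefore
\begin{align*}
(H^{q,\setft{mixed}}_n)_S = \sum_{j=1}^q c_j (H^j_n)_S,
\end{align*}
and the triangle inequality gives $\|(H^{q,\setft{mixed}}_n)_S\| \leq \sum_{j=1}^q |c_j|\, \|(H^j_n)_S\|$.

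Next, I would invoke \cref{lem:pure_spin} for each $j \in \{1,\dots,q\}$ individually. Each $H^j_n$ has i.i.d.\ Gaussian coefficients, so \cref{lem:pure_spin} gives $\|(H^j_n)_S\| \leq \sqrt{6}\, n^{1/2} |S|^{1-1/2}$ for every $S \subseteq [n]$ with probability at least $1 - e^{-n}$. By a union bound over the constantly many values $j = 1,\dots,q$, all $q$ pure spin bounds hold simultaneously except with probability at most $q e^{-n}$. Since $q$ is a fixed constant, this can be absorbed back into an $e^{-n}$ failure probability either by slightly sharpening the Gaussian tail in the proof of \cref{lem:pure_spin} (there is ample slack in the $e^{-3n}$ bound used there) or by a trivial adjustment of the leading constant for large enough $n$.

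Combining these two steps, on the high-probability event we obtain
\begin{align*}
\|(H^{q,\setft{mixed}}_n)_S\| \;\leq\; \sum_{j=1}^q |c_j|\cdot \sqrt{6}\, n^{1/2} |S|^{1/2} \;=\; \Bigl(\sqrt{6}\textstyle\sum_{j=1}^q |c_j|\Bigr)\, n^{1/2}\,|S|^{1-1/2},
\end{align*}
which is exactly \cref{eqn:subset_condition} with $\alpha = 1/2$ and $\tilde C = \sqrt{6} \sum_j |c_j|$, as claimed. There is no real obstacle here; the only thing to be careful about is the bookkeeping of the union bound, which is minor since $q = O(1)$ is part of the setup.
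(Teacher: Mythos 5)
Your proof is correct and takes essentially the same route as the paper, which simply asserts that the corollary follows from \cref{lem:pure_spin} by the triangle inequality. You are somewhat more careful than the paper in noting that the naive union bound over $j=1,\dots,q$ gives failure probability $qe^{-n}$ rather than $e^{-n}$, and you correctly observe there is ample slack in the tail bound of \cref{lem:pure_spin} to absorb the constant factor $q$.
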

We can use this property of the (mixed) random spin model to obtain concentration bounds for the output states of the QAOA applied to the COPs $C^{q}_n(z; J)$ and $C^{q, \setft{mixed}}_n(z; J)$.
This in turn can also be used to prove limitations on the success probability of the QAOA on these COPs.
We spell this out in detail in \cref{cor:qaoa_spin_model}.

\section{Limitations on dense evolutions for constraint optimisation problems} \label{sec:limitations}

Using our local operator approximations and concentration bounds for the output states of dense evolutions, we can show that such states have limitations as optimisers for COPs.
We begin by introducing a structural property of random COPs, called the overlap gap property (OGP), that roughly says that good solutions to a COP must cluster, i.e.~different good solutions must either be close to each other or far from each other.
We then combine the OGP with our concentration results and the symmetry of the QAOA output to show that for most instances of random spin models, the QAOA can only produce a good solution with negligible probability.

\subsection{Overlap gap property and existence of high-weight sets for local quantum optimisers}

We consider an objective function $C_n(z)$ for $z = (z_1, \dots, z_n) \in \pmset^n$ that we want to maximise.
We begin by recalling a few general definitions from~\cite{gamarnik2020low}, adapted to the case where algorithms for COPs output probability distributions or quantum states rather than a single element of $\pmset^n$.
\begin{definition}
For parameters $\mu \in \R$ and $\delta \in [0,1]$, we say that a probability distribution $P$ over $\pmset^n$ $(\mu,\delta)$-optimises the objective $C_n(z)$ if
\begin{align*}
\prs{z \sim P}{C_n(z) \geq \mu} \geq 1 - \delta \,.
\end{align*}
We will use the same notation for quantum states $\rho$, which we identify with a probability distribution over $\pmset^n$ in the natural way, i.e.~$\prs{\rho}{(-1)^{x}} = \bra{x} \rho \ket{x}$ for $x \in \bits^n$.
\end{definition}

\begin{definition}\label{def:OGP}
An objective function $C_n(z)$ satisfies the $(\mu,\nu_1,\nu_2)$\emph{-overlap gap property (OGP)} with parameters $\mu \in \R$ and $0\leq\nu_1<\nu_2\leq 1$ if for every $z_1,z_2\in\pmset^n$ satisfying
$C_n(z_1) \geq \mu$ and $C_n(z_2) \geq \mu$, we have that
$\frac{1}{n} |\langle z_1, z_2 \rangle| \in [0,\nu_1]\cup [\nu_2,1]$.
Here, $\langle z_1, z_2 \rangle$ denotes the usual inner product of vectors.
\end{definition}

Suppose that the objective function $C_n(z)$ has the $(\mu,\nu_1,\nu_2)$-OGP.
Then, we can define the set of ``good outputs''
\begin{align*}
G_n = \{x \in \bits^n \;|\; C_n((-1)^{x}) \geq \mu \} \,.
\end{align*}
Because $C_n(z)$ has the $(\mu,\nu_1,\nu_2)$-OGP, any two $x, x' \in G_n$ must satisfy 
$
\frac{1}{n} |\langle (-1)^x, (-1)^{x'} \rangle| \in [0,\nu_1]\cup [\nu_2,1] \,.
$
Since the Hamming distance between $x$ and $x'$ is given by $|x - x'| = \frac{n - \langle (-1)^x, (-1)^{x'} \rangle}{2}$, this implies that
\begin{align*}
|x - x'| \in \left[ 0, \tilde \nu_1 \cdot n \right] \cup \left[ \tilde \nu_2 \cdot n , n\right]
\end{align*}
for $\tilde \nu_1 \deq \frac{1 - \nu_2}{2} < \tilde \nu_2 \deq \frac{1 - \nu_1}{2}$.
Assuming that $2 \tilde \nu_1 < \tilde \nu_2$, we can partition $G_n = \cup_i S^{i}_n$ into sets (or clusters) $S^i_n$ such that for all $i \neq j$:
\begin{align}
x, x' \in S^{i}_n \implies |x - x'| \leq \tilde \nu_1 \cdot n \quad\tand\quad x \in S^{i}_n, x' \in S^{j}_n \implies |x - x'| \geq  \tilde \nu_2 \cdot n\,. \label{eqn:cluster_condition}
\end{align}
We note that the condition $2 \tilde \nu_1 < \tilde \nu_2$ is necessary for this clustering property to hold.
To see this intuitively, consider three points $x_1, x_2, x_3 \in G_n$.
The $(\mu,\nu_1,\nu_2)$-OGP then requires that $|x_i - x_j| \in \left[ 0, \tilde \nu_1 n \right] \cup \left[ \tilde \nu_2 n , n\right]$ for all pairs $(i,j)$.
Without any condition on $\tilde \nu_1$ and $\tilde \nu_2$, this would allow the following situation: the three points could be arranged ``on a line'' in the sense that $|x_1 - x_2| \leq \tilde \nu_1 n$ and $|x_2 - x_3| \leq \tilde \nu_1 n$, but $|x_1 - x_3| \geq \tilde \nu_2 n$.
This means that the points are not clustered.
However, if $2 \tilde \nu_1 < \tilde \nu_2$, then by the triangle inequality $|x_1 - x_2| \leq \tilde \nu_1 n$ and $|x_2 - x_3| \leq \tilde \nu_1 n$ together imply $|x_1 - x_3| \leq \tilde 2 \nu_1 n < \tilde \nu_2 n$.
By the OGP this means that we must in fact have $|x_1 - x_3| \leq \tilde \nu_1 n$, so we get the clustering behaviour described above.

We can now show that for (approximately) local quantum states that optimise $C_n(z)$, the measurement distribution must be concentrated on one of these sets, which we will call the \emph{high-weight set}.
\begin{lemma} \label{lem:unique_cluster}
Suppose the objective $C_n(z)$ has the $(\mu, \nu_1, \nu_2)$-OGP with $2 \tilde \nu_1 < \tilde \nu_2$ and there exists a $(k, \eps)$-local operator $\psi_n = \proj{\psi_n}$ for $k = o(n)$ that $(\mu, 1 - 4\sqrt \eps)$-optimises $C_n(z)$.
Then there exists a (unique) $i$ such that for sufficiently large $n$, $\tr{\Pi_{S^{i}_n} \psi_n} \geq \tr{\Pi_{G_n}\psi_n} - \sqrt \eps$.
\end{lemma}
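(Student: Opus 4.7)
The plan is to use \cref{lem:local_clustering} as the main tool, noting that because the clusters $S^i_n$ are pairwise at Hamming distance at least $\tilde\nu_2 n$ by \cref{eqn:cluster_condition}, and $k = o(n)$, for sufficiently large $n$ any union of some clusters is at Hamming distance greater than $k$ from any union of the remaining clusters. Consequently, the clustering lemma gives the strong statement that for \emph{any} partition $G_n = A \sqcup B$ where $A$ and $B$ are each unions of some of the $S^i_n$, we have $\min\bigl(\tr{\Pi_A\psi_n},\tr{\Pi_B\psi_n}\bigr) \leq \sqrt{\eps}$.

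Writing $p_i \deq \tr{\Pi_{S^i_n}\psi_n}$, the first step is to identify a dominant cluster. Let $i^\ast = \arg\max_i p_i$. I claim $p_{i^\ast} > \sqrt{\eps}$. Suppose, toward contradiction, that $p_i \leq \sqrt{\eps}$ for every $i$. The optimisation hypothesis gives $\sum_i p_i = \tr{\Pi_{G_n}\psi_n} \geq 4\sqrt{\eps}$. Greedily add clusters to a subset $I$ (in any order) and stop as soon as $\sum_{i\in I} p_i > \sqrt{\eps}$; this stopping condition is reached because the total mass exceeds $\sqrt{\eps}$. Since every $p_i \leq \sqrt{\eps}$, the partial sum at the stopping time satisfies $\sqrt{\eps} < \sum_{i\in I}p_i \leq 2\sqrt{\eps}$, and hence $\sum_{i\notin I} p_i \geq 4\sqrt{\eps} - 2\sqrt{\eps} = 2\sqrt{\eps} > \sqrt{\eps}$. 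Both sides of the partition then carry mass exceeding $\sqrt{\eps}$, contradicting the clustering statement established above.

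With $p_{i^\ast} > \sqrt{\eps}$ in hand, the second step is immediate: apply the clustering statement to the partition $A = S^{i^\ast}_n$ and $B = G_n\setminus S^{i^\ast}_n$ (itself a union of clusters). Since $\tr{\Pi_A\psi_n} > \sqrt{\eps}$ rules out the first alternative, we must have $\tr{\Pi_B\psi_n} \leq \sqrt{\eps}$, so $p_{i^\ast} = \tr{\Pi_{G_n}\psi_n} - \tr{\Pi_B\psi_n} \geq \tr{\Pi_{G_n}\psi_n} - \sqrt{\eps}$, which is the desired bound. Uniqueness of $i^\ast$ follows for free: since $\sum_{i\neq i^\ast}p_i \leq \sqrt{\eps}$, no other cluster has weight exceeding $\sqrt{\eps}$, whereas $p_{i^\ast} \geq 3\sqrt{\eps}$.

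The main obstacle is the dichotomy in the first step: clustering by itself tells us that at most one cluster can carry mass above $\sqrt{\eps}$, but one must rule out the degenerate possibility that the distribution is spread over many clusters each of weight below $\sqrt{\eps}$. This is exactly what the greedy balancing argument accomplishes, and it is the reason the hypothesis is phrased as $(\mu, 1 - 4\sqrt{\eps})$-optimisation rather than something weaker — the slack factor of $4$ is what makes the greedy split produce two subsets each strictly above the clustering threshold.
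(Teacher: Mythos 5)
Your proof is correct and follows essentially the same route as the paper's: reduce to the existence of a single cluster with weight above $\sqrt{\eps}$, rule out the all-small case via a union-bound / clustering contradiction, and then conclude using the gap $\tilde\nu_2 n > k$. The only difference is that the paper asserts ``we can find two disjoint sets of indices $\cI, \cI'$ each with weight $>\sqrt\eps$'' without spelling out the construction, whereas you supply the greedy splitting argument explicitly — a welcome bit of extra rigour, not a new idea.
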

\begin{proof}
Because $k = o(n)$, for sufficiently large $n$, we have $k < \tilde \nu_2 n$.
Therefore, by \cref{lem:local_clustering}, it suffices to show that there exists an $S^{i}_n$ for which $\tr{\Pi_{S^{i}_n} \psi_n} > \sqrt{\eps}$.
This is because if such an $S^{i}_n$ exists, we can consider the set $S' = G_n \setminus S^i_n$, which is at least $\tilde \nu_2 n$-far from $S^{i}_n$.
By \cref{lem:local_clustering} and the assumption $\tr{\Pi_{S^{i}_n} \psi_n} > \sqrt{\eps}$, this means that $\tr{\Pi_{S'} \psi_n} \leq \sqrt{\eps}$.
Therefore, $\tr{\Pi_{S^{i}_n} \psi_n} = \tr{G_n \psi_n} - \tr{S' \psi_n} \geq \tr{G_n \psi_n} - \sqrt \eps$.

Now suppose for the sake of contradiction that for all $i$, $\tr{\Pi_{S^{i}_n} \psi_n} \leq \sqrt{\eps}$.
Since $\psi_n$ is assumed to be a $(\mu, 1 - 4\sqrt \eps)$-optimiser, $\sum_i \tr{\Pi_{S^{i}_n} \psi_n} = \tr{G_n \psi_n} \geq 4 \sqrt \eps$.
Therefore, we can find two disjoint sets of indices $\cI$ and $\cI'$ such that $\sum_{i \in \cI} \tr{\Pi_{S^{i}_n} \psi_n} > \sqrt \eps$ and $\sum_{i \in \cI'} \tr{\Pi_{S^{i}_n} \psi_n} > \sqrt \eps$.
However, this contradicts \cref{lem:local_clustering} since $\cup_{i \in \cI} S_n^i$ and $\cup_{i \in \cI'} S_n^i$ are separated by Hamming distance at least $\tilde \nu_2 n > k$.

The claim that $i$ is unique holds because if there were at least two such sets $S^i_n$ and $S^{i'}_n$, then $\tr{\Pi_{G_n} \psi_n} \geq \tr{\Pi_{S^{i}_n} \psi_n} + \tr{\Pi_{S^{i'}_n} \psi_n} \geq 2\,\tr{\Pi_{G_n} \psi_n} - 2 \sqrt \eps$, which is a contradiction since $\tr{\Pi_{G_n} \psi_n} \geq 4 \sqrt \eps$.
\end{proof}

\subsection{Limitations on symmetric optimisers for symmetric COPs}
\label{subsec:symopt}

From  \cref{lem:unique_cluster} we immediately obtain limitations on the performance of \emph{symmetric} local optimisers, i.e.~optimisers that are invariant under the operation $\sigma_X^{\ot n}$, on symmetric COPs $C_n(z)$, i.e.~COPs that satisfy $C_n(z) = C_n(- z)$.

\begin{corollary} \label{lem:symm_limits}
Suppose the objective $C_n(z)$ has the $(\mu, \nu_1, \nu_2)$-OGP with $2 \tilde \nu_1 < \tilde \nu_2$ and satisfies $C_n(z) = C_n(- z)$.
Then, no $(k, \eps)$-local quantum state $\psi_n = \proj{\psi_n}$ for $k = o(n)$ that satisfies $\sigma_X^{\ot n} \psi_n \sigma_X^{\ot n} = \psi_n$ can $(\mu, 1 - 4\sqrt \eps)$-optimise $C_n(z)$.
\end{corollary}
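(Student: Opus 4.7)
The plan is to argue by contradiction, using the cluster structure provided by the OGP together with the $\sigma_X^{\otimes n}$-symmetry of both the COP and the quantum state to force at least two clusters to carry equal (and large) weight, in violation of the uniqueness statement of \cref{lem:unique_cluster}.

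Concretely, suppose for contradiction that some $(k,\eps)$-local pure state $\psi_n = \proj{\psi_n}$ with $k = o(n)$, $\sigma_X^{\otimes n}\psi_n\sigma_X^{\otimes n} = \psi_n$, and $(\mu, 1-4\sqrt{\eps})$-optimises $C_n(z)$. First I would set up the basic consequences of symmetry. For $x \in \bits^n$, let $\bar{x}$ denote the bit-flip of $x$, and for $S \subseteq \bits^n$ let $\bar{S} = \{\bar{x} : x \in S\}$. Since $\sigma_X^{\otimes n}\ket{x} = \ket{\bar{x}}$, the invariance of $\psi_n$ gives $\tr{\Pi_S \psi_n} = \tr{\Pi_{\bar{S}}\psi_n}$ for every $S$. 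Moreover, $(-1)^{\bar{x}} = -(-1)^x$, so $C_n(z) = C_n(-z)$ implies that the ``good set'' $G_n$ is bit-flip invariant: $\bar{G_n} = G_n$.

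Next I would verify that bit-flipping permutes the clusters $\{S^i_n\}$ partitioning $G_n$. If $x, x' \in S^i_n$ then $|x - x'| \leq \tilde\nu_1 n$, and since Hamming distance is preserved by bit-flipping, $|\bar{x} - \bar{x}'| \leq \tilde\nu_1 n$ as well; by the clustering property \cref{eqn:cluster_condition} the points $\bar{x}, \bar{x}'$ lie in the same cluster. Hence there is a well-defined permutation $\pi$ with $\bar{S^i_n} = S^{\pi(i)}_n$. Crucially, $\pi$ has no fixed points: if $\pi(i) = i$ there would exist $x \in S^i_n$ with $\bar{x} \in S^i_n$ also, forcing $|x-\bar{x}| = n \leq \tilde\nu_1 n$, which is impossible since $\tilde\nu_1 = (1-\nu_2)/2 < 1$.

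Finally I would close the contradiction by applying \cref{lem:unique_cluster}. Because $\psi_n$ is a $(\mu, 1-4\sqrt{\eps})$-optimiser we have $\tr{\Pi_{G_n}\psi_n} \geq 4\sqrt{\eps}$, so the lemma supplies a unique index $i^*$ with $\tr{\Pi_{S^{i^*}_n}\psi_n} \geq \tr{\Pi_{G_n}\psi_n} - \sqrt{\eps}$. By the symmetry identity above, $\tr{\Pi_{S^{\pi(i^*)}_n}\psi_n} = \tr{\Pi_{\overline{S^{i^*}_n}}\psi_n} = \tr{\Pi_{S^{i^*}_n}\psi_n}$, so the index $\pi(i^*)$ satisfies the same inequality. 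Since $\pi$ has no fixed points, $\pi(i^*) \neq i^*$, contradicting the uniqueness of $i^*$ in \cref{lem:unique_cluster}.

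The only genuinely delicate point is verifying that bit-flipping acts on the set of clusters as a fixed-point-free permutation; once that is in place, the $\sigma_X^{\otimes n}$-invariance of $\psi_n$ and the uniqueness clause of \cref{lem:unique_cluster} close the argument almost immediately. No new polynomial approximation or concentration estimate is needed beyond what has already been established.
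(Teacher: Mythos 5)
Your proof is correct and follows essentially the same route as the paper's: use the symmetry of $C_n$ to see that the bit-flip of the high-weight cluster is again a cluster (necessarily distinct, since antipodal strings are distance $n$ apart), and use the $\sigma_X^{\ot n}$-invariance of $\psi_n$ to transfer the weight bound to it, contradicting the uniqueness clause of \cref{lem:unique_cluster}. The only cosmetic difference is that you package the bit-flip action as a global fixed-point-free permutation of all clusters, whereas the paper simply tracks the image of the single cluster $S^i_n$ supplied by \cref{lem:unique_cluster}; both arguments are otherwise identical.
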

\begin{proof}
Suppose for the sake contradiction that such a local symmetric $\psi_n$ does $(\mu, 1 - 4\sqrt \eps)$-optimise $C_n(z)$.
Then it follows from \cref{lem:unique_cluster} that there exists a unique $S^i_n$ for which $\tr{\Pi_{S^{i}_n} \psi_n} \geq \tr{\Pi_{G_n}\psi_n} - \sqrt \eps$.
Now consider the set $S = \{x \oplus 1^n \;|\; x \in S_n^i\}$, where $1^n$ is the all-1 string.
By symmetry of $C_n(z)$ we have that for $x \in S$, $C_n((-1)^{x}) = C_n((-1)^{x \oplus 1^n}) \geq \mu$, where the inequality holds because $x \oplus 1^n \in S_n^i$.
Furthermore, it is easy to see that $|x - x'| \leq \tilde \nu_1 \cdot n$ for $x, x' \in S$.
Therefore, there must exist an $S_n^j$ in the partition of $G_n$ such that $S \subset S_n^j$. (In fact, by symmetry it is easy to see that $S = S_n^j$ for some $j$, but we will not need this here.)
Since there exist strings $x \in S_n^i$ and $x' \in S$ with Hamming distance $n$, we must have $i \neq j$.
Furthermore, by symmetry of $\psi_n$ and the fact that $\Pi_S = \sigma_X^{\ot n}\Pi_{S^i_n} \sigma_X^{\ot n}$, 
\begin{align*}
\tr{\Pi_{S^j_n} \psi_n} \geq \tr{\Pi_{S} \psi_n} = \tr{\Pi_{S^i_n} \sigma_X^{\ot n} \psi_n \sigma_X^{\ot n}} = \tr{\Pi_{S^i_n} \psi_n} \geq \tr{\Pi_{G_n}\psi_n} - \sqrt \eps \,.
\end{align*}
This contradicts the uniqueness of the high-weight set in \cref{lem:unique_cluster}.
\end{proof}

\subsection{Example: QAOA on symmetric COPs} \label{sec:symm_qaoa_limitations}

An example of practical relevance is the QAOA for approximately solving local COPs.
A $q$-local COPs and its associated $q$-local Hamiltonian can be written in terms of coefficients $J = (J_{i_1, \dots, i_q}) \in \R^{n^q}$ as\footnote{Note that because here we consider arbitrary coefficients $J$, the normalisation factors $\frac{1}{n^{(q+1)/2}}$ and $\frac{1}{n^{(q-1)/2}}$ can be chosen arbitrarily, too; we use these ones to keep the notation consistent with \cref{sec:sk_model}, but emphasise that here the coefficients need not be chosen from a Gaussian distribution.}
\begin{align*}
C_n^q(z; J) = \frac{1}{n^{(q+1)/2}}\sum_{i_1,\ldots i_q=1}^n J_{i_1,\ldots i_q} z_{i_1}\ldots z_{i_q} \,, \quad H_n^q(J) = \frac{1}{n^{(q-1)/2}} \sum_{i_1,\ldots i_q=1}^n J_{i_1,\ldots i_q} \sigma^Z_{i_1}\ldots \sigma^Z_{i_q} \,.
\end{align*}
The QAOA works by repeatedly applying the unitaries 
\begin{align*}
V_{H_n^q(J)}(\gamma) = e^{-\i \gamma H_n^q(J)}
\quad\tand\quad 
V_X(\beta) = \left( e^{-\i \beta \sigma_x}  \right)^{\ot n} 
\end{align*}
to the initial state $\rho_0 = \proj{+}^{\ot n}$ for some parameters $\gamma, \beta > 0$.
The output state of the QAOA can therefore be written as 
\begin{align}
\rho_{p,n}(J) = \Big( V_X(\beta_p) V_{H_n^q(J)}(\gamma_p) \cdots V_X(\beta_1) V_{H_n^q(J)}(\gamma_1) \Big) \rho_0  \Big( V_{H_n^q(J)}(\gamma_1)^\dagger  V_X(\beta_1)^\dagger  \cdots V_{H_n^q(J)}(\gamma_p)^\dagger  V_X(\beta_p)^\dagger \Big) \,. \label{eqn:qaoa_state}
\end{align}
The parameters $\gamma_i$ and $\beta_i$ may depend arbitrarily on $J$, but we will always assume that they are chosen from some fixed bounded range that is independent of $n$.

If $q$ is even, both the COP $C_n^q(z; J)$ and the QAOA output state $\rho_{p,n}$ are symmetric in the sense of \cref{lem:symm_limits}.
Therefore, if $C_n^q(z; J)$ satisfies the $(\mu, \nu_1, \nu_2)$-OGP (for some fixed value of $J$) and the associated Hamiltonian $H_n^q(J)$ satisfies the norm constraint from \cref{eqn:subset_condition}, we can combine \cref{thm:qaoamain} and \cref{lem:symm_limits} to obtain the following result, which shows that at level $p = o(\log\log n)$, the probability that the QAOA will produce a ``$\mu$-good'' string $x$ decays with $e^{-n^{1/8}}$.
\begin{lemma} \label{lem:symm_qaoa_limit}
Fix $q$ even.
Suppose that the COP $C_n^q(z; J)$ (for some choice of $J$) has the $(\mu, \nu_1, \nu_2)$-OGP with $2 \tilde \nu_1 < \tilde \nu_2$ and the Hamiltonian $H^q_n$ satisfies \cref{eqn:subset_condition}.
Then, no QAOA output state $\rho_{p,n}(J)$ (for any choice of $\gamma_i$ and $\beta_i$ within an arbitrary bounded range independent of $n$) can $(\mu, 1 - 8 e^{-n^{1/8}/\sqrt 8})$-optimise $C_n^q(z; J)$. 
In other words, if one measures the QAOA output state $\rho_{p,n}(J)$ with level $p = o(\log \log n)$ in the computational basis, the probability of receiving a string $x$ that satisfies $C_n^q((-1)^{x}; J) \geq \mu$ is at most $e^{-\Omega(n^{1/8})}$.
\end{lemma}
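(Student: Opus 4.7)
The plan is to realise the QAOA output state $\rho_{p,n}(J)$ as an instance of the dense Hamiltonian evolution studied in \cref{thm:qaoamain}, obtain a local operator approximation, and then derive a contradiction via the symmetric-optimiser obstruction of \cref{lem:symm_limits}. The starting state $\rho_0=\proj{+}^{\otimes n}$ is a pure product state, and each unitary in \cref{eqn:qaoa_state} has the required form $e^{-\iota H^{(i)}}$ for a commuting local Hamiltonian: $\gamma_i H_n^q(J)$ is classical (diagonal in the computational basis) and therefore trivially commuting, while $V_X(\beta_i)=e^{-\iota \beta_i \sum_j \sigma_X^{(j)}}$ is generated by a $1$-local commuting Hamiltonian. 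So the total number of layers in the dense evolution picture is $2p = o(\log\log n)$.

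Next I would verify \cref{eqn:subset_condition} uniformly across all $2p$ Hamiltonians. For the cost Hamiltonian $\gamma_i H_n^q(J)$, the hypothesis supplies the bound with some $\alpha\in[0,1)$ and constant $\tilde C$; the factor $|\gamma_i|$ is absorbed into $\tilde C$ because the $\gamma_i$ lie in a fixed bounded range independent of $n$. For the mixing Hamiltonian $\beta_i \sum_j \sigma_X^{(j)}$, any subset $S$ yields $\|H_S\|\leq |\beta_i|\,|S|\leq |\beta_i|\,n^{1/2}|S|^{1/2}$, so it satisfies \cref{eqn:subset_condition} with exponent $1/2$. Taking the larger of the two exponents (still strictly less than $1$) and the maximum of the constants gives a single $(\alpha,\tilde C)$ valid for the entire evolution. \cref{thm:qaoamain} then yields that $\rho_{p,n}(J)$ is $(k_p,\eps_p)$-local with $k_p\leq c_1^{2p}\,n^{1-(1-\alpha)^{2p}/4} = o(n)$ and $\eps_p\leq 4\,e^{-n^{1/8}/\sqrt 2}$, for $n$ sufficiently large.

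I would then verify the two symmetries needed for \cref{lem:symm_limits}. Because $q$ is even, $\sigma_X^{\otimes n}\,H_n^q(J)\,\sigma_X^{\otimes n} = (-1)^q H_n^q(J) = H_n^q(J)$, so $\sigma_X^{\otimes n}$ commutes with each $V_{H_n^q(J)}(\gamma_i)$; it trivially commutes with each $V_X(\beta_i)$ and fixes $\rho_0$, hence $\sigma_X^{\otimes n}\rho_{p,n}(J)\sigma_X^{\otimes n}=\rho_{p,n}(J)$. Similarly, $C_n^q(-z;J)=(-1)^q C_n^q(z;J)=C_n^q(z;J)$, so the COP is symmetric. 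With $\eps_p\leq 4\,e^{-n^{1/8}/\sqrt 2}$, a direct computation gives $4\sqrt{\eps_p}\leq 8\,e^{-n^{1/8}/\sqrt 8}$, and therefore $1-4\sqrt{\eps_p}\geq 1-8\,e^{-n^{1/8}/\sqrt 8}$. Any $(\mu,1-8\,e^{-n^{1/8}/\sqrt 8})$-optimiser is in particular a $(\mu,1-4\sqrt{\eps_p})$-optimiser, which \cref{lem:symm_limits} rules out for a symmetric $(k_p,\eps_p)$-local state. The final sentence of the lemma follows by unpacking the definition of $(\mu,\delta)$-optimising and observing $8\,e^{-n^{1/8}/\sqrt 8}=e^{-\Omega(n^{1/8})}$.

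The only mildly delicate point is the bookkeeping in the second paragraph: the two families of generators appearing in the QAOA ($H_n^q(J)$ and the transverse-field mixer) a priori have different subset-norm exponents, and \cref{thm:qaoamain} as stated uses a single $\alpha$. I expect this to be resolved cleanly by just taking the uniform worst-case exponent, but it is worth writing out carefully since this is the place where the lemma's assumption on $H^q_n$ interfaces with the abstract framework of \cref{sec:qaoa_conc}; everything else is a direct application of tools established earlier in the paper.
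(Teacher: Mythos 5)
Your proposal is correct and follows essentially the same route as the paper: realise the QAOA output as a dense Hamiltonian evolution, check the subset norm condition \cref{eqn:subset_condition} uniformly across the cost and mixer layers, apply \cref{thm:qaoamain} to obtain a $(k,\eps)$-local approximation with $k=o(n)$ and $\eps\leq 4e^{-n^{1/8}/\sqrt 2}$, and invoke \cref{lem:symm_limits} using the $\sigma_X^{\otimes n}$-symmetry of both the COP and the output state for even $q$. The only cosmetic difference is that you spell out the subset-norm bound for the mixer explicitly (and note that any exponent in $[0,1)$ works since $|S|\leq n^\alpha|S|^{1-\alpha}$), whereas the paper simply absorbs $\gamma_i$ and $\beta_i$ into $\tilde C$ and reuses the same $\alpha$.
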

\begin{proof}
It is easy to verify that the QAOA for this $H_n^q(J)$ implements an instance of dense Hamiltonian evolution with $H^{(1)}=\gamma_1 H_n^q(J), H^{(2)}=\beta_1 \sum_i \sigma^{(i)}_x, H^{(3)}=\gamma_2 H_n^q(J)$, etc. 
Since $H_n^q(J)$ satisfies the condition in \cref{eqn:subset_condition} for some $\tilde C = O(1)$ and $\alpha < 1$, each $H^{(i)}$ also satisfies \cref{eqn:subset_condition} for $\tilde C \rightarrow \max (\tilde C \gamma_i, \beta_i) = O(1)$ and the same $\alpha$.
Therefore, we can apply \cref{thm:qaoamain} and find that for $p = o(\log \log n)$, $\rho_{p,n}(J)$ is a $(k, \eps)$-local operator for $k = o(n)$ and $\eps \leq 4 e^{-n^{1/8}/\sqrt 2}$.
Furthermore, since $C_n^q(z; J)$ is symmetric under $z \mapsto - z$ for even $q$ and any $J$, $H_n^q(J)$ commutes with $\sigma_x^{\ot n}$, so the QAOA output state $\rho_{p,n}(J)$ satisfies $\sigma_x^{\ot n} \rho_{p,n}(J) \sigma_x^{\ot n} = \rho_{p,n}(J)$. 
Therefore, the lemma follows from \cref{lem:symm_limits}.
\end{proof}

An example of a COP that satisfies the requirements of \cref{lem:symm_qaoa_limit} is the pure $q$-spin model from \cref{sec:sk_model}, for which we showed in \cref{lem:pure_spin} that \cref{eqn:subset_condition} is satisfied except with probability $e^{-n}$.
It is known that for fixed $q$, the limit $E_q(J) \deq \lim_{n\to\infty} \max_{z \in \pmset^n} C_n^q(z;J)$ exists almost surely~\cite{guerra2002thermodynamic}.
Furthermore, for even $q \geq 4$, with probability at least $1 - O(e^{-n})$ over the choice of $J$ (with i.i.d.~standard Gaussian entries), the objective function $C_n^q(z ; J)$ satisfies the $(\mu, \nu_1, \nu_2)$-OGP for constants $0 < \nu_1 < \nu_2 < 1$ and $0 < \mu < E_q(J)$~\cite{CGPR19,gamarnik2020low,gamarnik2021overlap}.
We note that~\cite{CGPR19} do not explicitly show that $2 \tilde \nu_1 < \tilde \nu_2$, although it appears to be implicit in the proof that $\tilde \nu_1$ can be made an arbitrarily small constant by choosing $\mu$ arbitrarily close to $E_q(J)$.
We leave a detailed proof of this statement for future work.
Therefore, we get the following implication of \cref{lem:symm_qaoa_limit}, giving the first provable (modulo the conjectured strengthened OGP) limitations on the QAOA on dense instances at super-constant level and improving upon the result of~\cite{BGMZ22}.
\begin{corollary} \label{cor:qaoa_spin_model}
Assuming the strengthened OGP stated above, with probability $1 - O(e^{-n})$ over the choice of $J$ (with i.i.d.~standard Gaussian entries), the value of the solution to the COP $C_n^q(z;J)$ produced by the QAOA with level $p = o(\log\log n)$ is bounded away from the optimal value by at least a constant except with probability $e^{-\Omega(n^{1/8})}$.
\end{corollary}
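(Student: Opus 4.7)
The plan is to combine three ingredients already established (or conjectured) in the excerpt: (i) the concentration/locality result for QAOA states given by \cref{lem:symm_qaoa_limit}, (ii) the high-probability norm constraint \cref{lem:pure_spin} for $H_n^q(J)$, and (iii) the strengthened OGP for $C_n^q(z;J)$ at a threshold $\mu$ that is bounded away from $E_q(J)$ by a positive constant. The only work left is to run a union bound and translate the conclusion of \cref{lem:symm_qaoa_limit} into the asymptotic form stated in the corollary.

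First I would fix $q \geq 4$ even and choose a constant $\eta > 0$ such that, by the strengthened OGP, for all sufficiently large $n$ the event
\[
\mathcal{E}_{\text{OGP}}(J) \;=\; \bigl\{ C_n^q(\cdot;J) \text{ has the } (\mu,\nu_1,\nu_2)\text{-OGP with } 2\tilde\nu_1<\tilde\nu_2 \text{ and } \mu = E_q(J) - \eta \bigr\}
\]
holds with probability $1 - O(e^{-n})$ over $J \sim \mathcal{N}(0,1)^{\otimes n^q}$. This is exactly the conjectured statement cited from \cite{CGPR19,gamarnik2020low,gamarnik2021overlap}, strengthened so that $\tilde\nu_1$ can be taken arbitrarily small. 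Simultaneously, \cref{lem:pure_spin} yields
\[
\Pr_{J}\!\bigl[\,H_n^q(J) \text{ satisfies \cref{eqn:subset_condition} with } \alpha=\tfrac12,\; \tilde C=\sqrt6\,\bigr] \;\geq\; 1 - e^{-n}.
\]
A union bound gives that both events hold simultaneously with probability $1-O(e^{-n})$.

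Next, I would condition on this good event for $J$ and apply \cref{lem:symm_qaoa_limit}: for any choice of QAOA parameters $(\beta_i,\gamma_i)$ in a bounded range and any level $p = o(\log\log n)$, the output state $\rho_{p,n}(J)$ cannot $(\mu,\,1-8e^{-n^{1/8}/\sqrt 8})$-optimise $C_n^q(z;J)$. Unpacking the definition, this is precisely the statement that
\[
\Pr_{z \sim \rho_{p,n}(J)}\!\bigl[\,C_n^q(z;J) \geq \mu\,\bigr] \;<\; 8\, e^{-n^{1/8}/\sqrt 8} \;=\; e^{-\Omega(n^{1/8})},
\]
and since $\mu = E_q(J) - \eta$, we have
\[
\Pr_{z \sim \rho_{p,n}(J)}\!\bigl[\,C_n^q(z;J) \geq \max_{z'} C_n^q(z';J) - \eta\,\bigr] \;\leq\; e^{-\Omega(n^{1/8})}.
\]
(Here I use that $\max_z C_n^q(z;J) \to E_q(J)$ almost surely by \cite{guerra2002thermodynamic}, so for large $n$ the optimum is within $o(1)$ of $E_q(J)$; the constant gap $\eta$ dominates this lower-order correction.) This is exactly the statement of the corollary with the constant gap equal to (slightly less than) $\eta$.

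The main conceptual obstacle is invoking the strengthened OGP with $2\tilde\nu_1<\tilde\nu_2$, which the authors flag as conjectured and defer: everything else is a clean composition of previously proved results. The remaining technical care is just to make sure that the $O(e^{-n})$ failure probabilities for the two $J$-events and the $e^{-\Omega(n^{1/8})}$ failure probability from the QAOA analysis are not conflated, and that the bounded-range assumption on $(\beta_i,\gamma_i)$ translates into a bounded $\tilde C$ in \cref{eqn:subset_condition} for each $H^{(i)}$ in the dense evolution, as already noted inside the proof of \cref{lem:symm_qaoa_limit}.
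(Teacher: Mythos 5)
Your proposal is correct and matches the paper's (implicit, unwritten) argument: the corollary is obtained by a union bound over the two $J$-events — the norm constraint from \cref{lem:pure_spin} and the strengthened OGP at $\mu = E_q(J) - \eta$ — and then invoking \cref{lem:symm_qaoa_limit} on the good event, unpacking the $(\mu, 1 - 8e^{-n^{1/8}/\sqrt 8})$-optimisation failure into the stated $e^{-\Omega(n^{1/8})}$ tail bound. Your added remark about relating $E_q(J)$ to the finite-$n$ optimum via \cite{guerra2002thermodynamic} is a detail the paper leaves implicit, and it is handled correctly.
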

Using \cref{lem:mixed_spin} instead of \cref{lem:pure_spin}, this result can easily be extended to mixed spin models $C^{q, \setft{mixed}}_n(z; J) = \sum_{j = 1}^q c_j \, C^j_n(z; J)$ that only contain contributions from even pure spin models, i.e.~$c_j$ is non-zero only for even $j \geq 4$.

\bibliographystyle{alphaURL}

\bibliography{references}

\appendix

\section{Locality spread under Hamiltonian evolution}
\begin{lemma} \label{lem:gen_locality_spread}
Let $O$ be an operator that acts non-trivially on at most $k$ qubits and $H$ an $\ell$-local Hamiltonian.
Then $e^{-\i H} O e^{\i H}$ is a $(k', \eps')$-local operator for 
\begin{align*}
k' = 2 \ell d + k \,, \quad \eps' = 3 e^{-(d - e \norm{H})} \norm{O} \,,
\end{align*}
for any integer $d \geq e \norm{H}$.
\end{lemma}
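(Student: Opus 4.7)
My plan is to approximate each exponential $e^{\pm \i H}$ by its degree-$d$ truncated Taylor series. Concretely, define $P_d^\pm(H) \deq \sum_{j=0}^{d} \frac{(\mp \i)^j}{j!} H^j$ as approximations to $e^{\mp \i H}$, and take $P_d^-(H)\, O\, P_d^+(H)$ as the desired local approximation to $e^{-\i H} O e^{\i H}$.

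For the locality bound, I would argue as follows. Since $H = \sum_\alpha H_\alpha$ with each $H_\alpha$ acting on at most $\ell$ qubits, each monomial $H^j$ (for $j \leq d$) expands as a sum of products of $j$ local terms and so acts on at most $j\ell \leq d\ell$ qubits; hence $P_d^\pm(H)$ is $(d\ell)$-local. Decomposing $P_d^-(H) = \sum_\mu A_\mu$ and $P_d^+(H) = \sum_\nu B_\nu$ into these local pieces, the expansion $P_d^-(H)\, O\, P_d^+(H) = \sum_{\mu,\nu} A_\mu\, O\, B_\nu$ shows that each summand acts on at most $d\ell + k + d\ell = 2\ell d + k$ qubits, matching the claimed $k'$.

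The main technical step is controlling the approximation error. The integral form of Taylor's remainder gives $\norm{e^{\mp \i H} - P_d^\pm(H)} \leq \norm{H}^{d+1}/(d+1)!$, and the delicate part is rewriting this prefactor in the clean exponential form $e^{-(d - e\norm{H})}$. I would do this by applying Stirling's bound $(d+1)! \geq ((d+1)/e)^{d+1}$ to obtain $\norm{H}^{d+1}/(d+1)! \leq (e\norm{H}/(d+1))^{d+1}$, and then using the elementary inequality $\log y \leq y - 1$ at $y = e\norm{H}/(d+1) \leq 1$ (which holds since $d \geq e\norm{H}$) to conclude that $(e\norm{H}/(d+1))^{d+1} \leq e^{e\norm{H} - (d+1)} \leq e^{-(d - e\norm{H})}$. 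A corollary is the norm bound $\norm{P_d^\pm} \leq 1 + e^{-(d - e\norm{H})} \leq 2$. Splitting
\[
e^{-\i H} O e^{\i H} - P_d^- O P_d^+ = (e^{-\i H} - P_d^-)\, O\, e^{\i H} + P_d^-\, O\, (e^{\i H} - P_d^+)
\]
and applying the triangle inequality together with $\norm{e^{\i H}} = 1$ and $\norm{P_d^-} \leq 2$ then yields the claimed total error $3 e^{-(d - e\norm{H})} \norm{O}$. The only real subtlety is this exponent bookkeeping; the rest is routine.
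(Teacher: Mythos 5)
Your proof is correct and follows essentially the same route as the paper: truncate the Taylor series of $e^{\mp\i H}$ at degree $d$, sandwich $O$ between the two truncations, argue $(2\ell d + k)$-locality by expanding $H^j$ into products of local terms, and bound the error via the triangle inequality with $\norm{e^{\pm\i H}}=1$ and $\norm{P_d}\leq 2$. The only difference is that the paper invokes ``Taylor's theorem'' for the bound $\norm{e^{-\i H}-Q}\leq e^{-(d-e\norm{H})}$ without spelling out the Stirling/$\log y\leq y-1$ bookkeeping that you carry out explicitly (and your $\pm/\mp$ superscript convention inverts which truncation approximates which exponential, so you would actually want $P_d^+\,O\,P_d^-$ in your notation, but this is purely cosmetic and does not affect the argument).
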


\begin{proof}
Let 
\begin{align*}
Q = \1 + \sum_{m = 1}^d \frac{(-\i H)^m}{m!}
\end{align*}
be the Taylor series of $e^{-\i H}$ truncated at degree $d$.
By Taylor's theorem, $\norm{e^{-\i H} - Q} \leq e^{-(d - e \norm{H})}$, so it follows from the triangle inequality and submultiplicativity of the norm that for $d \geq e \norm{H}$,
\begin{align*}
\norm{e^{-\i H} O e^{\i  H} - Q O Q^\dagger } \leq 3 e^{-(d - e \norm{H})} \norm{O} \,.
\end{align*}
Because $H^d$ can be expanded as a sum of terms that each contain at most $d$ of the $\ell$-local terms $H_i$, we see that $Q$ is $(\ell d)$-local.
Therefore, $Q O Q^\dagger$ is $(2 \ell d + k)$-local, concluding the proof.
\end{proof}

\begin{lemma} \label{lem:comm_locality_spread}
Let $R = \sum_{i} R_i$ be a $k$-local operator and $H = \sum H_i$ an $\ell$-local \emph{commuting} Hamiltonian.
Suppose that there exist constants $\alpha \in [0, 1)$ and $\tilde C > 0$ such that for every subset $S \subseteq [n]$ of qubits, the subset Hamiltonian $H_S$ (see \cref{def:subset_op}) satisfies $\norm{H_S} \leq \tilde C n^\alpha |S|^{1 - \alpha}$.
Then, $e^{-\i H} R e^{\i H}$ is a $(k', \eps')$-local operator for
\begin{align}
k' = 2 \ell \lceil \mu \tilde C e n^\alpha \kappa^{1 - \alpha} \rceil + k \,, \quad 
\eps' = 3 e^{-(\mu - 1) \tilde C e n^\alpha \kappa^{1 - \alpha}} \tln(R) \,, \label{eqn:ham_locality_bound}
\end{align}
for any $\mu > 1$ and $\kappa \geq k$.
Furthermore, 
\begin{align}
\tln_{\eps'}\left( e^{-\i H} R e^{\i H} \right) \leq (2 n)^{k'} \left( \tln(R) + \eps' \right) \,. \label{eqn:norm_bound}
\end{align}
\end{lemma}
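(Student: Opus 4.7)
The plan is to exploit the commutativity of $H$ to reduce the global Hamiltonian evolution to a local one on each piece of $R$ separately, and then apply \cref{lem:gen_locality_spread} term by term. Write $R = \sum_i R_i$, where each $R_i$ acts nontrivially only on some subset $S_i \subseteq [n]$ with $|S_i| \le k$. Let $H_{S_i}$ denote the subset Hamiltonian of \cref{def:subset_op}, and let $H'_i \deq H - H_{S_i}$, which by construction contains only those local terms of $H$ whose supports are disjoint from $S_i$. Two crucial commutations then hold: $[H'_i, H_{S_i}] = 0$ because $H$ is a commuting Hamiltonian, and $[H'_i, R_i] = 0$ because $H'_i$ acts trivially on $S_i$. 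Consequently $e^{-\i H} = e^{-\i H_{S_i}} e^{-\i H'_i}$ as operator identities and the outer factors cancel around $R_i$, giving
\begin{equation*}
e^{-\i H} R_i \, e^{\i H} \;=\; e^{-\i H_{S_i}} R_i \, e^{\i H_{S_i}}.
\end{equation*}

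Next I would apply \cref{lem:gen_locality_spread} to each term on the right-hand side, taking the $\ell$-local Hamiltonian there to be $H_{S_i}$. The subset norm hypothesis together with $|S_i| \le k \le \kappa$ gives $\|H_{S_i}\| \le \tilde C n^\alpha \kappa^{1-\alpha}$. Choosing the Taylor truncation degree $d = \lceil \mu \, \tilde C e\, n^\alpha \kappa^{1-\alpha} \rceil$ ensures $d \ge e\|H_{S_i}\|$ with $d - e\|H_{S_i}\| \ge (\mu-1)\tilde C e\, n^\alpha \kappa^{1-\alpha}$, so \cref{lem:gen_locality_spread} produces a $(2\ell d + |S_i|)$-local approximation $\tilde R_i$ to $e^{-\i H_{S_i}} R_i e^{\i H_{S_i}}$ with error at most $3 e^{-(\mu-1)\tilde C e\, n^\alpha \kappa^{1-\alpha}} \|R_i\|$. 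Setting $\tilde R \deq \sum_i \tilde R_i$, the triangle inequality then yields $\|e^{-\i H} R e^{\i H} - \tilde R\| \le 3 e^{-(\mu-1)\tilde C e\, n^\alpha \kappa^{1-\alpha}} \sum_i \|R_i\| = \eps' \cdot \tln(R)/\tln(R) = \eps'$, while $\tilde R$ is $k'$-local since each $|S_i|\le k$ and $2\ell d + k = k'$. This establishes \cref{eqn:ham_locality_bound}.

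For the total local norm bound \cref{eqn:norm_bound}, I would pass to the canonical Pauli decomposition $\tilde R = \sum_{T:\,|T|\le k'} O_{(T)}$ from \cref{eqn:unique_decomp} and apply \cref{lem:opbound} to get $\|O_{(T)}\| \le 2^{|T|}\|\tilde R\|$. Summing over subsets $T$ of size at most $k'$ yields $\tln(\tilde R) \le \bigl(\sum_{j=0}^{k'}\binom{n}{j}2^j\bigr)\|\tilde R\| \le (2n)^{k'}\|\tilde R\|$. Since $\tilde R$ is $\eps'$-close to $e^{-\i H} R e^{\i H}$, the triangle inequality gives $\|\tilde R\| \le \|R\| + \eps' \le \tln(R) + \eps'$, which is exactly the stated bound on $\tln_{\eps'}(e^{-\i H} R e^{\i H})$.

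The main obstacle is the reduction $e^{-\i H} R_i e^{\i H} = e^{-\i H_{S_i}} R_i e^{\i H_{S_i}}$: this is where the commuting assumption on $H$ is essential and where the strength of the lemma really comes from. Without it, one would be forced to use $\|H\|$ in \cref{lem:gen_locality_spread}, which for a dense Hamiltonian can be of order $n$, making the Taylor truncation degree prohibitively large and destroying the controllable locality spread; replacing $\|H\|$ with the much smaller $\|H_{S_i}\| = O(n^\alpha \kappa^{1-\alpha})$ is precisely what makes the inductive argument of \cref{thm:qaoamain} close. The bookkeeping in the $\tln$ bound is routine once the right decomposition is in place.
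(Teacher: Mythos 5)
Your proof is correct and follows essentially the same route as the paper: reduce the global evolution to $e^{-\i H_{S_i}} R_i e^{\i H_{S_i}}$ via commutativity, apply \cref{lem:gen_locality_spread} termwise with the subset-norm bound, and combine via the triangle inequality. The only (minor) difference is in the $\tln$ bound, where you apply \cref{lem:opbound} once to $\tilde R = \sum_i \tilde R^i$ and use $\|R\|\leq\tln(R)$, whereas the paper applies it to each $\tilde R^i$ and sums; both yield the identical final estimate.
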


\begin{proof}
For each $i \in [t]$, we define $S_i$ as the subset of qubits on which $R_i$ acts non-trivially.
Since $R$ is $k$-local, $|S_i| \leq k$ for all $i$.
Because $H$ is commuting, 
\begin{align*}
e^{- \i H} R_i e^{\i H} = e^{- \i H_{S_i}} R_i e^{\i H_{S_i}}\,.
\end{align*}
Recall that $\norm{H_{S_i}} \leq \tilde C n^\alpha k^{1 - \alpha} \leq \tilde C n^\alpha \kappa^{1 - \alpha}$.
We can therefore apply \cref{lem:gen_locality_spread} with $d = \lceil\mu \tilde C e n^\alpha \kappa^{1 - \alpha}\rceil > e \norm{H_{S_i}}$ to find that $e^{- \i H_{S_i}} R_i e^{\i H_{S_i}}$ is a $(k', \eps'_i)$-local operator with 
\begin{align*}
k' = 2 \ell \lceil \mu \tilde C e n^\alpha \kappa^{1 - \alpha} \rceil + k \,, \quad 
\eps'_i = 3 e^{-(\mu - 1) \tilde C e n^\alpha \kappa^{1 - \alpha}} \norm{R_i} \,.
\end{align*}
We define $\tilde R^i$ as the local operator approximations to $e^{- \i H_{S_i}} R_i e^{\i H_{S_i}}$ (as given by \cref{lem:gen_locality_spread}) and $\tilde R = \sum_i \tilde R^i$.
(We use superscripts for $\tilde R^i$ because each $\tilde R^i$ is itself a $k'$-local operator, not an operator that acts non-trivially on only $k'$ qubits, i.e.~$\tilde R = \sum_i \tilde R^i$ is not our usual local decomposition.)
Since the sum of $k'$-local operators is still $k'$-local, we see that $\tilde R$ is $k'$-local operator and, by the triangle inequality, approximates $e^{-\i H} R e^{\i H} = \sum_i e^{-\i H_{S_i}} R_i e^{\i H_{S_i}}$ to within error 
\begin{align*}
\eps' \leq \sum_{i} \eps'_i = 3 e^{-(\mu - 1) \tilde C e n^\alpha \kappa^{1 - \alpha}} \sum_i \norm{R_i} = 3 e^{-(\mu - 1) \tilde C e n^\alpha \kappa^{1 - \alpha}} \tln(R) \,.
\end{align*}
This completes the proof of \cref{eqn:ham_locality_bound}.

To show \cref{eqn:norm_bound}, we first bound $\tln(\tilde R^i)$ in terms of $\norm{\tilde R^i}$.
Since $\tilde R^i$ is $k'$-local, we can expand 
\begin{align*}
\tilde R^i = \sum_{T \subseteq [n] \sth |T| \leq k'} \tilde R^i_{(T)}
\end{align*} 
using the unique decomposition from \cref{eqn:unique_decomp}.
The number of terms $\tilde R^i_{(T)}$ is at most $\sum_{j = 0}^{k'} {n \choose j} \leq n^{k'}$.
Therefore, 
\begin{align*}
\tln(\tilde R^i) 
\leq \sum_T \norm{\tilde R^i_{(T)}}
\leq n^{k'} \max_T \norm{\tilde R^i_{(T)}}
\leq n^{k'} \cdot 2^{k'}\cdot \norm{\tilde R^i}\,,
\end{align*}
where the last inequality follows from \cref{lem:opbound}.
Due to unitary invariance of the norm:
\begin{align*}
\norm{\tilde R^i} 
&\leq \norm{e^{- \i H_{S_i}} R_i e^{\i H_{S_i}}} + \eps'_i = \norm{R_i} + \eps'_i 
\end{align*}
Finally, we combine the above bounds to obtain \cref{eqn:norm_bound}: 
\begin{align*}
\tln_{\eps'} \left( e^{-\i H} R e^{\i H} \right) 
\leq \sum_{i} \tln(\tilde R^i) 
\leq (2 n)^{k'} \left( \tln(R) + \eps' \right) \,. 
\end{align*}
\end{proof}

\end{document}